\documentclass[11pt]{article}

\usepackage{amsmath, amsthm, amssymb}
\usepackage[margin=1in]{geometry}
\usepackage{framed}
\usepackage{pgfplots}
\pgfplotsset{compat=1.18}

\usepackage{booktabs} 
\usepackage[ruled]{algorithm2e} 
\usepackage{dsfont}
\usepackage{cleveref}
\usepackage{mathtools}
\usepackage{amsthm}
\usepackage{amsmath}
\usepackage{acronym}
\usepackage{nicefrac}
\usepackage{textcomp}
\usepackage{natbib}
\usepackage{makecell}
\usepackage{url}

\usepackage{tikz}

\renewcommand{\Pr}{\mathbb{P}}

\usepackage[]{color-edits}

\addauthor{MM}{teal}

\addauthor{MF}{magenta}

\addauthor{PD}{blue}

\newcommand{\alloc}[0]{a}
\newcommand{\ALG}[0]{A}

\newcommand{\be}{\mathbb{E}}
\newcommand{\bn}{\mathbb{N}}
\newcommand{\br}{\mathbb{R}}

\DeclarePairedDelimiter{\indbrackets}{[}{]}
\newcommand{\ind}[1]{\mathds{1}_{\indbrackets{#1}}}

\newcommand{\ra}{R_{\mathrm{EA}}^\alpha}
\newcommand{\rp}{R_{\mathrm{EP}}^\alpha}
\newcommand{\uton}{U^\star}
\newcommand{\raon}{R_{\mathrm{EA}}^{\alpha,\star}}
\newcommand{\rpon}{R_{\mathrm{EP}}^{\alpha,\star}}
\newcommand{\optut}{\mathrm{OPT}_U}
\newcommand{\optra}{\mathrm{OPT}^\alpha_{\mathrm{EA}}}
\newcommand{\optrp}{\mathrm{OPT}^\alpha_{\mathrm{EP}}}

\newcommand{\comput}{\mathrm{CR}_{U}}
\newcommand{\compra}{\mathrm{CR}_{\mathrm{EA}}}
\newcommand{\comprp}{\mathrm{CR}_{\mathrm{EP}}}

\newcommand{\KL}{\mathrm{KL}}
\newcommand{\Psiq}{\Psi_q}

\DeclareMathOperator*{\argmax}{arg\,max}

\newtheorem{theorem}{Theorem}[section]
\newtheorem{proposition}[theorem]{Proposition}
\newtheorem{lemma}[theorem]{Lemma}

\theoremstyle{definition}
\newtheorem{definition}[theorem]{Definition}
\newtheorem{remark}[theorem]{Remark}

\definecolor{deepteal}{RGB}{31,119,180} 
\definecolor{coral}{RGB}{214,39,40}     
\definecolor{gridgray}{RGB}{180,180,180}

\title{Fair Prophets}
\author{
Paul D\"{u}tting\thanks{Google Research, Z\"{u}rich, Switzerland. Email: \texttt{duetting@google.com}.}
\and
Michal Feldman\thanks{Tel Aviv University, Tel Aviv, Israel. Email: \texttt{mfeldman@tauex.tau.ac.il}.}
\and
Mathieu Molina\thanks{Tel Aviv University, Tel Aviv, Israel. Email: \texttt{mathieu.molina.research@gmail.com}.}
}
\date{}

\begin{document}

\maketitle

\begin{abstract}

We initiate the study of $\alpha$-fair prophet inequalities. This interpolates between utilitarian welfare $(\alpha=0)$, Nash welfare $(\alpha=1)$, and Rawlsian max-min fairness $(\alpha\to\infty)$.
Given the non-linearity of the objective, it matters when the expectation is applied. For instance, for the Rawlsian objective, it matters whether we aim to maximize $\min \be[u_i]$ or $\be[\min u_i]$. We refer to the former as the ex-ante model, and the latter as the ex-post model.

For ex-ante fairness, full distributional knowledge yields a tight competitive ratio of exactly $1/2$ for every $\alpha\ge 0$. Under sample access, $O(n\log n)$ samples per distribution suffice for a constant competitive ratio when $\alpha\in(0,1]$. In contrast, for every $\alpha>1$, no finite number of samples improves upon the trivial $1/n$ guarantee. Thus, unlike in the utilitarian setting, full-information and sample-access prophet inequalities become fundamentally separated.

For ex-post fairness, 
under full information, we obtain a uniform constant ratio for all $\alpha\in(0,1)$, while for every $\alpha>1$ the competitive ratio collapses to $1/n$. In the sample-access model, one sample per distribution suffices for each fixed $\alpha<1$, but no sample budget depending only on $n$ yields a uniform constant guarantee as $\alpha\to 1$. 
Beyond these phase transitions for $\alpha$-fairness, our results open the door to a broader theory of prophet inequalities for non-linear welfare objectives.

\end{abstract}

\section{Introduction}

A recurring challenge across market design, finance, and public policy is the sequential allocation of scarce resources under uncertainty. In these environments, decisions are typically immediate and irrevocable: committing resources to a current request inherently reduces the capacity to serve future opportunities. 
To analyze such problems, the prophet inequality framework has emerged as a cornerstone paradigm \citep{KrengelSucheston1977Semiamarts, SamuelCahn1984}. Unlike worst-case competitive analysis, which can be overly pessimistic, prophet inequalities assume that inputs are drawn from known (or partially known) distributions. 
Formally, there are $n$ rewards $X_1, \ldots, X_n$, drawn independently from probability distributions $F_1, \ldots, F_n$, which are revealed sequentially in an online fashion.
The goal is to choose a fractional allocation $\alloc_1, \ldots, \alloc_n \in [0,1]$ with $\sum_{i \in [n]} \alloc_i \leq 1$ so as to maximize a given performance objective.\footnote{Traditionally, the prophet inequality problem is framed in terms of stopping time, which here corresponds to integral online allocation. Nevertheless, for the utilitarian version, due to the optimality of backward dynamic programming the performance of the online decision maker remains the same when restricting to integral policies (stopping times). The same restriction to integral policies also does not impact the prophet.}

It has been shown that the prophet-inequality paradigm allows for robust approximation guarantees against an all-knowing ``prophet'' who knows the future. 
In particular, the classic result establishes that an online gambler can capture at least $\nicefrac{1}{2}$ of the prophet's expected maximum value when selecting a single item from a sequence, and this bound is known to be tight.
In the decades since, this utilitarian perspective has been vastly extended. Significant results have moved from single-item selection to complex feasibility constraints, including matroids \citep{KleinbergW12}, combinatorial auctions with XOS valuations \citep{DuettingFKL17,Feldman2015Comb}, 
and limited information settings where only a single sample per distribution is available \citep{rubinstein2020optimal,dutting2024online}.

While the utilitarian objective of maximizing aggregate value is a natural objective in settings such as revenue optimization or total welfare, it often fails to reflect the normative and operational constraints of real-world systems. In many environments, decision makers care not only about improving average performance, but also about ensuring that no participant, group, or time period receives an unacceptably poor, or unfair, outcome. 
Such considerations arise in a wide range of applications. Governments allocating emergency resources, such as disaster relief, humanitarian aid, or critical public services, must ensure that no population is left below a basic viability threshold, even when concentrating resources elsewhere would increase total benefit. Similarly, operators of large-scale digital or physical infrastructure often face asymmetric risks: a single severely under-served user, region, or request can cause disproportionate harm, whether in terms of safety, reliability, or public trust. 

Motivated by these considerations, in this paper we study prophet inequalities for the standard one-parameter family of $\alpha$-fair welfare functions. Given a marginal utility vector $u=(u_1,\ldots,u_n)\in\br_+^n$, define for $\alpha \geq 0$ the $\alpha$-fair welfare as
\begin{equation*}
W_\alpha(u)
\coloneqq
\begin{cases}
\left(\frac1n\sum_{i=1}^n u_i^{1-\alpha}\right)^{\frac{1}{1-\alpha}},
& \alpha\neq 1,\\
\left(\prod_{i=1}^n u_i\right)^{1/n},
& \alpha=1 \\
\min_{ i \in [n]} u_i,& \alpha \to \infty
\end{cases}
\end{equation*}
For $\alpha>1$, we use the convention that if some coordinate $u_i=0$, then $W_\alpha(u)=0$.
The parameter $\alpha$ interpolates between standard welfare objectives: $\alpha=0$ recovers utilitarian average welfare, $\alpha=1$ gives Nash social welfare, and the limit $\alpha\to\infty$ gives the Rawlsian objective $\min_i u_i$. Thus, $\alpha$-fairness provides a continuous way to study how prophet inequalities change as the objective shifts from efficiency toward fairness.

\paragraph{Ex-ante and ex-post fairness.}
Since, unlike the standard utilitarian objective $(\alpha = 0)$, the objective we aim to maximize is non-linear, it matters at which point the expectation is applied. For example, for the Rawlsian objective $(\alpha \rightarrow \infty)$ it makes a difference whether we aim to maximize $\min \be[u_i]$ or $\be[\min u_i]$. 

In our first model, the \emph{ex-ante model}, we require that the outcome is equitable in expectation.
That is, the goal is to choose a fractional allocation $\alloc_1,\dots,\alloc_n$ in an online manner so as to maximize $W_\alpha(\be[a \cdot X])$ where $\be[a \cdot X]=\left(\be[\alloc_1X_1],\ldots,\be[\alloc_nX_n]\right)$ is the expected-utility vector induced by the allocation. This objective asks whether each agent, group, or time period receives sufficient utility on average across repeated executions of the allocation process.

In our second model, the \emph{ex-post model}, we require that the outcome is equitable for each realization. Formally, 
the goal is to choose
a fractional allocation $\alloc_1, \ldots, \alloc_n$ 
in an online manner so as to maximize $\be[W_\alpha(a \cdot X)]$, where $a \cdot X=\left(\alloc_1X_1,\ldots,\alloc_nX_n\right)$ is the realized-utility vector induced by the allocation. This objective is more demanding: it requires the allocation to be balanced within each realized instance, rather than only after averaging across different possible realizations.

The order of expectation and welfare is not a cosmetic distinction. In the ex-ante model, an online policy only needs to balance the expected marginal utilities of the agents; randomness can average out across sample paths. In the ex-post model, by contrast, the policy must produce a balanced allocation on each realized instance, while still making irrevocable online decisions. This makes the ex-post objective sensitive not only to which agents receive large values, but also to how much budget remains when later values are revealed. As a consequence, the two models lead to sharply different outcomes and algorithmic behavior. We defer a discussion of further related work to \Cref{app:related-work}.

\subsection{Our Contribution}

We study $\alpha$-fair prophet inequalities in both the ex-ante and ex-post models, under full distributional knowledge and under sample access. Our results highlight a phase transition around $\alpha=1$: for ex-ante fairness, $\alpha\le1$ admits robust sample-based guarantees while $\alpha>1$ does not; for ex-post fairness, $\alpha<1$ admits constant full-information guarantees, whereas $\alpha>1$ collapses to a horizon-dependent guarantee.

\paragraph{Ex-ante $\alpha$-fairness.}
Our first result shows that, with full distributional knowledge, the classic $\nicefrac{1}{2}$ prophet inequality extends to the full $\alpha$-fair family.

\medskip
\noindent
\textbf{Theorem (Tight full-information ex-ante guarantee; \Cref{thm:alpha-full-info-half}).}
For every $\alpha\ge0$, the ex-ante $\alpha$-fair competitive ratio is $1/2$.

\medskip

The proof is not specific to the closed form of $W_\alpha$. We isolate a class of welfare functions that are symmetric, concave, and positively homogeneous, and show that every such welfare function admits a $\nicefrac{1}{2}$ ex-ante prophet inequality.

The lower bound is based on a geometric strengthening of the classic utilitarian prophet inequality. We show that the online decision maker can simultaneously approximate the prophet's entire attainable utility region by a factor $\nicefrac12$, not merely every linear objective separately. This follows from the characterization of closed convex sets by their support functions, together with a Hahn--Banach separation argument. Monotonicity and homogeneity of the welfare function then transfer this geometric containment directly to any such non-linear welfare objective, including the full $\alpha$-fair family. The matching upper bound is already achieved by a two-agent long-shot instance. The full-information result extends the classical utilitarian guarantee to a broad class of non-linear objectives.

The sample-access setting is more delicate. For $\alpha\in(0,1]$, we show that a polynomial number of samples per distribution suffices for a constant-factor approximation.

\medskip
\noindent
\textbf{Theorem (Sample-based ex-ante guarantee; \Cref{thm:sample-alpha-small}).}
For every $\alpha\in(0,1]$, there is an algorithm using $O(n\log n)$ samples per distribution that achieves a $\frac{1-e^{-1}}{128e}$ constant  competitive ratio for the ex-ante $\alpha$-fair objective.

\medskip

The algorithm first learns a relaxed prophet allocation in terms of per-agent target activation probabilities, and then implements those probabilities online via rank thresholds computed from fresh samples. The proof combines a prophet relaxation based on marginal utility curves, relative tail approximation from samples, and a rank-threshold implementation argument.

This positive result cannot be extended beyond $\alpha=1$. When $\alpha>1$, the welfare objective becomes sufficiently bottleneck-sensitive that any finite amount of samples cannot identify which agent must receive most of the budget.

\medskip
\noindent
\textbf{Theorem (Sample impossibility for ex-ante $\alpha>1$; \Cref{thm:sample-impossibility-alpha-large}).}
For all $\alpha >1$ and the ex-ante objective with unknown distributions, the competitive ratio is $\nicefrac{1}{n}$ for any finite number of samples $m \in \mathbb{N}$.

\medskip

In particular, this yields the first family of settings ($\alpha >1$), where a constant-competitive approximation is achievable with full information, but this guarantee does not translate to a similar guarantee with sample access.

To prove the theorem, we construct $n$ environments, each with a different ``bad'' agent. In each environment, any near-optimal algorithm must concentrate almost all allocation probability on the bad agent, but with only finitely many samples the environments are information-theoretically indistinguishable, as their pairwise total variation distance is small. Hence, any learning algorithm must behave nearly identically across environments and cannot systematically identify the bad agent, forcing a spread allocation and yielding an upper bound of $1/n$.

\paragraph{Ex-post $\alpha$-fairness.}
The ex-post model exhibits a different phase transition. The case $\alpha=1$ is trivial, and a competitive ratio of $1$ is achieved via a uniform allocation $\alloc_i=1/n$ independent of the instance. For $\alpha\in(0,1)$, we prove that a universal constant competitive ratio, independent of $\alpha$, is possible.

\medskip
\noindent
\textbf{Theorem (Universal full-information ex-post guarantee; \Cref{thm:cr_uniform}).}
For every $\alpha\in (0,1)$, the ex-post $\alpha$-fair competitive ratio is at least $e^{-\pi^2/6}>0.193$. 

\medskip

First, under an appropriate change of measure, it is equivalent to lower bound the ratio of expectations under the original law and to lower bound the expectation of the ratio under a biased law. In order to lower bound the biased expectation of the ratio, we need for the online algorithm to perform well on every realization. We formalize this intuition, by lower bounding the realized ratio by some Kullback-Leibler divergence that measures how far apart the online and prophet allocation are. This suggests to design a policy that never runs out of budget, and we do so by allocating a fraction of the remaining budget at each time step through a function that estimates the posterior allocation given the current realizations.

Similarly to the ex-ante setting, we are able to obtain performance guarantees through sample access. For every fixed $\alpha \in (0,1)$, a single sample per distribution is enough to obtain a constant that depends on $\alpha$.

\medskip
\noindent
\textbf{Theorem (Single-sample ex-post guarantee for fixed $\alpha<1$; \Cref{thm:single_sample_CR}).}
For every $\alpha\in(0,1)$, the single-sample algorithm in \Cref{alg:single-sample-spike-fill} achieves competitive ratio at least $1/(4\cdot 3^{\alpha/(1-\alpha)})$.

\medskip

This algorithm mixes two policies. One is the classical single-sample threshold algorithm from \citet{rubinstein2020optimal}, which captures instances where value is concentrated in one coordinate. The other is a sample-target filling policy, which treats the sampled aggregate as a proxy for the realized aggregate and allocates proportionally until the budget is exhausted, this captures realizations where the reward is spread out across coordinates. A moment decomposition shows that these two regimes together capture a constant fraction of the prophet value for fixed $\alpha$.

However, the dependence on $\alpha$ in the previous theorem deteriorates as $\alpha\to1$. This raises the question on whether a universal guarantee, similar to the full information case, is possible.  

We prove that, in contrast to the ex-ante setting where a constant was also achievable with a number of samples independent of $\alpha$, in the ex-post setting such a universal approximation guarantee cannot be achieved with a fixed finite number of samples. Namely, as formalized by the following theorem, for every sample budget depending only on $n$, there is a sequence of parameters $\alpha_n\to1$ for which the competitive ratio goes to zero.

\medskip
\noindent
\textbf{Theorem (No uniform sample guarantee for ex-post $\alpha<1$; \Cref{thm:sample-impossibility-stated}).}
For every fixed $n\geq2$ and every finite sample budget $m$ that depends only on $n$, the best sample access ex-post competitive ratio converges to $1/n$ as $\alpha \to 1^-$.

\medskip

The proof uses a family of environments with rare positive values that increase geometrically up to an unknown last active agent, after which all the values are constantly $0$. With high probability, all training samples are zero. On the path where all active agents take their positive values, the algorithm must spread its budget across the possible unknown last agents cutoffs, while the prophet allocation concentrates on the last active agent. Choosing $\alpha$ sufficiently close to $1$ makes these rare paths dominate the expected prophet value. An averaging argument then yields the sharp limiting bound $1/n$. 
Thus, for ex-post fairness below $\alpha=1$, the sample complexity and the competitive ratio cannot both be made uniform as $\alpha$ approaches Nash social welfare.

Finally, we show that, when $\alpha>1$, even full distributional knowledge is not enough for a constant competitive ratio.

\medskip
\noindent
\textbf{Theorem (Full-information ex-post hardness for $\alpha>1$; \Cref{thm:alpha-expost-hardness}).}
For every $\alpha>1$ and every horizon $n$, the ex-post competitive ratio is exactly $1/n$. 

\medskip

This result shows that ex-post $\alpha$-fairness undergoes a sharp collapse once $\alpha>1$: the online decision maker cannot obtain a constant fraction of the prophet, even with full knowledge of the distributions.

To establish the matching $1/n$ impossibility, we construct independent values on a geometric scale. On the event that agent $i$ is the last to take its low value, its utility upper bounds the minimum utility. 
Independence of the future and adaptiveness of the online allocation lets us compute the probability that agent $i$ is the last to take its low value. The budget constraint bounds the total online value, while each possible last low value contributes approximately equally to the prophet benchmark. Letting the ratio between consecutive scales grow yields the sharp upper bound. The matching lower bound follows immediately from monotonicity and homogeneity under uniform allocation.

This result improves upon the impossibility result of~\cite{manshadi2023fair} who study the special case $\alpha \to \infty$, who established an $O(\nicefrac{1}{\log n})$ upper bound for correlated values. Our result strengthens this in two respects: it improves the bound to $O(\nicefrac{1}{n})$ and applies even to independent values.

\begin{figure}
    \centering
    \makebox[\linewidth][l]{%
    \hspace*{-0.05\linewidth}%
    \begin{minipage}{\linewidth}
    \begin{minipage}[t]{0.49\linewidth}
    \centering
    \begin{tikzpicture}
\begin{axis}[
    width=0.86\linewidth, height=0.75\linewidth,
    xmin=0, xmax=2.35, ymin=0, ymax=0.72,
    axis lines=left, xlabel={$\alpha$: Ex-ante model}, ylabel={CR},
    xtick={0,1,2.25}, xticklabels={$0$,$1$,$\infty$},
    ytick={0.03,0.32,0.50,0.66},
    yticklabels={$1/n$,$\Omega(1)$,$1/2$,$1$},
    tick label style={font=\small}, label style={font=\small}, unbounded coords=jump, clip=false]
    \addplot[blue, line width=2.6pt] coordinates {(0,0.50) (2.30,0.50)};
    \addplot[blue, line width=2.0pt, densely dashed] coordinates {(0,0.52) (2.30,0.52)};
    \addplot[red, line width=2.4pt] coordinates {(0,0.32) (1.00,0.32) (nan,nan) (1.03,0.03) (2.30,0.03)};
    \addplot[red, line width=2.0pt, densely dashed] coordinates {(0,0.47) (1.00,0.47) (nan,nan) (1.03,0.055) (2.30,0.055)};
\end{axis}
\end{tikzpicture}
    \end{minipage}
    \hfill
    \begin{minipage}[t]{0.49\linewidth}
    \begin{tikzpicture}
\begin{axis}[
    width=1.0\linewidth, height=0.75\linewidth,
    xmin=0, xmax=2.35, ymin=0, ymax=0.72,
    axis lines=left, xlabel={$\alpha$: Ex-post model}, ylabel=\empty,
    xtick={0,1,2.25}, xticklabels={$0$,$1$,$\infty$},
    ytick={0.03,0.50,0.66},
    yticklabels={$1/n$,$1/2$,$1$},
    tick label style={font=\small}, label style={font=\small}, unbounded coords=jump, clip=false,
    legend style={draw=none, fill=white, fill opacity=0.78, text opacity=1,
      font=\small, at={(0.48,0.6)}, anchor=north west, row sep=1pt, column sep=6pt, legend columns=1, cells={anchor=west}},
    legend image post style={scale=1.05}]
    \addplot[blue, line width=2.4pt] coordinates {(0.00,0.32) (1.00,0.32) (nan,nan) (1.03,0.03) (2.30,0.03)};
    \addlegendentry{full information: lower}
    \addplot[blue, line width=2.0pt, densely dashed] coordinates {(0.00,0.50) (1.00,0.50) (nan,nan) (1.03,0.055) (2.30,0.055)};
    \addlegendentry{full information: upper}
    \addplot[red, line width=2.4pt, smooth] coordinates {(0.00,0.22) (0.20,0.19) (0.45,0.14) (0.70,0.08) (0.95,0.03) (nan,nan) (1.03,0.03) (2.30,0.03)};
    \addlegendentry{samples: lower}
    \addplot[red, line width=2.0pt, densely dashed, smooth] coordinates {(0.00,0.35) (0.20,0.31) (0.45,0.25) (0.70,0.19) (0.99,0.04) (nan,nan) (1.03,0.04) (2.30,0.04)};
    \addlegendentry{samples: upper}
    \addplot[only marks, mark=*, mark size=3.9pt, color=blue] coordinates {(1,0.66)};
    \addplot[only marks, mark=o, mark size=6.5pt, mark options={red,line width=1.3pt,fill=none}] coordinates {(1,0.66)};
\end{axis}
\end{tikzpicture}
    \end{minipage}
    \end{minipage}%
    }
    \caption{Qualitative summary of the results for the ex-ante and ex-post models. Blue curves correspond to full-information guarantees, and red curves correspond to sample-access guarantees. The curves are schematic; for every fixed finite sample budget, the ex-post sample guarantee approaches $1/n$ as $\alpha\uparrow1$, whereas the ratio at $\alpha=1$ is $1$.}
    \label{fig:exante-expost-summary}
\end{figure}

\paragraph{Summary of results.}
Overall, the ex-ante and ex-post models exhibit different phase transitions, as summarized qualitatively in \Cref{fig:exante-expost-summary}. In the ex-ante setting, full-information constant competitive ratios are achievable for every $\alpha\ge0$, but the sample-access model undergoes a sharp transition at $\alpha=1$: finite samples suffice for $\alpha\le1$, whereas for $\alpha>1$ they cannot improve over the trivial $1/n$ guarantee. In the ex-post setting, the transition is already visible under full information, since no constant competitive ratio is possible for $\alpha>1$. For $\alpha\in(0,1)$, constant full-information guarantees are possible, but finite-sample guarantees are necessarily non-universal as $\alpha$ approaches $1$. The endpoint $\alpha=1$ is an exception: the uniform allocation which is instance-independent and sets $\alloc_i = \nicefrac{1}{n}$ is pointwise optimal for the ex-post Nash objective, and therefore achieves competitive ratio $1$ without any distributional knowledge.

The qualitatively different behavior on the two regimes ($\alpha < 1$ and $\alpha > 1$) reflects the influence of $\alpha$, which governs the substitutability of utilities across agents.  
When $\alpha<1$, the welfare function is a positive-order power mean: low utilities are penalized, but they can still be compensated by sufficiently high utilities elsewhere. Thus the objective remains efficiency-like, with a fairness correction. 
When $\alpha>1$, the welfare function becomes a negative-order power mean: the value is governed by the smallest coordinates, and large utilities for well-served agents provide little compensation for a poorly served one. In this regime, the objective is bottleneck-sensitive and approaches the Rawlsian minimum as $\alpha\to\infty$.

\section{Model for Online Resource Allocation} \label{sec:utilitarian}
We consider a resource allocation setting in which a decision maker observes in an online fashion a sequence of integrable, independent, non-negative random variables $X_1,\dots,X_n$. At each time $i\in[n]$, upon observing $X_i$, the decision maker selects an allocation level $\alloc_i\ge 0$, yielding instantaneous utility $\alloc_i X_i$. The total amount allocated over the horizon is constrained by a fixed budget. We normalize this budget to $1$, so that every feasible allocation satisfies $\sum_{i \in [n]} \alloc_i \leq 1$.\footnote{For the purpose of evaluating competitive ratios, scaling the budget uniformly scales both the optimal online and offline rewards by the same factor, which cancels out in the ratio.}

At time $i$ a fractional online algorithm $A$ takes as input $X_1,\dots,X_i$, and outputs an allocation $ A_i(X_1,\dots,X_i) \in [0,1]$. The algorithm must satisfy $\sum_{i \in [n]} A_i(X_1,\dots,X_i)  \leq 1$. 
When clear from the context, we will simply refer to $A_i(X_1,\dots,X_i)$ as $a_i$. 
We denote by $\mathcal{A}_n$ the set of all such online algorithms. A sub-class of $\mathcal{A}_n$ is $\mathcal{A}_n^{\mathrm{int}}$, the class of integral online algorithms, where for all $i \in [n]$, $A_i(X_1,\dots,X_i) \in \{0,1\}$. We allow the online decision maker to randomize ex ante over feasible online algorithms. Since the objectives we consider are concave in the allocation, such randomization does not improve over the corresponding averaged fractional algorithm.

We compare the performance of online algorithms to that of an omniscient prophet who observes the entire sequence $(X_1,\dots,X_n)$ before making any allocation decisions subject to the same feasibility constraints. The class of admissible offline  (look-ahead) algorithms is defined as $\mathcal{L}_n$.

\subsection{The Classic Prophet Inequality for Utilitarian Welfare}

We recall the classical utilitarian prophet inequality, which serves as the $\alpha=0$ benchmark. For an online algorithm $A$, define its expected total utility by $U(A) \coloneqq 
\mathbb{E}\left[\sum_{i=1}^n A_i(X_1,\dots,X_i) X_i\right]
=
\mathbb{E}\left[\sum_{i=1}^n a_i X_i\right]$. 
By linearity, the ex-ante and ex-post formulations coincide for this objective. Let
\begin{equation*}
 \uton \coloneqq \sup_{A\in\mathcal{A}_n} U(A),
 \qquad
 \optut \coloneqq \sup_{A\in\mathcal{L}_n} U(A)
 =
 \mathbb{E}\!\left[\max_{i\in[n]} X_i\right]
\end{equation*}
denote the optimal online and prophet values. The optimal online policy can be computed by dynamic programming~\citep{Chow1971GreatET}; in particular, it is without loss of optimality to restrict to integral stopping policies.
The classical prophet inequality of \citet{KrengelSucheston1977Semiamarts} gives the tight worst-case comparison
\begin{equation*}
 \comput \coloneqq
 \inf_{X_1,\dots,X_n}\frac{\uton}{\optut}
 =
 \frac{1}{2}.
\end{equation*}
This full-information guarantee assumes knowledge of the distributions of the $X_i$. In the sample-access model, \citet{rubinstein2020optimal} showed that the same $1/2$ ratio is achievable with only one sample from each distribution before the online phase.

\section{Ex-Ante $\alpha$-Fairness}\label{sec:exante-alpha}

We start by studying the ex-ante model. For an allocation algorithm $\ALG$, the ex-ante $\alpha$-fair objective is defined as
\begin{equation*}
\ra(\ALG)
\coloneqq W_\alpha(\be[a \cdot X]).
\end{equation*}
We denote the optimal online and prophet values by
\begin{equation*}
    \raon
    \coloneqq
    \sup_{\ALG\in\mathcal A_n}\ra(\ALG),
    \qquad
    \optra
    \coloneqq
    \sup_{\ALG\in\mathcal L_n}\ra(\ALG).
\end{equation*}
The fixed-horizon and all-horizon competitive ratios are defined, respectively, as
\begin{equation*}
    \compra^n(\alpha)
    \coloneqq
    \inf_{X_1,\ldots,X_n}
    \frac{\raon}{\optra}, \qquad  \compra(\alpha)
    \coloneqq
    \inf_{n\geq2}\compra^n(\alpha).
\end{equation*}

The $\alpha$-fair $W_\alpha$ ex-ante objective fundamentally alters optimal behavior compared to the utilitarian benchmark. Consider a deterministic two-item instance: $X_1 = 1$ and $X_2 = 2$. A utilitarian policy (both offline and online) greedily selects only $X_2$, leaving the first agent with a utility of $0$. In contrast, the optimal ex-ante policy for $\alpha\to \infty$ must balance allocations fractionally (or via randomization): it assigns weight $2/3$ to the first item and $1/3$ to the second, yielding an optimal ex-ante value of $2/3$ for both online and offline algorithms. 

\subsection{A Tight $\nicefrac{1}{2}$ Competitive Ratio with Full Information}
\label{subsec:exante-alpha-full-information}

The full-information result is not specific to the closed form of $W_\alpha$. We first isolate the structural properties of the welfare function that are sufficient for the upper and lower bounds to match.

\begin{definition}
We define a function $f:\br_+^n\to\br_+$ to be $\mathrm{SCH}$ if it is symmetric, concave, and positively homogeneous of degree one, that is, $f(\lambda x)=\lambda f(x)$ for every $\lambda\geq0$ and every $x\in\br_+^n$.  
\end{definition}

We next prove a general upper bound for $\mathrm{SCH}$ functions for a horizon $n=2$. See \Cref{app:alpha-general-upper} for the proof. 
\begin{proposition}\label{prop:alpha-general-upper}
Let $f:\br_+^2\to\br_+$ be a nonzero $\mathrm{SCH}$ welfare function. Then the full-information ex-ante competitive ratio for the welfare objective $f$ is at most $\nicefrac{1}{2}$.
\end{proposition}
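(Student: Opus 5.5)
The plan is to use a two-agent ``long-shot'' instance. On it, the online decision maker is confined to the simplex of expected-utility vectors, while the prophet attains a point arbitrarily close to $(1,1)$. Fix $\varepsilon\in(0,1)$ and let $X_1\equiv 1$ deterministically, with $X_2=1/\varepsilon$ with probability $\varepsilon$ and $X_2=0$ otherwise. Before comparing values, I record three structural facts about a nonzero $\mathrm{SCH}$ function $f$.
\begin{itemize}
\item \emph{Superadditivity and monotonicity.} Homogeneity and concavity give $f(x+y)=2f\bigl(\tfrac{x+y}{2}\bigr)\ge f(x)+f(y)$. Since $f\ge 0$, this yields $f(x+y)\ge f(x)$, so $f$ is monotone.
\item \emph{Concentration at the diagonal.} By symmetry, $f(x_1,x_2)=f(x_2,x_1)$. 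Concavity then gives $f(x_1,x_2)\le f\bigl(\tfrac{x_1+x_2}{2},\tfrac{x_1+x_2}{2}\bigr)=\tfrac{x_1+x_2}{2}\,f(1,1)$.
\item \emph{Positivity.} Since $f$ is nonzero, the previous bound forces $f(1,1)>0$.
\end{itemize}

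\emph{Online side.} Randomization over algorithms is without loss by concavity, as noted in the model section, so consider any fractional online algorithm. The allocation $a_1$ is chosen before $X_2$ is seen, so $a_1$ is independent of $X_2$. Feasibility gives $a_2\le 1-a_1$, hence
\[
\be[a_2X_2]\le \be[(1-a_1)X_2]=\be[1-a_1]\,\be[X_2]=1-\be[a_1].
\]
Writing $t=\be[a_1X_1]=\be[a_1]$, the online expected-utility vector is therefore dominated coordinatewise by $(t,1-t)$. By monotonicity and the diagonal bound,
\[
\raon\le \sup_{t\in[0,1]} f(t,1-t)\le \tfrac12 f(1,1).
\]

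\emph{Prophet side.} Consider the offline policy that gives the whole budget to agent $2$ when $X_2=1/\varepsilon$ and to agent $1$ otherwise. Its expected-utility vector is $(1-\varepsilon,1)$. Writing $(1-\varepsilon,1)=(1-\varepsilon)(1,1)+\varepsilon(0,1)$ and applying concavity together with $f\ge0$ gives
\[
\optra\ge f(1-\varepsilon,1)\ge (1-\varepsilon)f(1,1).
\]
This step avoids needing continuity of $f$ at the boundary.

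\emph{Conclusion.} Combining the two sides, the ratio on this instance is at most $\frac{1}{2(1-\varepsilon)}$. Letting $\varepsilon\to0$ shows that the competitive ratio is at most $1/2$. The main obstacle is the online bound: it hinges on the independence argument that caps $\be[a_2X_2]$ by $1-\be[a_1]$, and on extracting monotonicity and the diagonal bound from $\mathrm{SCH}$ alone, since monotonicity is not assumed in the definition.
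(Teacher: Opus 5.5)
Your proposal is correct and follows the paper's proof essentially step for step. It uses the same two-agent long-shot instance, the same independence argument showing the online vector is dominated by $(p,1-p)$, the same symmetry/concavity/homogeneity bound $f(p,1-p)\le \tfrac12 f(1,1)$, and the same derivation of monotonicity and $f(1,1)>0$. The only cosmetic difference is that your prophet policy attains $(1-\varepsilon,1)$ and you lower-bound it by concavity, whereas the paper's attains the diagonal $(1-\varepsilon,1-\varepsilon)$ and uses homogeneity directly.
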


The lower bound uses a different, and more geometric, set of assumptions. The proof relies only on the classical weighted utilitarian prophet inequality and a convex support-function argument. It generalizes the argument of \cite{HillKennedy1990} for the case $\alpha=\infty$.

\begin{proposition}\label{prop:alpha-general-lower}
Let $n \geq 1$, and $f:\br_+^n\to\br_+$ be coordinatewise monotone and positively homogeneous of degree one. Then, for the ex-ante welfare objective $f(\be[\alloc \cdot X])$, the corresponding competitive ratio is lower bounded by $1/2$. 
\end{proposition}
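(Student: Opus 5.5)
Define the online and prophet attainable utility regions
\[
\mathcal{U}_{\mathrm{on}} \coloneqq \{\be[\alloc\cdot X] : \ALG\in\mathcal A_n\},
\qquad
\mathcal{U}_{\mathrm{off}} \coloneqq \{\be[\alloc\cdot X] : \ALG\in\mathcal L_n\}.
\]
The plan is to prove the geometric containment $\tfrac12\,\mathcal{U}_{\mathrm{off}}\subseteq \mathrm{cl}\,\mathcal{D}_{\mathrm{on}}$, where $\mathcal{D}_{\mathrm{on}}\coloneqq(\mathcal{U}_{\mathrm{on}}-\br_+^n)\cap\br_+^n$ is the downward closure of the online region. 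Granting this, the statement follows quickly. Fix any prophet algorithm with utility vector $v$. Then $v/2$ is (up to closure) dominated coordinatewise by some online vector $u$. Monotonicity gives $f(u)\ge f(v/2)$, and homogeneity gives $f(v/2)=\tfrac12 f(v)$. Taking suprema yields $\raon\ge\tfrac12\optra$ in general form. To handle the closure I would take an approximating sequence and use that $\sup f$ over $\mathcal{U}_{\mathrm{on}}$ is a supremum. If $f$ is not continuous, I would avoid closures by proving the online region is closed: online policies form a compact convex set when distributions are truncated and values are integrable, or one argues via an $\epsilon$-slack. I would flag this technicality but not dwell on it.

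\textbf{Convexity.} Both regions are convex, because fractional online and offline algorithms can be mixed pointwise: $\lambda\alloc+(1-\lambda)\alloc'$ is again a feasible online (resp.\ offline) rule, and $\be[\alloc\cdot X]$ is linear in $\alloc$. Hence $\mathcal D_{\mathrm{on}}$ is a convex, downward-closed subset of $\br_+^n$, and its closure $C$ is closed and convex.

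\textbf{Containment via support functions.} Suppose some $v\in\tfrac12\mathcal{U}_{\mathrm{off}}$ lies outside $C$. By Hahn--Banach separation there is $w\in\br^n$ with $w\cdot v>\sup_{c\in C} w\cdot c$. Downward closedness in $\br_+^n$ lets us replace $w$ by its positive part $w^+$: for $u\in C$, the vector $u$ with coordinates zeroed where $w_i<0$ is still in $C$, so $\sup_C w\cdot c=\sup_C w^+\cdot c$, while $w^+\cdot v\ge w\cdot v$ since $v\ge0$. Thus we may take $w\ge0$, and then
\[
w\cdot v>\sup_{\ALG\in\mathcal A_n}\be\Big[\sum_i \alloc_i w_iX_i\Big].
\]
This is the optimal online value for the weighted utilitarian instance $Y_i=w_iX_i$, whose $Y_i$ are still independent, nonnegative and integrable. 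By the classical prophet inequality of \citet{KrengelSucheston1977Semiamarts} recalled above, this online value is at least $\tfrac12\be[\max_i w_iX_i]$. On the other hand, $2v$ is a prophet utility vector, so
\[
w\cdot 2v=\be\Big[\sum_i \alloc_i w_i X_i\Big]\le \be\big[\max_i w_iX_i\big]
\]
for some offline $\alloc$ with $\sum_i\alloc_i\le1$. Combining, $w\cdot v\le\tfrac12\be[\max_i w_iX_i]\le\sup_{\mathcal A_n}(\cdot)$, a contradiction.

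The main obstacle is the separation step: making sure the separating functional can be taken nonnegative (handled by downward closure), and keeping track of closure versus attainment. The rest is a direct reduction to the weighted classical prophet inequality.
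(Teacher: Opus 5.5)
Your proposal is correct and follows essentially the paper's argument. Both show the $\tfrac12$-scaled prophet region lies in the closure of the online region, by Hahn--Banach separation with a nonnegative weight vector plus the weighted classical prophet inequality, and then conclude by monotonicity and homogeneity. The one small difference is how you get $w\ge 0$: you zero out coordinates using downward closedness, while the paper separates from $\overline{\mathcal C}-\br_+^n$ and uses its unboundedness, which amounts to the same thing.

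For the closure step you left open, your $\epsilon$-slack is exactly what the paper does, and it needs no continuity of $f$. If online-dominated vectors $u^m\to v/2$, then for any $\gamma\in(0,1)$ we have $u^m\ge\gamma v/2$ coordinatewise for large $m$. Monotonicity and homogeneity then give $\sup f\ge\frac{\gamma}{2}f(v)$, and you let $\gamma\to1$.
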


\begin{proof}
For an algorithm $\ALG\in\mathcal A_n$ or $\ALG \in \mathcal L_n$, recall that  $u(\ALG)
    =
    \left(
    \be[\alloc_1X_1],\ldots,\be[\alloc_nX_n]
    \right)\in\br_+^n$. We define the range of the agent payoffs induced by online and offline algorithms as $\mathcal C
    \coloneqq
    \{u(\ALG):\ALG\in\mathcal A_n\}$ and $\mathcal P
    \coloneqq
    \{u(\ALG):\ALG\in\mathcal L_n\}$ respectively. Both of those sets are downward closed, as the feasibility condition $\sum_i \alloc_i \leq 1$ implies that we can always choose to allocate $0$ instead of $a_i$. 
Both sets are also convex. Indeed, if $u(\ALG),u(\ALG')\in\mathcal C$, then the online algorithm that draws an independent coin before the process starts, runs $\ALG$ with probability $\lambda$, and runs $\ALG'$ with probability $1-\lambda$, is feasible and induces the utility vector $\lambda u(\ALG)+(1-\lambda)u(\ALG')$. The same argument applies to $\mathcal P$. 

For $w\in\br_+^n$, define
\begin{equation*}
    h(w)
    \coloneqq
    \sup_{u\in\mathcal C}\langle w,u\rangle,
    \qquad
    h^\star(w)
    \coloneqq
    \sup_{v\in\mathcal P}\langle w,v\rangle.
\end{equation*}
The quantity $h(w)$ is precisely 
the optimal online utilitarian value for the scaled sequence 
$w_1X_1, \ldots, w_nX_n$. Similarly, $h^\star(w)$ is the corresponding prophet value. In terms of convex analysis, the functions $h$ and $h^\star$ represent the support functions of the convex sets $\mathcal{C}$ and $\mathcal{P}$. The classical prophet  inequality therefore gives $ h(w)
    \geq
    \frac12 h^\star(w)$ for all  $w\in\br_+^n$.

We claim that $\frac12\mathcal P
    \subseteq
    \overline{\mathcal C}$, where $\overline{\mathcal C}$ is the closure of $\mathcal C$.
Indeed, fix $v\in\mathcal P$. By the inequality between $h$ and $h^\star$, for every $w\in\br_+^n$,
\begin{equation*}
    \left\langle w,\frac v2\right\rangle
    \leq
    \frac12 h^\star(w)
    \leq
    h(w).
\end{equation*}
Suppose, toward a contradiction, that $v/2\notin\overline{\mathcal C}$. The set $\overline{\mathcal C}-\br_+^n$ is convex because it is the sum of two convex sets, namely $\overline{\mathcal C}$ and $-\br_+^n$. It is also closed. Indeed, $\overline{\mathcal C}$ is compact since every expected utility vector is contained in $\prod_i[0,\be[X_i]]$. If $z^m=u^m-r^m\to z$ with $u^m\in\overline{\mathcal C}$ and $r^m\in\br_+^n$, then, after passing to a subsequence, $u^m\to u\in\overline{\mathcal C}$. Hence $r^m=u^m-z^m\to u-z$, and since $\br_+^n$ is closed, $u-z\in\br_+^n$. Thus $z=u-(u-z)\in\overline{\mathcal C}-\br_+^n$. Moreover, since $\overline{\mathcal C}$ is downward closed, if $v/2\in\overline{\mathcal C}-\br_+^n$, then $v/2=u-r\le u$ for some $u\in\overline{\mathcal C}$ and $r \in \br_+^n$, which would imply $v/2\in\overline{\mathcal C}$. Thus $v/2\notin\overline{\mathcal C}-\br_+^n$. Thus, by the Hahn-Banach separation theorem, there exists $w\in\br^n$ such that
\begin{equation*}
    \left\langle w,\frac v2\right\rangle
    >
    \sup_{z\in\overline{\mathcal C}-\br_+^n}\langle w,z\rangle.
\end{equation*}
Since $\overline{\mathcal C}-\br_+^n$ is unbounded in every negative coordinate direction, the right-hand side is finite only if $w\in\br_+^n$. For such $w$,
\begin{equation*}
    \sup_{z\in\overline{\mathcal C}-\br_+^n}\langle w,z\rangle
    =
    \sup_{u\in\overline{\mathcal C}}\langle w,u\rangle
    =
    h(w),
\end{equation*}
overall yielding the contradiction $\langle w,v/2\rangle  \leq h(w) < \langle w,v/2\rangle$. This proves the claim.

Now fix $v\in\mathcal P$ and $\gamma\in(0,1)$. Since $v/2\in\overline{\mathcal C}$, there exists a sequence $u^m\in\mathcal C$ with $u^m\to v/2$. For all sufficiently large $m$, we have $u^m\geq\gamma v/2$ coordinatewise. Since $\mathcal C$ is downward closed, this implies $\gamma v/2\in\mathcal C$. Therefore, using homogeneity,
\begin{equation*}
    \sup_{u\in\mathcal C} f(u)
    \geq
    f\left(\gamma\frac v2\right)
    =
    \frac{\gamma}{2}f(v).
\end{equation*}
Letting $\gamma \to 1$ and then taking the supremum over $v\in\mathcal P$ gives
\begin{equation*}
    \sup_{u\in\mathcal C} f(u)
    \geq
    \frac12 \sup_{v\in\mathcal P} f(v). \qedhere 
\end{equation*}
\end{proof}

We can now specialize the two general propositions to the $\alpha$-fair welfare functions.

\begin{theorem}\label{thm:alpha-full-info-half}
For every $\alpha \geq 0$, the ex-ante $\alpha$-fair competitive ratio is $\compra(\alpha)=\frac12$.
\end{theorem}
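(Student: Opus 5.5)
The plan is to derive \Cref{thm:alpha-full-info-half} by combining \Cref{prop:alpha-general-upper} and \Cref{prop:alpha-general-lower}. The main work is checking that $W_\alpha$ satisfies the hypotheses of each proposition for every $\alpha\ge 0$, including the Nash case $\alpha=1$, the Rawlsian limit $\alpha\to\infty$, and the convention $W_\alpha(u)=0$ when some $u_i=0$ and $\alpha>1$.

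For the lower bound, I will verify that $W_\alpha$ is coordinatewise monotone and positively homogeneous of degree one on $\br_+^n$.

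\begin{itemize}
\item \emph{Homogeneity.} This is immediate from the power-mean form: $W_\alpha(\lambda u)=\lambda W_\alpha(u)$. For $\alpha=1$ the geometric mean is homogeneous, and the minimum is homogeneous.
\item \emph{Monotonicity for $\alpha<1$.} The map $t\mapsto t^{1-\alpha}$ is increasing and so is the outer power $1/(1-\alpha)>0$.
\item \emph{Monotonicity for $\alpha>1$.} Both $t\mapsto t^{1-\alpha}$ and the outer power $1/(1-\alpha)$ are decreasing, so their composition is increasing on strictly positive vectors. The convention $W_\alpha=0$ on vectors with a zero coordinate preserves monotonicity, since $W_\alpha\ge 0$ everywhere.
\end{itemize}

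\Cref{prop:alpha-general-lower} then gives $\raon\ge\frac12\optra$ for every instance and every $n$, hence $\compra(\alpha)\ge\frac12$. One edge case needs handling: if $\optra=0$ the ratio should be read with the convention that such instances are excluded or trivially satisfied.

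For the upper bound, I will check that $W_\alpha$ restricted to $\br_+^2$ is nonzero and $\mathrm{SCH}$, and then invoke \Cref{prop:alpha-general-upper}.

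\begin{itemize}
\item \emph{Symmetry and homogeneity.} Both are clear from the formula.
\item \emph{Nonzero.} $W_\alpha(1,1)=1$.
\item \emph{Concavity.} This is the only step needing an argument. It is the classical fact that power means $M_p$ with exponent $p=1-\alpha\le 1$ are concave on the positive orthant; this covers the geometric mean ($p=0$) and the minimum ($p\to-\infty$). For $p\in(0,1]$, concavity follows from Minkowski's reverse inequality. For $p<0$, I would argue via homogeneity: a positively homogeneous function is concave iff its superlevel set $\{u:W_\alpha(u)\ge 1\}$ is convex. For $p<0$ this set is $\{u>0:\sum u_i^{p}\le n\}$, which is convex because $t\mapsto t^{p}$ is convex. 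I expect this to be the main technical point.
\item \emph{Boundary for $\alpha>1$.} Concavity must also hold when some coordinates are zero. Since $W_\alpha$ is continuous on $\br_+^n$ (it tends to $0$ as a coordinate tends to $0$ when $p<0$), concavity extends from the open orthant by continuity.
\end{itemize}

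\Cref{prop:alpha-general-upper} then yields $\compra^2(\alpha)\le\frac12$, and hence $\compra(\alpha)=\inf_{n\ge2}\compra^n(\alpha)\le\frac12$. Together with the lower bound, this gives $\compra(\alpha)=\frac12$.
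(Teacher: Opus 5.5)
Your proposal is correct and follows the paper's proof. Both verify that $W_\alpha$ is symmetric, positively homogeneous, concave (a power mean of order $1-\alpha\le 1$) and monotone, then combine \Cref{prop:alpha-general-lower} for the lower bound with \Cref{prop:alpha-general-upper} at $n=2$ for the upper bound. The differences are cosmetic: you check monotonicity directly instead of deriving it from the $\mathrm{SCH}$ property via \Cref{lem:sch-properties}, and you spell out the concavity argument (superlevel sets when $1-\alpha<0$, continuity at the boundary), which the paper only cites as a classical fact.
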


\begin{proof}
It suffices to verify that $W_\alpha$ is $\mathrm{SCH}$. Symmetry and homogeneity are immediate from the definition of $W_\alpha$. Concavity follows from the concavity of power means of order at most $1$; for $\alpha\neq1$ this is the power mean of order $1-\alpha\leq1$, while for $\alpha=1$ it is the geometric mean. Thus $W_\alpha$ is $\mathrm{SCH}$. By \Cref{lem:sch-properties}, $W_\alpha$ is monotone. Therefore, \Cref{prop:alpha-general-lower} gives the lower bound, and \Cref{prop:alpha-general-upper} gives the matching upper bound for $n=2$. 
\end{proof}

\subsection{An $O(n\log n)$-Sample Constant Competitive Ratio Algorithm with $\alpha\in(0,1]$}
\label{subsec:sample-alpha-small}

We next consider the case where the decision maker does not have any a priori knowledge about the instance except the number of items $n$, and only has sample access. This is the same perspective studied in \citet{rubinstein2020optimal} where they show that a single sample per distribution achieves a $1/2$ competitive ratio for the classic case $\alpha=0$.

We use the following sample model. The distributions are unknown, but the decision maker has access to $m$ independent samples from each distribution before the online phase begins. We write $\widetilde X_{i,k}$ for the $k$-th sample from distribution $F_i$, and the learning algorithm may use an independent random seed in addition to these samples.

The full-information proof above is existential and geometric: it shows that
the online attainable utility region contains a scaled copy of the prophet
region, but it does not by itself produce a distribution-free learning
procedure. In the sample-access model we therefore use a different approach.
We first learn, from samples, a relaxed offline target vector of marginal
activation probabilities. We then implement these target probabilities online
using fresh samples and rank thresholds. The main difficulty is to ensure that
errors in the learned marginal utilities do not destroy the non-linear
$\alpha$-fair objective.

We prove the following constant competitive ratio result for a polynomial number of samples.

\begin{algorithm}[t]
\SetAlgoNoLine
\KwIn{Sample access to $F_1,\ldots,F_n$, online observations $X_1,\ldots,X_n$, and $\alpha\in(0,1]$.}
\KwOut{A feasible online allocation $a_1,\ldots,a_n$.}

Set
\begin{equation*}
    \delta=\frac1{20n},
    \qquad
    m_{\rm tr}
    =
    \left\lceil
    C\delta^{-1}\bigl(\log(1/\delta)+\log n\bigr)
    \right\rceil,
    \qquad
    m_{\rm imp}
    =
    \left\lceil\frac2\delta\right\rceil-1.
\end{equation*}

Draw $m_{\rm tr}$ training samples from each distribution and using $\widehat G_i$ from \Cref{eq:empirical_curves}\;

\If{$\alpha\in(0,1)$}{
    Compute $\widehat c$ from \Cref{eq:learn-chat-alpha-small}\;
}
\If{$\alpha=1$}{
    Compute $\widehat c$ from \Cref{eq:learn-chat-nash}\;
}

Set $d_i=\delta+\widehat c_i$ for all $i \in [n]$.

Draw fresh implementation samples $\widetilde X_{i,1},\ldots,\widetilde X_{i,m_{\rm imp}}$ from each $F_i$\;

Set $k_i=\left\lfloor (m_{\rm imp}+1)d_i\right\rfloor$ for all $i \in [n]$

\For{$i=1,\ldots,n$}{
Observe $X_i$\;

Declare agent $i$ eligible if $X_i$ ranks among the top $k_i$ values of $\{X_i,\widetilde X_{i,1},\ldots,\widetilde X_{i,m_{\rm imp}}\}$, using symmetric random tie-breaking

\If{agent $i$ is eligible}{
Allocate the entire remaining budget to agent $i$ and stop\;
}
}

If no agent is eligible, allocate nothing\;

\caption{Learned Rank-Threshold Algorithm}
\label{alg:quantile-rank-threshold}
\end{algorithm}

\begin{theorem}\label{thm:sample-alpha-small}
For every $\alpha\in(0,1]$, there exists an algorithm $A$ using $O(n\log n)$ samples per distribution that achieves
\begin{equation*}
    W_\alpha(u(\ALG))
    \geq
    {\frac{e-1}{128e^2}}\optra.
\end{equation*}
\end{theorem}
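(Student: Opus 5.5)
The plan is to analyze \Cref{alg:quantile-rank-threshold} in three stages: a relaxation of the prophet benchmark, a learning step, and an online implementation step.

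\textbf{Relaxation.} For each agent $i$ define the marginal utility curve $G_i(q)\coloneqq \sup\{\be[X_i\,\ind{E}]:\Pr(E)\le q\}$, i.e.\ the expected value of $X_i$ on its top-$q$ quantile. $G_i$ is concave, nondecreasing, and satisfies $G_i(0)=0$. Take any prophet algorithm and let $c_i\coloneqq \be[\alloc_i]$. Since $\alloc_i\in[0,1]$, we have $\be[\alloc_iX_i]\le G_i(c_i)$, and the budget constraint gives $\sum_i c_i\le 1$. By monotonicity of $W_\alpha$ (\Cref{lem:sch-properties}),
\begin{equation*}
    \optra\le \mathrm{REL}\coloneqq\sup\Bigl\{W_\alpha\bigl(G_1(c_1),\ldots,G_n(c_n)\bigr):\ c\ge0,\ \textstyle\sum_i c_i\le1\Bigr\}.
\end{equation*}
For $\alpha\in(0,1)$ this is maximizing $\sum_i G_i(c_i)^{1-\alpha}$, and for $\alpha=1$ it is maximizing $\sum_i\log G_i(c_i)$. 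Both are concave separable programs over the simplex.

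\textbf{Learning.} The empirical curves $\widehat G_i$ in \Cref{eq:empirical_curves} are computed from $m_{\rm tr}=O(n\log n)$ samples. I would show that, with probability at least $1-O(\delta)$, a relative tail approximation holds simultaneously for all $i$ and all $q\ge\delta$: $\widehat G_i(q)\ge \tfrac12 G_i(q/2)$ and $\widehat G_i(q)\le 2G_i(2q)$. The argument is multiplicative Chernoff on the empirical quantile counts at $q\ge\delta$, plus a union bound over $n$ agents and a $\delta$-grid; this is where $\delta^{-1}\log(n/\delta)$ comes from. The key point is that these bounds are relative, so no bound on the scale of $X_i$ is needed.

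Let $\widehat c$ be the maximizer of the empirical program. Plugging in a true optimizer $c^\star$ shows the empirical value is at least that of $c^\star$ in the empirical program. I would compare values using concavity, so that $G_i(q/2)\ge G_i(q)/2$, and homogeneity of $W_\alpha$. This should give $W_\alpha(G_i(d_i))\ge \mathrm{const}\cdot\mathrm{REL}$ with $d_i=\delta+\widehat c_i$.

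The additive $\delta$ handles the regime where $c_i^\star<\delta$, in which the estimate is unreliable. It is also what makes $\alpha\le1$ essential. For $\alpha\in(0,1)$, losing agents with tiny $c_i$ costs at most a constant factor, because $W_\alpha$ with positive exponent tolerates it. For $\alpha=1$, the floor $\delta$ ensures every $G_i(d_i)\ge G_i(\delta)$, and each log term loses at most a constant via $G_i(c)\le (c/\delta)G_i(\delta)$ on $c\ge\delta$. Since $\sum_i d_i\le 1+n\delta=21/20$, there is still enough budget.

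\textbf{Implementation.} With $m_{\rm imp}+1$ exchangeable values, $X_i$ is eligible with probability exactly $k_i/(m_{\rm imp}+1)\in[d_i-\delta/2,\,d_i]$. Conditional on eligibility, the expected value of $X_i$ is at least that of the top-$\tfrac12 d_i$ quantile, up to constants. The chance that the budget is still untouched at step $i$ is at least $\prod_{j<i}(1-p_j)\ge e^{-\sum_j d_j}\ge e^{-21/20}$. By independence of $X_i$ from the past, $u_i\ge e^{-O(1)}\,G_i(d_i)/2$, i.e.\ each coordinate gets at least a constant fraction of $G_i(d_i)$. To reach the exact constant $(1-e^{-1})/e$, the eligibility probabilities may need to be scaled by $1/2$ inside $d_i$ (this is already reflected in the rank count). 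Monotonicity and homogeneity of $W_\alpha$ then combine the three constant losses, together with the failure probability of the learning event, into the stated bound $\frac{e-1}{128e^2}$.

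\textbf{Main obstacle.} I expect the main difficulty to be the learning step: turning uniform relative approximation of the curves into a constant-factor guarantee on the nonlinear objective evaluated at the empirical optimizer. This is delicate for $\alpha=1$, where a single badly estimated agent can ruin the product. I would handle it by first restricting all optimization to $c_i\ge\delta$ via the floor, then applying the two-sided sandwich $\tfrac12G_i(q/2)\le\widehat G_i(q)\le 2G_i(2q)$ coordinatewise, so that each coordinate of the objective is distorted by at most a factor of $4$.
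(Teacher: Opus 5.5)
\textbf{Learning step.} Your relaxation is the paper's ($G_i$ is exactly $\mu_i$). The gap is the two-sided sandwich $\frac12G_i(q/2)\le\widehat G_i(q)\le2G_i(2q)$ for $q\ge\delta$: it is false, and everything downstream rests on it.

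Relative accuracy of empirical quantile \emph{levels} does not give relative accuracy of the \emph{value} of the top-$q$ quantile. That value contains the extreme tail, which $m$ samples cannot resolve. On $\mathcal E$ the empirical tail is controlled only up to an additive $\delta/4$ on the half-line $\{t:\overline F_i(t)<\delta\}$. Restricting to $q\ge\delta$ does not help, because the problem lies in the $t$-direction, not in $q$.

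Concretely, take $X_i=M$ with probability $p\ll1/m$ and $0$ otherwise, so $G_i(q/2)=G_i(2q)=pM$ for $q\ge\delta$. With probability about $1-mp$ all samples are zero and the empirical curve vanishes, violating the lower bound. With probability about $mp$ one sample equals $M$, and the empirical curve is at least $M/m\gg2pM$, violating the upper bound. If you mean the truncated curves of \eqref{eq:empirical_curves}, the lower bound already fails for $X_i=M$ with probability $\delta$, since then $\widehat G_i\equiv0$ on $\mathcal E$.

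The missing idea is that only marginal gains above a baseline are learnable. Truncating the empirical tail by $2\delta$ gives $\frac12(\mu_i(4\delta+c)-\mu_i(4\delta))\le\widehat G_i(c)\le2(\mu_i(\delta+c)-\mu_i(\delta))$ (\Cref{lem:sandwich}). The unlearnable part is paid for by the floor $f_i=\mu_i(\delta)\ge\frac14\mu_i(4\delta)$.

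But then your plug-in comparison no longer closes. The learner maximizes a function of the gains $h$ alone, so optimality yields only $\sum_ih_i^{1-\alpha}\ge\sum_ir_i^{1-\alpha}$ for the comparison gains $r$. What you need is $W_\alpha(f+h)\gtrsim W_\alpha(f+r)$, with $f$ unknown to the learner. Since $W_\alpha$ is superadditive but not subadditive, the quasi-triangle inequality gives only the factor $2^{1-1/(1-\alpha)}$. This vanishes as $\alpha\to1$, and at $\alpha=1$ nothing survives: $W_1(1,0)=W_1(0,1)=0<W_1(1,1)$.

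Uniformity in $\alpha$ is exactly the difficulty. It also undercuts your remark that $W_\alpha$ tolerates losing agents for $\alpha<1$, since $W_\alpha(1,\dots,1,0)/W_\alpha(1,\dots,1)=(1-\frac1n)^{1/(1-\alpha)}\to0$. The paper's extra ingredient is \Cref{lem:optimizer-bound}. It uses the first-order optimality condition of the maximizer over the convex set $D$ to show $W_\alpha(f+h)\ge e^{-1}W_\alpha(f+r)$ for every baseline $f$ and every $r\in D$, uniformly in $\alpha\in(0,1]$. Nothing in your proposal plays this role. Your $\alpha=1$ argument is also off termwise: a single log term can lose $\log(c_i/\delta)$, up to $\log(20n)$, and only the average is bounded, via AM--GM.

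\textbf{Implementation step.} Here too there is a genuine error. The bound $\prod_{j<i}(1-p_j)\ge e^{-\sum_jd_j}$ is backwards, since $1-p\le e^{-p}$. With targets summing to as much as $21/20$, nothing prevents an early agent $j$ from having $p_j$ near $1$; if $d_j\ge1$ then $k_j=m_{\rm imp}+1$ and $j$ is always eligible. Every later agent then receives essentially nothing. So the per-coordinate bound $u_i\gtrsim\mu_i(d_i)$ that you feed into monotonicity and homogeneity fails, fatally at $\alpha=1$.

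The paper instead solves the learned program with budget $1/4$, so that $\sum_id_i\le\frac1{20}+\frac14\le\frac13$. By a union bound, agent $i$ is then unblocked with probability at least $2/3$, independently of its own rank test. The scaling costs only a factor $4$ via concavity, $\mu_i(b_i^\star/4)\ge\frac14\mu_i(b_i^\star)$.

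Your claim that the eligible value is ``that of the top-$\frac12d_i$ quantile up to constants'' also needs an argument, because the threshold is random. The paper applies $\be[\min\{Z,k\}]\ge(1-e^{-1})\min\{\be[Z],k\}$ (\Cref{lem:binomial-tail}) to the binomial count of pool values above each level $t$, and integrates in $t$.

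Finally, the constants you list do not combine into $\frac{e-1}{128e^2}$. In the paper this is the implementation factor $\frac{1-e^{-1}}{3}$ times the learning factor $\frac{3}{128e}$. The latter includes $\Pr(\mathcal E)\ge\frac34$ and Jensen's inequality for the concave $W_\alpha$ over the training randomness.
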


The remainder of the subsection proves the theorem for the learning algorithm,  \Cref{alg:quantile-rank-threshold}.

\subsubsection{Prophet Relaxation Upper Bound}
\label{subsubsec:relaxation}

As a first step, we provide an ex-ante relaxation upper bound of the prophet benchmark. This relaxation is defined in terms of the best utility that a single agent can obtain when assigned an expected budget. 

 For each distribution $F_i$, we denote the survival function as $\overline F_i(t)=\Pr(X_i>t)$.
For each agent $i \in [n]$ and an allocation budget $b \geq 0$, we define 
\begin{equation}
\mu_i(b)
    \coloneqq
    \int_0^\infty \min\{\overline F_i(t),b\}\,dt.    
\end{equation}
This quantity $\mu_i(b)$ is the best expected utility one can extract from
agent $i$ using expected allocation probability at most $b$; it is obtained by spending this probability mass on the upper tail of $X_i$.

The next lemma provides an upper bound on the best marginal utility achievable with an expected budget of $b$. 
\begin{lemma}\label{lem:upper_bound_marginal}
For $i \in [n]$, the functions $\mu_i$ are nondecreasing and concave, and for any measurable allocation rule $\varphi:\br_+\to[0,1]$ satisfying $\be[\varphi(X_i)]\leq b$, we have $\be[\varphi(X_i)X_i]\leq\mu_i(b)$.
\end{lemma}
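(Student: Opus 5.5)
The plan is to prove the two structural properties directly from the integral definition, and then to obtain the upper bound through a layer-cake (Fubini) representation of $\be[\varphi(X_i)X_i]$.

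\textbf{Monotonicity and concavity.} For each fixed $t\ge 0$, the map $b\mapsto \min\{\overline F_i(t),b\}$ is nondecreasing and concave, since it is the minimum of a constant and a linear function. Both properties are preserved under integration against $dt$, so $\mu_i$ is nondecreasing and concave. The integral is finite because $\min\{\overline F_i(t),b\}\le \overline F_i(t)$ and $\int_0^\infty \overline F_i(t)\,dt=\be[X_i]<\infty$ by integrability. Hence $\mu_i$ is a genuine finite concave function on $[0,\infty)$.

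\textbf{The upper bound.} Since $X_i\ge 0$, we write $X_i=\int_0^\infty \ind{X_i>t}\,dt$. Tonelli's theorem applies because the integrand is nonnegative, which gives
\begin{equation*}
\be[\varphi(X_i)X_i]=\int_0^\infty \be\bigl[\varphi(X_i)\ind{X_i>t}\bigr]\,dt .
\end{equation*}
For each fixed $t$ we bound the integrand in two ways. First, because $\varphi\le 1$, we have $\be[\varphi(X_i)\ind{X_i>t}]\le \Pr(X_i>t)=\overline F_i(t)$. Second, because $\ind{X_i>t}\le 1$ and $\varphi\ge 0$, we have $\be[\varphi(X_i)\ind{X_i>t}]\le\be[\varphi(X_i)]\le b$. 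The integrand is therefore at most $\min\{\overline F_i(t),b\}$ pointwise in $t$. Integrating over $t$ yields $\be[\varphi(X_i)X_i]\le\mu_i(b)$.

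The only point requiring care is the interchange of expectation and the $dt$-integral, and Tonelli handles it by nonnegativity. I therefore do not expect a real obstacle. If desired, one can add the remark that the bound is attained by a threshold rule of the form $\varphi=\ind{X_i>\tau}$ plus randomization at an atom; the lemma as stated does not require this.
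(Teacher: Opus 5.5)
Your proposal is correct and follows essentially the same argument as the paper: you establish pointwise monotonicity and concavity of $b\mapsto\min\{\overline F_i(t),b\}$ and integrate, then use the Fubini/Tonelli layer-cake representation with the bound $\be[\varphi(X_i)\ind{X_i>t}]\le\min\{\overline F_i(t),\be[\varphi(X_i)]\}$. Your added remark that $\mu_i$ is finite by integrability of $X_i$ is a small but welcome detail that the paper leaves implicit.
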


\begin{proof}
For every fixed $t$, the map $b\mapsto \min\{\overline F_i(t),b\}$ is nondecreasing and concave. Integrating over $t$ preserves both properties. By Fubini's theorem, and using that $\varphi \leq 1$,
\begin{equation*}
    \be[\varphi(X_i)X_i]
    =
    \int_0^\infty \be[\varphi(X_i)\ind{X_i>t}]\,dt
    \leq
    \int_0^\infty \min\{\be[\varphi(X_i)],\overline F_i(t)\}\,dt
    \leq
    \mu_i(b). \qedhere
\end{equation*} 
\end{proof}

Now, define the corresponding relaxed prophet benchmark
\begin{equation*}
    \Gamma_\alpha(F)
    \coloneqq
    \sup_{b\in\br_+^n:\,\sum_i b_i\leq1}
    W_\alpha(\mu_1(b_1),\ldots,\mu_n(b_n)).
\end{equation*}

\begin{proposition}\label{prop:relaxed-prophet-upper}
For every $\alpha\in(0,1]$, $\optra\leq\Gamma_\alpha(F)$.
\end{proposition}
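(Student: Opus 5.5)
Fix an arbitrary offline algorithm $\ALG\in\mathcal L_n$. The plan is to show that its expected utility vector $u(\ALG)$ is dominated coordinatewise by $(\mu_1(b_1),\ldots,\mu_n(b_n))$ for some $b$ with $\sum_i b_i\le 1$. Monotonicity of $W_\alpha$ then gives $W_\alpha(u(\ALG))\le\Gamma_\alpha(F)$, using \Cref{lem:sch-properties} together with the fact that $W_\alpha$ is $\mathrm{SCH}$, as verified in the proof of \Cref{thm:alpha-full-info-half}. Taking the supremum over $\ALG$ yields $\optra\le\Gamma_\alpha(F)$.

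The natural choice is $b_i\coloneqq\be[\alloc_i]$, where $\alloc_i=\ALG_i(X_1,\ldots,X_n)\in[0,1]$ is the prophet's allocation to agent $i$. Feasibility $\sum_i\alloc_i\le1$ holds pointwise, so $\sum_i b_i\le 1$.

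Next I reduce to a single-variable allocation rule so that \Cref{lem:upper_bound_marginal} applies. Define $\varphi_i(x)\coloneqq\be[\alloc_i\mid X_i=x]$, taken as a measurable version of the conditional expectation; it takes values in $[0,1]$. By the tower property, $\be[\varphi_i(X_i)]=b_i$ and $\be[\alloc_iX_i]=\be[\varphi_i(X_i)X_i]$. Hence \Cref{lem:upper_bound_marginal} gives $u_i(\ALG)=\be[\alloc_iX_i]\le\mu_i(b_i)$. If the offline class allows external randomization, I include that randomness in the conditioning; the argument is unchanged, since only $\alloc_i\in[0,1]$ and measurability are used.

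The only delicate point is the choice of conditional expectation $\varphi_i$, which is routine given integrability of $X_i$. I also need $W_\alpha$ to be monotone on all of $\br_+^n$. This is direct for $\alpha\in(0,1)$, where the exponent $1-\alpha$ is positive. For $\alpha=1$ the geometric mean is monotone including at zero coordinates. So the whole proof is a short conditioning argument followed by monotonicity.
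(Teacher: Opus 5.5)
Your proposal is correct and matches the paper's proof essentially verbatim. You set $b_i=\be[\alloc_i]$, reduce to the single-variable rule $\varphi_i(x)=\be[\alloc_i\mid X_i=x]$, apply \Cref{lem:upper_bound_marginal} coordinatewise, and conclude by monotonicity of $W_\alpha$ and the supremum over $b$ in $\Gamma_\alpha(F)$.
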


\begin{proof}
Fix any prophet allocation rule $\ALG\in\mathcal L_n$, and set $b_i=\be[\alloc_i]$. The budget constraint gives $\sum_i b_i\leq1$. Let $\varphi_i(x)=\be[\alloc_i\mid X_i=x]$. Then $\be[\varphi_i(X_i)]=b_i$ and by \Cref{lem:upper_bound_marginal}
\begin{equation*}
    \be[\alloc_iX_i]=\be[\varphi_i(X_i)X_i]\leq\mu_i(b_i).
\end{equation*}
 Since $W_\alpha$ is monotone, we obtain
\begin{equation*}
    W_\alpha(u(\ALG))
    \leq
    W_\alpha(\mu_1(b_1),\ldots,\mu_n(b_n))
    \leq
    \Gamma_\alpha(F).
\end{equation*}
Taking the supremum over prophet algorithms proves the claim.
\end{proof}

\subsubsection{Learning an Offline Allocation}
\label{subsubsec:learning-relaxation}

We now show that with a sufficient number of samples we can learn a budget profile whose relaxed utility captures a constant fraction of $\Gamma_\alpha(F)$.

First, we can learn a truncated approximation of $\overline{F_i}$. Let $\overline{F}^{(m)}$ be the empirical survival function obtained from $m$ i.i.d. samples.

\begin{lemma}\label{lem:relative-tail-approx}
There is a universal constant $C>0$, such that for $\delta \in (0,1)$ and $ m\geq C\delta^{-1}(\log(1/\delta)+\log n)$ samples per distribution, the following event holds with probability at least $3/4$:
\begin{equation*}
    \mathcal E
    =
    \left\{
    |\overline{F}^{(m)}_i(t)-\overline F_i(t)|
    \leq
    \frac14\max\{\overline F_i(t),\delta\}
    \text{ for all }i\in[n],\ t\geq0
    \right\}.
\end{equation*}
\end{lemma}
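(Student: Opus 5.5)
This is a uniform relative-deviation bound for empirical survival functions. I would prove it per agent with failure probability at most $1/(4n)$ and then take a union bound over the $n$ agents. Fix $i$ and write $p(t)=\overline F_i(t)$, which is nonincreasing and right-continuous.

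The plan is to discretize the tail on a geometric grid of quantile levels and then use monotonicity to cover all $t\ge 0$.

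\textbf{Grid.} Set $q_j=2^{-j}$ for $j=0,1,\dots,J$, where $J=\lceil\log_2(1/\delta)\rceil+1$, so that $q_J\le\delta/2$. For each $j$ choose $t_j$ at the boundary where $p$ crosses $q_j$. Concretely, $t_j=\inf\{t:p(t)\le q_j\}$, and we also track the left limit $p(t_j^-)\ge q_j$. This lets us control both $\Pr(X>t_j)$ and $\Pr(X\ge t_j)$; atoms are handled by treating the strict and non-strict tails separately.

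\textbf{Fixed levels.} At each grid point, apply the multiplicative Chernoff bound to the binomial count of samples exceeding $t_j$ (or at least $t_j$). Its mean is $m\,p$ with $p\gtrsim q_j$, and we want relative error a small constant such as $1/16$. When $p\ge\delta/4$, the failure probability is at most $2\exp(-c\,m\,p)\le 2\exp(-c'm\delta)$. When $p$ is smaller, we only need an additive error $\delta/16$, which an upper-tail Chernoff bound gives with the same $\exp(-c\,m\delta)$ rate. There are $O(\log(1/\delta))$ grid points, so the total failure probability for agent $i$ is $O(\log(1/\delta))e^{-cm\delta}$. With $m\ge C\delta^{-1}(\log(1/\delta)+\log n)$ and $C$ large, this is at most $1/(4n)$.

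\textbf{Interpolation.} For arbitrary $t$ with $p(t)\in(q_{j+1},q_j]$, sandwich $t$ between consecutive grid points. Monotonicity of both $p$ and $\overline F^{(m)}_i$ then gives
\[
\overline F^{(m)}_i(t)\le(1+\tfrac1{16})\,p(t_j^-)\le(1+\tfrac1{16})\,2p(t)\quad\text{(roughly)}.
\]
This is the main obstacle: a plain dyadic grid loses a factor of $2$, which is too coarse for the target $1/4$. The fix is a finer geometric grid, $q_j=(1+\eta)^{-j}$ with $\eta=1/16$. It still has only $O(\log(1/\delta))$ points, and combining relative error $1/16$ with grid ratio $1+1/16$ yields total relative error below $1/4$. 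Points with $p(t)\le\delta$ fall into the region below the last grid level, where monotonicity together with the additive bound at $t_J$ gives
\[
\overline F^{(m)}_i(t)\le \overline F^{(m)}_i(t_J)\le q_J+\delta/16\le\delta/4+\delta/16,
\]
while the lower deviation is trivially at most $p(t)\le\delta$. To make that lower deviation fit the bound, choose $q_J\le\delta/8$.

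Collecting these cases gives $|\overline F^{(m)}_i-\overline F_i|\le\frac14\max\{\overline F_i,\delta\}$ uniformly in $t$ for agent $i$ with probability at least $1-1/(4n)$, and the union bound over agents gives probability at least $3/4$.
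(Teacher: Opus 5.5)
Your proof is correct in substance, but it takes a different route from the paper. The paper does not argue directly. It notes that the upper rays $\{(t,\infty):t\ge0\}$ have VC dimension one and invokes Har-Peled's relative $(p,\varepsilon)$-approximation theorem (Theorem 2.11(ii)) with $p=\delta$, $\varepsilon=1/4$ and failure probability $1/(4n)$, followed by the same union bound over agents. You instead give a self-contained bracketing argument: multiplicative Chernoff bounds at the points of a $(1+\eta)$-geometric grid of tail levels, glued together by monotonicity of $\overline F_i$ and $\overline F^{(m)}_i$.

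The citation is a one-liner and applies verbatim to any finite-VC family of ranges. Its costs are a footnote transferring a statement about finite range spaces to general distributions, and the $\log(1/\delta)$ term, which is the price of that generality. Your route uses only Chernoff bounds, handles atoms explicitly through left limits, and is actually sharper in one dimension. Since $q_j=q_J(1+\eta)^{J-j}$, the failure probabilities $e^{-c m q_j}$ form a rapidly decaying series summing to $O(e^{-c'm\delta})$ once $m\delta$ exceeds a constant. So $m=O(\delta^{-1}(1+\log n))$ already suffices.

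When you write it up, fix three details.

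First, for $\overline F_i(t)\in(q_{j+1},q_j]$ you have $t\in[t_j,t_{j+1})$. The upper bound must pass through the strict tail, $\overline F^{(m)}_i(t)\le\overline F^{(m)}_i(t_j)$ with $\overline F_i(t_j)\le q_j$. Then $m\overline F^{(m)}_i(t_j)$ is dominated by $\mathrm{Bin}(m,q_j)$ and exceeds $(1+\eta)mq_j$ only with probability $e^{-\Omega(mq_j)}$. It should not go through $\overline F_i(t_j^-)$ as in your display, since that can be far above $q_j$ when $t_j$ is an atom. The left limit belongs in the lower bound: $\overline F^{(m)}_i(t)\ge\frac1m\#\{k:\widetilde X_{i,k}\ge t_{j+1}\}$ with $\Pr(X_i\ge t_{j+1})\ge q_{j+1}$. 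With $\eta=1/16$ this gives relative errors $(17/16)^2-1$ and $2/17$, both below $1/4$.

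Second, in the region $t\ge t_J$ it is the upper deviation that forces $q_J\le\delta/8$. You need $q_J+\delta/16\le\delta/4$, whereas your displayed bound $\delta/4+\delta/16$ exceeds $\delta/4$. Also, only points with $\overline F_i(t)\le q_J$ land in this region; those with $q_J<\overline F_i(t)\le\delta$ are already covered by the relative bound.

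Third, the grid has $O(1+\log(1/\delta))$ points, not $O(\log(1/\delta))$. Absorbing the union bound into $C\delta^{-1}(\log(1/\delta)+\log n)$ therefore needs that expression bounded away from $0$. For $n=1$ and $\delta\to1$ the lemma as literally stated fails: a single sample is then allowed, and it cannot give the uniform bound. The paper's citation, which really yields $\log(4n)$, has the same slack. This is harmless for the choice $\delta=1/(20n)$ used in \Cref{prop:training-bound}.
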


See the proof in \Cref{app:relative-tail-approx}. We will not learn an optimal allocation directly, but rather an optimal vector of added marginal utilities by agent, considering that some baseline proportional to $\delta$ is already being allocated.

We define the shifted empirical best marginal utility for $c \in [0,1/4]$ as
\begin{equation}\label{eq:empirical_curves}
\widehat G_i(c)
    \coloneqq
    \int_0^\infty \min\{(\overline F^{(m)}_i(t)-2\delta)_+,c\}\,dt.
\end{equation}

After observing the training samples, the algorithm chooses a vector of target
activation probabilities. For $\alpha \in (0,1)$, we  define 
\begin{equation}\label{eq:learn-chat-alpha-small}
\widehat c
    \in
    \argmax_{c\in[0,1/4]^n:\sum_i c_i\le\frac14}
    \sum_{i=1}^n \widehat G_i(c_i)^{1-\alpha}.
\end{equation}
For $\alpha=1$, let
\begin{equation}\label{eq:learn-chat-nash}
\widehat c
    \in
    \argmax_{c\in[0,1/4]^n:\sum_i c_i\le\frac14}
    \sum_{\substack{i\in[n] \\
    \sup_{c\in[0,1/4]}\widehat G_i(c)>0}}\log \widehat G_i(c_i), 
\end{equation}
with the convention that if $\sup_{c\in[0,1/4]}\widehat G_i(c)=0$ for all $i \in [n]$, then output $\widehat c=0$.

The learned curves $\widehat G_i$ deliberately ignore the bottom $O(\delta)$
part of each tail. This truncation makes the empirical curves stable under a
relative tail approximation. The small baseline allocation $\delta$ then
ensures that no coordinate is completely lost, which is essential for the
Nash case $\alpha=1$.

We prove in \Cref{app:training-bound} that this learned allocation achieves a constant fraction of the relaxed benchmark when used offline. 
\begin{proposition}\label{prop:training-bound}
There exists a universal constant $C>0$ such that for $m \geq C n\log n$ samples and $\delta=1/(20n)$, the learned target marginal allocation $d=\delta+(\widehat c_1,\dots,\widehat c_n)$  is such that $\sum_{i \in [n]} d_i\leq 1/3$, and satisfies
\begin{equation*}
    W_\alpha\left(
    \be[\mu_1(d_1)],
    \ldots,
    \be[\mu_n(d_n)]
    \right)
    \ge
    \frac{3}{128e}\Gamma_\alpha.
\end{equation*}
\end{proposition}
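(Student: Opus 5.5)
We argue on the event $\mathcal E$ of \Cref{lem:relative-tail-approx}, which has probability at least $3/4$ once $m\ge Cn\log n$ with $\delta=1/(20n)$, since then $\log(1/\delta)+\log n=O(\log n)$. The expectation $\be[\mu_i(d_i)]$ is over the training randomness, and $\mu_i\ge 0$. Because $W_\alpha$ is monotone, concave and homogeneous, and hence superadditive, $W_\alpha(\be[\mu(d)])\ge \Pr(\mathcal E)\, W_\alpha(\mu(d)\mid\mathcal E)$ coordinatewise up to Jensen. For $\alpha\le 1$ I would rather write $\be[\mu_i(d_i)]\ge \tfrac34 \be[\mu_i(d_i)\mid \mathcal E]$ and then apply concavity of $W_\alpha$ (Jensen), which gives $W_\alpha(\be[\mu(d)])\ge \tfrac34 \be[W_\alpha(\mu(d))\mid\mathcal E]$. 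It therefore suffices to show $W_\alpha(\mu(d))\ge \frac{1}{32e}\Gamma_\alpha$ deterministically on $\mathcal E$. Feasibility is immediate: $\sum_i d_i\le n\delta+\frac14=\frac1{20}+\frac14<\frac13$.

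\textbf{Step 1 (sandwich of curves on $\mathcal E$).} On $\mathcal E$, whenever $\overline F_i(t)\ge\delta$ we have $\overline F^{(m)}_i\in[\frac34\overline F_i,\frac54\overline F_i]$. Whenever $\overline F_i(t)<\delta$ we have $\overline F^{(m)}_i<\frac54\delta<2\delta$, so the truncated integrand vanishes. Consequently $(\overline F^{(m)}_i-2\delta)_+\le \frac54\overline F_i$, which gives $\widehat G_i(c)\le \mu_i(\frac54 c)\le\frac54\mu_i(c)$ by concavity and $\mu_i(0)=0$. In the other direction, $(\overline F^{(m)}_i-2\delta)_+\ge \frac34\overline F_i-2\delta$, and combining this with the baseline, $\min\{\overline F_i,\delta+c\}\ge \min\{(\overline F^{(m)}_i-2\delta)_+,c\}$ up to constants. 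So I expect to prove $\mu_i(\delta+c)\ge \widehat G_i(c)$, and also $\widehat G_i(c)\ge \frac34\mu_i(c)-\mu_i(\frac{8}{3}\delta)$ type bounds. I would more cleanly aim for $\mu_i(c)\le \frac43\widehat G_i(c)+\frac{8}{3}\mu_i(\delta)$-style control.

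\textbf{Step 2 (comparison to $\Gamma_\alpha$).} Take a near-optimal $b$ for $\Gamma_\alpha$. Set $c=b/4$, which is feasible for the learned program, so $\mu_i(c)\ge\frac14\mu_i(b_i)$ by concavity. Using optimality of $\widehat c$ for the power-sum ($\alpha<1$) or the log-sum ($\alpha=1$, restricted to coordinates with nonzero curves), we get $W_\alpha(\widehat G(\widehat c))\ge W_\alpha(\widehat G(c))$. Next, $\mu_i(d_i)\ge \max\{\widehat G_i(\widehat c_i),\mu_i(\delta)\}\ge \frac12(\widehat G_i(\widehat c_i)+\mu_i(\delta))$. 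The baseline term $\mu_i(\delta)\ge \delta\,\mu_i(b_i)$-ish (by concavity, $\mu_i(\delta)\ge \delta\mu_i(1)$ since $b_i\le1$) absorbs the additive loss from Step 1. Monotonicity and superadditivity of $W_\alpha$ then give $W_\alpha(\mu(d))\ge \frac12 W_\alpha(\widehat G(c)+\mu(\delta))\ge$ const$\cdot W_\alpha(\mu(c))\ge$ const$\cdot\Gamma_\alpha$.

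\textbf{Main obstacle.} The delicate point is the additive $O(\delta)$ truncation loss. Coordinates where $\widehat G_i(c_i)$ is tiny relative to $\mu_i(c_i)$ (tails of mass $\lesssim\delta$) must be handled by the baseline $\mu_i(\delta)$. For $\alpha=1$ this is essential, since a zero coordinate kills the geometric mean. Here I would bound $\mu_i(c)$ by $\mu_i(2\delta\text{-tail part})+\frac43\widehat G_i(c)$ and use $\mu_i(2\delta)\le 2\mu_i(\delta)$. The factor $e$ should come from the Nash/power-mean comparison of the product $\prod(1+x_i)$-type losses, i.e.\ $(1+1/n)^n\le e$, when $n\delta$ shifts are spread over $n$ coordinates. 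I expect tracking these constants to produce the stated $\frac{3}{128e}$.
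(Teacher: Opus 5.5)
Your reduction to $\mathcal E$, the feasibility check and the lower sandwich $\mu_i(c)\le\frac43\widehat G_i(c)+\frac83\mu_i(\delta)$ are fine. In the upper direction your claim $\widehat G_i(c)\le\mu_i(\delta+c)$ can fail on $\mathcal E$, but $\widehat G_i(c)\le\frac54\mu_i(c)$ holds and costs only a constant.

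\textbf{The gap is in Step 2.} You pass from $W_\alpha(\widehat G(\widehat c)+\mu(\delta))$ to $W_\alpha(\widehat G(c)+\mu(\delta))$ ``by monotonicity and superadditivity''. All you know about $\widehat c$ is that it maximizes the baseline-free objective $\sum_i\widehat G_i(c_i)^{1-\alpha}$ (or the restricted log-sum); the baseline $f=\mu(\delta)$ never enters that optimization. Superadditivity only gives $W_\alpha(\widehat G(\widehat c)+f)\ge W_\alpha(\widehat G(\widehat c))+W_\alpha(f)\ge W_\alpha(\widehat G(c))+W_\alpha(f)$. Getting back up to $W_\alpha(\widehat G(c)+f)$ needs a reverse triangle inequality, which does not hold uniformly:
\begin{itemize}
\item For a power mean of order $1-\alpha\in(0,1)$ it costs a factor $2^{\alpha/(1-\alpha)}$, which blows up as $\alpha\to1$.
\item At $\alpha=1$ it fails outright, since $W_1(1,0)+W_1(0,1)=0<1=W_1(1,1)$.
\item Also at $\alpha=1$, your inequality $W_1(\widehat G(\widehat c))\ge W_1(\widehat G(c))$ reads $0\ge0$ as soon as one curve $\widehat G_i$ vanishes identically. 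That is exactly the rare-tail regime where the baseline is needed.
\end{itemize}
So your chain gives at best an $\alpha$-dependent constant on $(0,1)$ and nothing at $\alpha=1$, whereas the statement claims $\frac{3}{128e}$ uniformly on $(0,1]$.

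\textbf{The missing idea is the paper's offline optimization lemma.} Suppose $h$ maximizes $\sum_iy_i^{1-\alpha}$ (resp.\ $\sum_{i\in I}\log y_i$ over the coordinates $I$ not identically zero) over a convex compact $D\subseteq\br_+^n$. Then $W_\alpha(f+h)\ge e^{-1}W_\alpha(f+r)$ for every $r\in D$ and every $f\in\br_+^n$.
\begin{itemize}
\item It is applied with $D$ the downward closure of $\{\widehat G(c):c\in[0,1/4]^n,\ \sum_ic_i\le\frac14\}$. This set is convex because each $\widehat G_i$ is concave and nondecreasing. Take $h=\widehat G(\widehat c)$ and $r=\widehat G(b/4)$ in your notation.
\item The proof uses the first-order optimality condition along the segment from $h$ to $r$, namely $\sum_ih_i^{-\alpha}r_i\le\sum_ih_i^{1-\alpha}$ (resp.\ $\sum_{i\in I}r_i/h_i\le|I|$). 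It then applies the tangent-line bound for $x\mapsto x^{1-\alpha}$ (resp.\ $\log(1+x)\le x$).
\item The factor $e$ comes from $1+(1-\alpha)\le e^{1-\alpha}$ and from $|I|/n\le1$, not from $(1+1/n)^n$ as you guess.
\end{itemize}
With this lemma inserted, your remaining pieces do close the argument. You have $\mu(d)\ge\frac25(\widehat G(\widehat c)+\mu(\delta))$ and $\widehat G(b/4)+\mu(\delta)\ge\frac38\mu(b/4)\ge\frac3{32}\mu(b)$. Combined with the lemma and the factor $\frac34$ from $\Pr(\mathcal E)$, this gives $\frac{9}{320e}\Gamma_\alpha\ge\frac{3}{128e}\Gamma_\alpha$.

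Separately, drop the remark $\mu_i(\delta)\ge\delta\mu_i(1)$, since it would cost a factor $1/(20n)$. The right absorption is $\mu_i(\frac83\delta)\le\frac83\mu_i(\delta)$, which you also note.
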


The intuition is that $\widehat G_i(c)$ estimates the additional utility obtained by increasing agent $i$'s activation probability by $c$, after a small baseline allocation has already been reserved. The appendix formalizes this intuition in two steps: first, empirical marginal curves of the form $\widehat G_i$ approximate the corresponding true marginal gains up to constant factors; second, an offline optimization lemma shows that optimizing these marginal gains is enough, up to another universal constant, to approximate the final shifted $\alpha$-fair welfare. Together, these two ingredients imply that the optimizer $\widehat c$ learns a budget profile whose true relaxed welfare is a constant fraction of $\Gamma_\alpha$.

\subsubsection{From an Offline to Online Allocation}
\label{subsubsec:online-implementation}

It remains to implement the learned target probabilities online. We use fresh samples to convert each target probability $d_i$ into a rank threshold: agent $i$ is eligible if its realized value ranks among the top $k_i\approx (m_{\rm imp}+1)d_i$ values in an independent sample pool. By exchangeability, this implements the top-$d_i$ quantile rule up to a constant factor. Since the total target mass satisfies $\sum_i d_i\le 1/3$, the probability that an agent is blocked by an earlier eligible agent is also bounded by a constant. Thus the online implementation preserves a constant fraction of the learned marginal utilities.

For $\delta>0$ and a vector $d\in[0,1]^n$ satisfying
$d_i\geq\delta$ for every $i$ and $\sum_i d_i \leq1/3$, define $m_{\rm imp}
    =
    \left\lceil \frac{2}{\delta}\right\rceil-1$.
For each agent $i$, draw $m_{\rm imp}$ i.i.d. samples
$\widetilde X_{i,1},\ldots,\widetilde X_{i,m_{\rm imp}}$ from $F_i$, and set $k_i
    =
    \left\lfloor (m_{\rm imp}+1)d_i\right\rfloor$. When $X_i$ arrives, agent $i$ is declared eligible if $X_i$ ranks among the top
$k_i$ values of $\{X_i,\widetilde X_{i,1},\ldots,\widetilde X_{i,m_{\rm imp}}\}$, 
using symmetric tie-breaking. The algorithm then allocates the entire remaining
budget to the first eligible agent and allocates zero to all subsequent agents.
We denote the resulting allocation rule by $\widehat \alloc$. We prove in \Cref{app:rank-threshold-utility} the following conservation of marginal utility through the above online algorithm. 

\begin{proposition}\label{lem:rank-threshold-utility}
For all $i \in [n]$, $\be[\widehat \alloc_i X_i] \geq {\frac{1-e^{-1}}{3}} \mu_i (d_i)$.
\end{proposition}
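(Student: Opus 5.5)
Fix an agent $i$ and write $\widehat\alloc_i X_i$ as the product of three factors: the indicator that $i$ is eligible, the indicator that no earlier agent was eligible (so that the whole budget, namely $1$, is still available), and $X_i$. Eligibility of agent $j$ depends only on $X_j$, its own fresh samples $\widetilde X_{j,\cdot}$, and its own tie-breaking coin. These are independent across $j$, so the event ``no earlier agent eligible'' is independent of $(X_i,\widetilde X_{i,\cdot})$. This gives
\begin{equation*}
    \be[\widehat\alloc_i X_i]=\Pr(\text{no } j<i \text{ eligible})\cdot \be\bigl[X_i\,\ind{i\text{ eligible}}\bigr].
\end{equation*}
The proof then reduces to two bounds. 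First, the blocking probability should be at least a constant. Second, the eligibility rule should extract a constant fraction of $\mu_i(d_i)$.

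For the blocking term, I use exchangeability. The value $X_i$ together with the $m_{\rm imp}$ fresh samples form $m_{\rm imp}+1$ i.i.d. draws, and ties are broken symmetrically, so the rank of $X_i$ is uniform on $\{1,\dots,m_{\rm imp}+1\}$. Hence $\Pr(i \text{ eligible}) = k_i/(m_{\rm imp}+1) \le d_i$. Independence across agents then gives
\begin{equation*}
    \Pr(\text{no } j<i\text{ eligible}) = \prod_{j<i}(1-p_j) \ge 1-\sum_j d_j \ge 2/3.
\end{equation*}

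For the eligibility term, I condition on the pool. Let $\mathcal S=\{X_i,\widetilde X_{i,1},\dots,\widetilde X_{i,m_{\rm imp}}\}$ and $N=m_{\rm imp}+1$. By exchangeability, $\be[X_i\ind{\text{eligible}}] = \frac1N\be[\text{sum of top } k_i \text{ values of } \mathcal S]$. I then compare this with $\mu_i(d_i)$ using the layer-cake formula:
\begin{equation*}
    \frac1N\be[\mathrm{top}_{k_i}(\mathcal S)] = \int_0^\infty \frac1N\be\bigl[\min\{|\{s\in\mathcal S: s>t\}|,k_i\}\bigr]\,dt,
\end{equation*}
while $\mu_i(d_i)=\int_0^\infty\min\{\overline F_i(t),d_i\}\,dt$. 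The count $|\{s>t\}|$ is $\mathrm{Bin}(N,\overline F_i(t))$, so it suffices to prove the pointwise bound
\begin{equation*}
    \be[\min\{\mathrm{Bin}(N,p),k\}] \ge (1-e^{-1})\min\{Np,k\}.
\end{equation*}
This is the standard concave-rounding inequality for binomials truncated at $k$, the same one used in correlation-gap and top-$k$ arguments. One way to prove it is to note that $\min\{\cdot,k\}$ is concave and compare with the Poisson case, where $\be\min\{\mathrm{Poi}(\lambda),k\}\ge(1-k^ke^{-k}/k!)\min\{\lambda,k\}$. With $\frac1N\min\{Np,k_i\}=\min\{p,k_i/N\}$, integrating gives
\begin{equation*}
    \be[X_i\ind{\text{eligible}}] \ge (1-e^{-1})\,\mu_i(k_i/N).
\end{equation*}

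The remaining step is rounding, since $k_i=\lfloor N d_i\rfloor$. Because $N\ge 2/\delta$ and $d_i\ge\delta$, we have $Nd_i\ge 2$, so $k_i\ge Nd_i-1\ge Nd_i/2$. Thus $k_i/N\ge d_i/2$. Since $\mu_i$ is concave with $\mu_i(0)=0$ (\Cref{lem:upper_bound_marginal}), $\mu_i(d_i/2)\ge \mu_i(d_i)/2$. Combining the factors gives $\frac23\cdot\frac12\cdot(1-e^{-1})=\frac{1-e^{-1}}3$, which matches the claim.

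I expect the main obstacle to be the pointwise binomial inequality with the exact constant $1-e^{-1}$ uniformly in $N,p,k$. If a clean reference is not at hand, my first attempt is the following direct argument. The map $p\mapsto\be\min\{\mathrm{Bin}(N,p),k\}$ is concave in $p$, since its derivative is $N\Pr(\mathrm{Bin}(N-1,p)<k)$, which is decreasing in $p$. Concavity means it suffices to check the bound at $p=k/N$, where it reduces to showing that $\be[(\mathrm{Bin}(N,k/N)-k)_+]\le k/e$. This would follow from a bound on the mean absolute deviation of the binomial.
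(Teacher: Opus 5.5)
Your proposal is correct and follows the paper's proof essentially step for step. Independence of the rank tests across agents gives the factor $2/3$, exchangeability plus the layer-cake formula reduces the eligibility term to the truncated-binomial bound $\be[\min\{Z,k\}]\ge(1-e^{-1})\min\{\be[Z],k\}$, and $k_i/(m_{\rm imp}+1)\ge d_i/2$ together with concavity of $\mu_i$ finishes the argument.

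The binomial inequality is the one step you leave open, via your mean-absolute-deviation or Poisson-comparison route. The paper proves it more directly, combining the pointwise bound $\min\{z,k\}\ge k\bigl(1-(1-1/k)^z\bigr)$ with the binomial generating function $\be[(1-1/k)^Z]=(1-r/k)^N\le e^{-Nr/k}$ and the inequality $1-e^{-x}\ge(1-e^{-1})\min\{x,1\}$.
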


Putting everything together, we can conclude the proof of the sample competitive ratio result.

\begin{proof}[Proof of \Cref{thm:sample-alpha-small}]
Use $m_{\mathrm{tr}}=Cn\log(n)$ samples to learn $d$, which in particular satisfies the conditions of \Cref{lem:rank-threshold-utility}. The latter gives, for every $i\in[n]$,
\begin{equation*}
u_i(\ALG)=\be_{\mathrm{training}}\left[\be[\widehat\alloc_iX_i\mid d]\right]
\geq\frac{1-e^{-1}}{3}\be[\mu_i(d_i)].
\end{equation*}
Using monotonicity and homogeneity of $W_\alpha$, followed by \Cref{prop:training-bound},
\begin{align*}
W_\alpha(u(\ALG))\geq\frac{1-e^{-1}}{3}W_\alpha(\be[\mu(d)])
\geq\frac{1-e^{-1}}{3}\cdot\frac{3}{128e}\Gamma_\alpha(F)
=\frac{e-1}{128e^2}\Gamma_\alpha(F)
\geq\frac{e-1}{128e^2}\optra,
\end{align*}
where the last inequality uses \Cref{prop:relaxed-prophet-upper}. In total, the algorithm uses $m_{\mathrm{tr}}+m_{\rm imp}=O(n\log n)$ samples per distribution.
\end{proof}

\subsection{No Finite-Sample Algorithms for Constant Competitive Ratio with $\alpha>1$}
\label{subsec:sample-alpha-large}

We finally show that the positive result above cannot be extended beyond $\alpha=1$. 
The positive sample result relies on the fact that, for $\alpha\le1$, losses in a few low-utility coordinates can still be compensated to some extent by higher utilities elsewhere. This fails for $\alpha>1$. In this regime the welfare function behaves like a bottleneck objective: an algorithm must know which coordinate is structurally hard to serve. The following construction makes that coordinate statistically invisible to any finite-sample learner.

\begin{theorem}\label{thm:sample-impossibility-alpha-large}
For every fixed $n\in\bn$, every $m\in\bn$, and every $\alpha>1$, the $m$-sample competitive ratio for the ex-ante $\alpha$-fair welfare objective is $\compra^{n,m}(\alpha)=\frac1n$.
\end{theorem}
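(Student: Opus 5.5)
The plan is to prove a matching lower and upper bound. The lower bound $\compra^{n,m}(\alpha)\ge 1/n$ needs no samples. The uniform allocation $\alloc_i=1/n$ gives $u_i=\be[X_i]/n$. Any prophet algorithm satisfies $u_i\le\be[X_i]$ because $\alloc_i\le 1$. Monotonicity and homogeneity of $W_\alpha$ (\Cref{lem:sch-properties}) then give $W_\alpha(\be[X]/n)=W_\alpha(\be[X])/n\ge \optra/n$.

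For the upper bound I will build $n$ environments $E_1,\dots,E_n$, where $j$ indexes the ``bad'' agent. Fix small parameters $\varepsilon,q>0$. In $E_j$ every agent $i\neq j$ has $X_i=\varepsilon$ with probability $1-q$ and $X_i=1/q$ with probability $q$, so $\be[X_i]\approx 1$. The bad agent has $X_j=\varepsilon$ with probability $1-q$ and $X_j=\varepsilon$ also on the remaining mass, i.e.\ no spike. The only difference between environments is the spike of one agent. With $m$ samples per distribution, all samples equal $\varepsilon$ with probability at least $1-nmq$ in every environment. Hence the law of the training data in any two environments differs in total variation by at most $2nmq$. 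Letting $q\to0$ with $m,n$ fixed makes the environments indistinguishable from samples. On the online side, the \emph{common path} is the realization where every agent shows $\varepsilon$; it has probability at least $1-nq$ in every environment.

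Next I bound the prophet from below in $E_j$. The prophet gives the whole budget to agent $j$ on the common path, and gives the budget to a spiking agent whenever some good agent spikes. This yields $u_j\ge \varepsilon(1-nq)$ and $u_i\ge 1-nq$ for $i\ne j$, since a good agent's spike is blocked only when another spike coincides. For $\alpha>1$ the welfare of this vector is
\begin{equation*}
\Bigl(\tfrac1n\bigl(u_j^{1-\alpha}+(n-1)u_i^{1-\alpha}\bigr)\Bigr)^{\frac1{1-\alpha}}\;=\;n^{\frac1{\alpha-1}}\,\varepsilon\,(1-o(1))\quad\text{as }\varepsilon,q\to0,
\end{equation*}
because the term $\varepsilon^{1-\alpha}$ dominates the sum.

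For the online side, any online algorithm satisfies $W_\alpha(u)\le n^{1/(\alpha-1)}\min_i u_i\le n^{1/(\alpha-1)}u_j$. This follows from $\sum_i u_i^{1-\alpha}\ge u_j^{1-\alpha}$. So it suffices to show that $u_j$ is at most roughly $\varepsilon/n$ for some $j$. Agent $j$ never spikes, so $u_j=\varepsilon\,\be[\alloc_j]$. Split $\be[\alloc_j]$ into two parts. Off the common path the contribution is at most $nq$. On the common path, the allocation $\alloc_j^0(S)$ depends only on the training samples $S$ and the internal seed, since the history is all $\varepsilon$. Let $S_0$ be the all-$\varepsilon$ sample and $\bar a_j=\be[\alloc^0_j(S_0)]$ over the seed. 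Then $\be_{E_j}[\alloc_j]\le \bar a_j+2nmq+nq$, using the total variation bound and the fact that allocations lie in $[0,1]$. The budget constraint along the single common path gives $\sum_j\bar a_j\le1$, so some $j$ has $\bar a_j\le 1/n$.

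In that environment $E_j$,
\begin{equation*}
\frac{\raon}{\optra}\le\frac{\varepsilon(1/n+O(nmq))}{\varepsilon(1-o(1))},
\end{equation*}
where $\raon$ is the value of the given learning algorithm. Sending $q\to0$ and then $\varepsilon\to0$ yields the bound $1/n$ for every fixed $m$.

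The main obstacle I expect is making the bad agent truly invisible while still forcing the budget split. If the bad agent's values differed from the good agents' values on typical realizations, either the samples or the online observation itself would reveal $j$. Putting the difference entirely in a rare spike that the bad agent lacks solves both problems: the common path and the typical samples are identical across environments. I will still check carefully that the prophet's lower bound survives coincidences between spikes and the common path, and that all $o(1)$ terms vanish in the stated order of limits.
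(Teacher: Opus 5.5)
Your proposal is correct and follows essentially the paper's argument. The pieces match: the uniform-allocation lower bound, $n$ environments in which a deterministic ``bad'' agent sits among agents with rare large spikes, the bound $W_\alpha(u)\le n^{1/(\alpha-1)}\min_i u_i$, and an indistinguishability-plus-pigeonhole step that caps the bad agent's expected allocation near $1/n$. Your instance is the paper's instance rescaled by $\varepsilon$, with the spike probability decoupled as $q$. The only difference is how you formalize indistinguishability: you condition on the all-low online path and the all-low training sample, which gives an environment-independent reference allocation $\bar a_j$. The paper instead bounds the total variation distance between the full joint laws and transfers the small allocation from environment $1$.
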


\begin{proof} Let $\alpha >1$. 
The lower bound follows from the uniform allocation rule $\alloc_i=1/n$. Indeed, for every instance, this rule achieves utility vector $(\frac{\be[X_1]}{n},\ldots,\frac{\be[X_n]}{n})$, 
while no prophet can give agent $i$ expected utility more than $\be[X_i]$. By monotonicity and homogeneity of $W_\alpha$, this gives a $1/n$ competitive ratio lower bound.

We prove the matching upper bound. We construct $n$ distinct environments. For a fixed $\varepsilon>0$, define the distribution $F$ such that a random variable $Z\sim F$ takes the value $1$ with probability $1-\varepsilon$ and $1/\varepsilon^2$ with probability $\varepsilon$. For each environment $s\in\{1,\ldots,n\}$, let the rewards $(X_1^{(s)},\ldots,X_n^{(s)})$ be independent random variables where
\begin{equation*}
    X_s^{(s)}=1\quad\text{almost surely},
    \qquad
    X_i^{(s)}\sim F\quad\text{for }i\neq s.
\end{equation*}
Let $\Pr_s$ and $\be_s$ denote probability and expectation in environment $s$, over the samples, the online realizations, and the internal randomness of the learning algorithm.

\textbf{Analysis of the prophet benchmark.}
First, we show that in every environment $s$, the prophet obtains a large ex-ante $\alpha$-fair value.  In environment $s$, suppose the prophet allocates to agent $s$ whenever no other agent realizes the high value, and allocates to a high-valued agent otherwise, splitting arbitrarily if several high values occur. Under this policy, the ``bad'' agent $s$ receives expected utility at least $\be_s[\alloc_sX_s^{(s)}]\geq 1-(n-1)\varepsilon$.
For every agent $i\neq s$, conditional on $X_i^{(s)}=1/\varepsilon^2$, no other agent $j\neq s,i$ realizes the high value with probability at least $1-(n-2)\varepsilon$. In that event, the prophet can allocate the full budget to agent $i$. Hence
\begin{equation*}
    \be_s[\alloc_iX_i^{(s)}]
    \geq
    \varepsilon\cdot\frac{1}{\varepsilon^2}\cdot(1-(n-2)\varepsilon)
    =
    (1-O(n\varepsilon))\frac{1}{\varepsilon}.
\end{equation*}
Consequently,
\begin{equation*}
    \optra
    \geq
    W_\alpha(1-(n-1)\varepsilon,(1-O(n\varepsilon))\frac{1}{\varepsilon},\ldots,(1-O(n\varepsilon))\frac{1}{\varepsilon}).
\end{equation*}
Since $\alpha>1$, the terms $(\frac{1}{\varepsilon})^{1-\alpha}$ vanish as $\varepsilon\to0$. Therefore $\optra
    \geq
    n^{\frac{1}{\alpha-1}}(1-o(1))$.

For any learning algorithm $A$, the ex-ante objective in environment $s$ is bounded from above by the expected utility of the bad agent $s$:
\begin{equation*}
    W_\alpha(\be[a \cdot X^{(s)}])
    \leq
    W_\alpha\left(\be_s[\alloc_s],\be_s[\alloc_iX_i^{(s)}]_{i\neq s}\right),
\end{equation*}
and the main point of the proof is to show that, for at least one environment $s$, the learner cannot allocate much budget to this bad agent.

\textbf{Indistinguishability of environments.}
A learner with finitely many samples cannot reliably identify which coordinate is the deterministic bad agent. For each environment $s$, let $\widetilde X_{i,k}^{(s)}$ denote the $k$-th sample from the distribution of $X_i^{(s)}$, for $k\in[m]$. Any two environments $s$ and $s'$ differ only in the distributions of coordinates $s$ and $s'$. Because all rewards and samples are independent, and because the random seed has the same law in all environments, the total variation distance between $\Pr_s$ and $\Pr_{s'}$ is bounded by the sum of the distances between the differing coordinates:
\begin{equation*}
\begin{aligned}
\mathrm{TV}(\Pr_s\Vert\Pr_{s'})
\leq
\mathrm{TV}(X_s^{(s)}\Vert X_s^{(s')})
+
\mathrm{TV}(X_{s'}^{(s)}\Vert X_{s'}^{(s')})
+
\sum_{k=1}^m
\left(
\mathrm{TV}(\widetilde X_{s,k}^{(s)}\Vert \widetilde X_{s,k}^{(s')})
+
\mathrm{TV}(\widetilde X_{s',k}^{(s)}\Vert \widetilde X_{s',k}^{(s')})
\right).
\end{aligned}
\end{equation*}
We compute each individual term. In environment $s$, coordinate $s$ is deterministic equal to $1$, whereas in environment $s'$ it has distribution $F$. Therefore
\begin{align*}
\mathrm{TV}(X_s^{(s)}\Vert X_s^{(s')})
&=
\frac12
\left(
\left|\Pr_s(X_s^{(s)}=1)-\Pr_{s'}(X_s^{(s')}=1)\right|
+
\left|\Pr_s(X_s^{(s)}=1/\varepsilon^2)-\Pr_{s'}(X_s^{(s')}=1/\varepsilon^2)\right|
\right)\\
&=
\frac12(\varepsilon+\varepsilon)
=
\varepsilon.
\end{align*}
The same calculation gives $\mathrm{TV}(X_{s'}^{(s)}\Vert X_{s'}^{(s')})=\varepsilon$.
Likewise, for every sample $k\in[m]$, $\mathrm{TV}(\widetilde X_{s,k}^{(s)}\Vert \widetilde X_{s,k}^{(s')})=\varepsilon$, and $\mathrm{TV}(\widetilde X_{s',k}^{(s)}\Vert \widetilde X_{s',k}^{(s')})=\varepsilon$.
This results in the bound $\mathrm{TV}(\Pr_s\Vert\Pr_{s'})
    \leq
    2(m+1)\varepsilon$.

Since any online allocation $\alloc_i$ is a measurable function of the samples, the observations, the random seed, and takes values in $[0,1]$, the difference in expected allocations across environments is bounded by this TV distance via its variational characterization:
\begin{equation*}
\left|\be_s[\alloc_i]-\be_{s'}[\alloc_i]\right|
\leq
\sup_{\vert f\vert\leq1}
\left|
\be_{Z\sim s}[f(Z)]-\be_{Z\sim s'}[f(Z)]
\right| \notag=
2\mathrm{TV}(\Pr_s\Vert\Pr_{s'})
\leq
4(m+1)\varepsilon.    
\end{equation*}

\textbf{Deriving the $1/n$ competitive ratio.}
We conclude by an averaging argument. In any fixed environment, say $s=1$, the budget constraint requires $\sum_{i=1}^n\be_1[\alloc_i]\leq1$.
Hence there exists some environment index $s\in[n]$ for which $\be_1[\alloc_s]\leq \frac1n$.
By the indistinguishability established above, the learner cannot significantly increase this allocation in environment $s$. Formally, using the indistinguishability  bound,
\begin{equation*}
    \be_s[\alloc_s]
    \leq
    \be_1[\alloc_s]+4(m+1)\varepsilon
    \leq
    \frac1n+4(m+1)\varepsilon.
\end{equation*}
Since $X_s^{(s)}=1$ almost surely, the expected utility of the bad agent $s$ in environment $s$ is at most $\be_s[\alloc_sX_s^{(s)}]
    =
    \be_s[\alloc_s]
    \leq
    \frac1n+4(m+1)\varepsilon$.
Even if we give the learning algorithm utility $\frac{1}{\varepsilon}+1$ for all other agents, monotonicity implies
\begin{equation*}
    W_\alpha(u(A))
    \leq
    W_\alpha\left(\frac1n+4(m+1)\varepsilon,\frac{1}{\varepsilon}+1,\ldots,\frac{1}{\varepsilon}+1\right).
\end{equation*}
Letting $\varepsilon\to0$, the terms $(\frac{1}{\varepsilon}+1)^{1-\alpha}$ vanish. Therefore
\begin{equation*}
    W_\alpha(u(\ALG))
    \leq
    n^{\frac{1}{\alpha-1}-1}(1+o(1)).
\end{equation*}
Combining the prophet lower bound and the online algorithm upper bound, and then sending $\varepsilon\to0$, yields
\begin{equation*}
    \frac{W_\alpha(u(\ALG))}{\optra}
    \leq
    \frac{n^{\frac{1}{\alpha-1}-1}}{n^{\frac{1}{\alpha-1}}}
    =
    \frac1n.
\end{equation*}
Since the learning algorithm was arbitrary, the upper bound follows. Combined with the uniform-allocation lower bound, this proves the claim.
\end{proof}

\begin{remark}
The restriction $\alpha>1$ is essential for this finite-sample impossibility. For $\alpha\in(0,1]$, \Cref{thm:sample-alpha-small} gives a constant competitive ratio with $O(n\log n)$ samples per distribution.
\end{remark}

\section{Ex-Post $\alpha$-Fairness} \label{sec:ex-post-alpha}

We now return to the ex-post interpretation of fairness. For an allocation algorithm $\ALG$, the ex-post $\alpha$-fair objective is defined as
\begin{equation*}
\rp(\ALG)
\coloneqq \be[W_\alpha(a \cdot X)].
\end{equation*}
We denote the optimal online and prophet values by
\begin{equation*}
    \rpon
    \coloneqq
    \sup_{\ALG\in\mathcal A_n}\rp(\ALG),
    \qquad
    \optrp
    \coloneqq
    \sup_{\ALG\in\mathcal L_n}\rp(\ALG),
\end{equation*}
and the corresponding fixed-horizon and all-horizon competitive ratios by
\begin{equation*}
    \comprp^n(\alpha)
    \coloneqq
    \inf_{X_1,\ldots,X_n}
    \frac{\rpon}{\optrp},
    \qquad
    \comprp(\alpha)
    \coloneqq
    \inf_{n\geq1}\comprp^n(\alpha).
\end{equation*}

\subsection{A Universal Competitive Ratio Lower Bound for $\alpha \in (0,1)$}

We start by giving a full-information algorithm that yields a universal constant over the whole interval $(0,1)$.

Throughout the sections for $\alpha \in (0,1)$, we will work for ease of computation with a re-parametrized version of the problem. More specifically, we let
\begin{equation*}
q=\frac{\alpha}{1-\alpha}, \quad  Y_i=X_i^{1/q}, \quad R=\sum_{i=1}^n Y_i.
\end{equation*}
Equivalently, $\alpha=q/(q+1)$. Dropping the normalization factor, which does not affect the competitive ratio, the alpha-fair payoff can therefore be expressed as, 
\begin{equation*}
n^{q+1} W_\alpha(a \cdot X)=\Psiq(a,Y)\coloneqq \left( \sum_{i =1}^n  a_i^{1/(q+1)} Y_i^{q/(q+1)}\right)^{q+1}
\end{equation*}
With this expression in terms of the $Y_i$'s, the optimal offline value becomes clear. 
\begin{lemma}\label{lem:alpha_opt}
For $y \in \mathbb{R}^n_+$,  $\sup_{a_i\ge0,\,\sum_i a_i\le1}\Psiq(a,y)=(\sum_{i=1}^n y_i)^q$, and is achieved at $a_i=y_i/(\sum_{j=1}^n y_j)$.
\end{lemma}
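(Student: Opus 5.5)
The plan is to apply Hölder's inequality to the inner sum with conjugate exponents $p=q+1$ and $p'=(q+1)/q$. Write $S(a)\coloneqq\sum_i a_i^{1/(q+1)}y_i^{q/(q+1)}$, so that $\Psiq(a,y)=S(a)^{q+1}$. Each summand is the product of $a_i^{1/(q+1)}$ and $y_i^{q/(q+1)}$, so Hölder gives
\begin{equation*}
S(a)\le\Bigl(\sum_i a_i\Bigr)^{1/(q+1)}\Bigl(\sum_i y_i\Bigr)^{q/(q+1)}\le\Bigl(\sum_i y_i\Bigr)^{q/(q+1)}.
\end{equation*}
The last step uses the budget constraint $\sum_i a_i\le 1$. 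Since $t\mapsto t^{q+1}$ is increasing on $\br_+$, raising both sides to the power $q+1$ yields $\Psiq(a,y)\le(\sum_i y_i)^q$ for every feasible $a$.

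For attainment, set $R=\sum_j y_j$ and first assume $R>0$. Plug in $a_i=y_i/R$, which is feasible because $\sum_i a_i=1$. Each summand becomes $y_i^{1/(q+1)}R^{-1/(q+1)}y_i^{q/(q+1)}=y_i R^{-1/(q+1)}$. Summing gives $S=R^{q/(q+1)}$, and hence $\Psiq=R^q$, which matches the upper bound. If $R=0$, then every $y_i=0$, both sides vanish, and any allocation is optimal (the stated formula is read as $0/0$, i.e., any feasible choice).

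No step here is a real obstacle. The only care needed is to check that the exponents are conjugate, since $1/(q+1)+q/(q+1)=1$ and $q>0$ for $\alpha\in(0,1)$, and to handle the degenerate case $R=0$. As an alternative route, one can instead use concavity of $a\mapsto\sum_i a_i^{1/(q+1)}y_i^{q/(q+1)}$ with a Lagrange multiplier. The first-order condition gives $a_i\propto y_i$ directly, which recovers the same maximizer.
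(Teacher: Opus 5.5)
Your proposal is correct and follows essentially the same route as the paper: Hölder's inequality with exponents $q+1$ and $(q+1)/q$, then the budget constraint, then checking that $a_i=y_i/\sum_j y_j$ attains the bound. You also handle the degenerate case where all $y_i=0$ separately, exactly as the paper does.
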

\begin{proof}
Hölder's inequality gives
\begin{equation*}
\sum_i a_i^{1/(q+1)}y_i^{q/(q+1)}
\le
\big(\sum_i a_i\big)^{1/(q+1)}
\big(\sum_i y_i\big)^{q/(q+1)} 
\le
\big(\sum_i y_i\big)^{q/(q+1)}
\end{equation*}
using that for $a$ feasible, $\sum_{i=1}^n a_i \leq 1$. Raising to the power $q+1$ yields the upper bound $(\sum_{i=1}^n y_i)^q$. If all the $y_i=0$, then both sides vanish. Suppose now that at least one $y_i >0$. The allocation $a_i=y_i/(\sum_{j=1}^n y_j)$ achieves the upper bound and is feasible.
\end{proof}

\begin{theorem}\label{thm:cr_uniform}
For every $\alpha\in(0,1)$, \Cref{alg:posterior-hazard} achieves $\comprp(\alpha)\geq e^{-\pi^2/6}\approx 0.193$.
\end{theorem}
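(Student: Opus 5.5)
The plan is to reduce the ratio of expectations to an expectation of a pointwise ratio under a tilted law, bound that pointwise ratio below by an exponentiated KL divergence, and then control the expected KL for the hazard-type policy of \Cref{alg:posterior-hazard}. By \Cref{lem:alpha_opt} the prophet value is $\be[R^q]$, where $R=\sum_i Y_i$. On the event $R>0$ write $p_i=Y_i/R$, so that $p$ is the prophet's pointwise optimal allocation. Homogeneity of $\Psiq$ gives
\begin{equation*}
\Psiq(a,Y)=R^q\Big(\sum_{i=1}^n p_i^{1-s}a_i^{s}\Big)^{1/s},\qquad s=\frac{1}{q+1}.
\end{equation*}
Introduce the biased law $\mathbb Q$ by $d\mathbb Q/d\Pr=R^q/\be[R^q]$. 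The competitive ratio of any online $a$ is then exactly $\be_{\mathbb Q}\bigl[(\sum_i p_i^{1-s}a_i^s)^{1/s}\bigr]$. Note that the $X_i$ are no longer independent under $\mathbb Q$, but the policy only needs conditional expectations under $\mathbb Q$ given the past, and these are computable with full information.

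For the pointwise step, write $\sum_i p_i^{1-s}a_i^s=\sum_i p_i (a_i/p_i)^s$ and apply Jensen to the concave $\log$ with weights $p_i$:
\begin{equation*}
\sum_i p_i\Big(\frac{a_i}{p_i}\Big)^s\ \ge\ \exp\Big(-s\,\KL(p\Vert a)\Big).
\end{equation*}
Raising to the power $1/s$ gives $\Psiq(a,Y)\ge R^q e^{-\KL(p\Vert a)}$, where $a$ may be a sub-probability vector. The key feature is that the exponent $s$ cancels, which is what should make the final constant uniform in $\alpha$. A second application of Jensen gives a ratio of at least $\exp(-\be_{\mathbb Q}[\KL(p\Vert a)])$, so it suffices to show $\be_{\mathbb Q}[\KL(p\Vert a)]\le \pi^2/6$.

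The policy never exhausts its budget. It allocates $a_i=h_i\prod_{j<i}(1-h_j)$, where $h_i\in(0,1]$ is a hazard computed from $X_1,\dots,X_i$ and $h_n=1$. The prophet allocation has the same stick-breaking form with true hazards $r_i=p_i/T_i$ and tail masses $T_i=\sum_{j\ge i}p_j$. By the chain rule for KL,
\begin{equation*}
\KL(p\Vert a)=\sum_{i=1}^n T_i\,\mathrm{kl}(r_i\Vert h_i),
\end{equation*}
where $\mathrm{kl}$ denotes the Bernoulli divergence. I would take $h_i$ to be the $\mathbb Q$-posterior estimate of $r_i$ weighted by $T_i$, namely $h_i=\be_{\mathbb Q}[T_ir_i\mid X_{\le i}]/\be_{\mathbb Q}[T_i\mid X_{\le i}]$. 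This choice minimizes the conditional expected cross-entropy part of each term, leaving per-step terms of the form
\begin{equation*}
\be_{\mathbb Q}\bigl[T_i\,(\text{entropy-type gap of } r_i \text{ around } h_i)\bigr].
\end{equation*}

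The main obstacle is bounding the sum of these per-step terms by $\sum_k 1/k^2=\pi^2/6$, uniformly in $q$ and $n$. My first attempt would index steps backward by $k=n-i+1$, the number of remaining agents. I would then try to show that the $k$-th term is at most $1/k^2$, for instance by comparing with an oblivious hazard such as $1/k$ for which the divergence can be computed explicitly, and then using that the posterior choice is no worse in conditional expectation. If this per-step comparison fails, the fallback is to bound each $T_i\,\mathrm{kl}(r_i\Vert h_i)$ using $r_i\le 1$ and the convexity of $\mathrm{kl}$ in its first argument, and to telescope over the decreasing tail masses $T_i$.
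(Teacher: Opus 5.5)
Your reduction is the same as the paper's. The tilted law $\mathbb Q$ is the paper's $\Pr^{(q)}$, and the pointwise bound $\Psiq(a,Y)\ge R^q e^{-\KL(p\Vert a)}$ followed by Jensen is exactly \Cref{lem:KL_bound}. Your stick-breaking chain rule $\KL(p\Vert a)=\sum_i T_i\,\mathrm{kl}(r_i\Vert h_i)$ is also correct.

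The gap is the step that carries the whole theorem: $\be_{\mathbb Q}[\KL(p\Vert a)]\le\pi^2/6$, uniformly in $n$ and $q$. The route you sketch for it fails in three ways.

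\begin{itemize}
\item The per-step target ``the $k$-th term is at most $1/k^2$'' is false. Take $n=2$, $Y_1=1$, and $Y_2\in\{0,M\}$, with $\Pr(Y_2=M)$ tuned so that $\mathbb Q(Y_2=M)=1/2$. Then $T_1=1$ and $r_1=p_1\in\{1,1/(1+M)\}$, your $h_1\to 1/2$, and the $k=2$ term tends to $\log 2>1/4$. This is just the classical $1/2$ instance. Padding with deterministic zero agents (before the unit agent and between it and the random one) leaves $R$ unchanged and moves this $\log 2$ contribution to any position $k\ge 2$. So no position-indexed summable per-step bound can hold.
\item Your comparator is too weak. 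The oblivious hazard $1/k$ is exactly the uniform allocation $a_i=1/n$, whose divergence is $\log n-H(p)$. Termwise domination by it therefore only gives $\be_{\mathbb Q}[\KL]\le\log n$; when $p$ sits on the last agent, its $k$-th term is $\log\frac{k}{k-1}\approx 1/k$.
\item The fallback, via convexity of $\mathrm{kl}$ and telescoping in $T_i$, contains as stated no mechanism that would remove the dependence on $n$.
\end{itemize}

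The missing idea is the paper's continuous-time coupling.
\begin{itemize}
\item Conditionally on $Y$, draw independent clocks $E_i\sim\mathrm{Exp}(Y_i)$, so that $p_i=\Pr(J=i\mid Y)$ for $J=\arg\min_i E_i$.
\item Set $F(t)=\Pr^{(q)}(\min_iE_i\ge t)=\be^{(q)}[e^{-tR}]$.
\item Use the randomized allocation $\widehat a_i=F(E_i)\prod_{j<i}(1-F(E_j))$. Its conditional mean given $Y$ is the allocation of \Cref{alg:posterior-hazard}, with memoryless hazard $h(y)=\be^{(q)}[y/(y+\widetilde R)]$.
\item Jensen gives $\KL(p\Vert a)\le\be^{(q)}[-\log\widehat a_J\mid Y]$.
\item The monotone-likelihood bound $\Pr^{(q)}(J=i\mid E_i)\ge F(E_i)$ of \Cref{lem:exponential-race} transfers each loser's penalty $\log\frac{1}{1-F(E_i)}$ onto the winner.
\item $F(E_J)$ is uniform on $[0,1]$, which yields $\int_0^1\bigl(-\log u+\frac{1-u}{u}\log\frac{1}{1-u}\bigr)\,du=\pi^2/6$.
\end{itemize}

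Note also that the theorem is a statement about \Cref{alg:posterior-hazard}, not about your posterior-hazard policy. Even a completed version of your plan would not prove it as stated.

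Your observation that your $h_i$ minimizes each term $\be_{\mathbb Q}[T_i\,\mathrm{kl}(r_i\Vert h_i)]$ over $X_{\le i}$-measurable choices is correct. It shows your policy minimizes $\be_{\mathbb Q}[\KL(p\Vert a)]$ over all online policies, so it inherits the $\pi^2/6$ bound. But that only follows once the bound has been proved for the paper's memoryless hazard by the argument above.
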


We next introduce the biased probability distribution $\Pr^{(q)}$, with Radon--Nikodym derivative
\begin{equation*}
\frac{d\Pr^{(q)}}{d\Pr}=\frac{R^q}{\be[R^q]},
\qquad
\be^{(q)}[Z]=\frac{\be[ZR^q]}{\be[R^q]}.
\end{equation*}
The purpose of this biased law, is to evaluate in some ways the performance of the algorithm relative to the prophet performance on the same realization. In fact, $\be[\mathrm{ALG}]/\be[\mathrm{OPT}]=\be^{(q)}[\mathrm{ALG}/\mathrm{OPT}]$. So we will essentially  prove a lower bound on the expectation of the ratio under the biased law, which will transfer to a lower bound on the ratio of expectations for the original law.

We also define the Kullback-Leibler divergence for two discrete distributions $a$ and $a'$ supported on $[n]$ as 
\begin{equation*}
\KL(a^*\Vert a)=\sum_{i=1}^n a_i^*\log\frac{a_i^*}{a_i},
\end{equation*}
where a term with $a_i^*=0$ is interpreted as $0$, and the divergence is infinite if $a_i^*>0$ but $a_i=0$. If $a$ does not spend the entire budget, we append its unused budget as an extra coordinate, with corresponding prophet allocation $0$.

We now prove a generic upper bound that relates the KL divergence between the online allocation and the optimal offline one. The idea is, now that we are working with the biased law, we want to compare the online algorithm performance to the prophet performance realization by realization. To achieve this, we essentially want to show that the online and offline allocation policies are close to one another. Given that we can interpret fractional allocations as (sub)-probability distributions on $[n]$, an appropriate measure of distance could be found in statistics, such as $f$-divergence. In this specific problem, the distance, or the affinity, that naturally emerges is the $\alpha$-Rényi Affinity. This affinity is related to the KL affinity $\exp(-\KL)$ as the first converges to the second as $\alpha \to 1$, and we can in fact lower bound the Rényi Affinity by the KL one throughout $\alpha \in (0,1)$. This is what explains the possibility of a universal bound that does not depend on $\alpha$. 

The following lemma relates the loss in welfare to the expected KL divergence from the prophet allocation. Its proof only uses the weighted AM--GM inequality.

\begin{lemma}\label{lem:KL_bound}
Let $a$ be a feasible online allocation. Then, for every $\alpha\in(0,1)$,
\begin{equation*}
\frac{\be[W_\alpha(a\cdot X)]}{\optrp}\geq \exp\big(-\be^{(q)}[\KL(a^*\Vert a)]\big).
\end{equation*}
\end{lemma}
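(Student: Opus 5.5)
Work in the reparametrized form and compare the online and prophet values realization by realization. By \Cref{lem:alpha_opt}, the prophet value on realization $y$ is $R^q$, attained at $a^*_i=y_i/R$; since $W_\alpha$ is a positive multiple of $\Psiq(a,Y)$, $\optrp$ is proportional to $\be[R^q]$. Writing $Y_i=a^*_iR$, we get
\begin{equation*}
\Psiq(a,Y)=R^q\Big(\sum_{i} a_i^{1/(q+1)}(a_i^*)^{q/(q+1)}\Big)^{q+1}.
\end{equation*}
Hence, with $\rho(a)\coloneqq\big(\sum_i (a_i^*)^{q/(q+1)}a_i^{1/(q+1)}\big)^{q+1}$, which is a power of a Rényi-type affinity, the change of measure gives
\begin{equation*}
\frac{\be[W_\alpha(a\cdot X)]}{\optrp}=\frac{\be[R^q\rho(a)]}{\be[R^q]}=\be^{(q)}[\rho(a)].
\end{equation*}
If $\be[R^q]=0$, there is nothing to prove.

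Next, lower bound $\rho(a)$ pointwise by $\exp(-\KL(a^*\Vert a))$ using weighted AM--GM. On the support of $a^*$, which is a probability vector whenever $R>0$, write
\begin{equation*}
\sum_i (a^*_i)^{q/(q+1)}a_i^{1/(q+1)}=\sum_{i:a_i^*>0}a_i^*\Big(\frac{a_i}{a_i^*}\Big)^{1/(q+1)}.
\end{equation*}
By weighted AM--GM (Jensen for $\log$) with weights $a_i^*$, this is at least $\prod_i (a_i/a_i^*)^{a_i^*/(q+1)}=\exp(-\KL(a^*\Vert a)/(q+1))$. Raising to the power $q+1$ gives $\rho(a)\ge\exp(-\KL(a^*\Vert a))$. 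When $a_i=0$ for some $a_i^*>0$, the KL is infinite and the bound is trivial. Terms with $a_i^*=0$ are dropped, which only decreases the left side. Appending the unused budget as an extra coordinate with $a^*=0$ leaves $\KL$ unchanged, so this convention is consistent. On $\{R=0\}$ the biased law has no mass.

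Finally, apply Jensen's inequality under $\Pr^{(q)}$ with the convex function $x\mapsto e^{-x}$:
\begin{equation*}
\be^{(q)}\big[e^{-\KL}\big]\ge\exp\big(-\be^{(q)}[\KL]\big).
\end{equation*}
This inequality also holds when the expected KL is infinite. Combining the steps proves the lemma.

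The main obstacle is bookkeeping rather than analysis. We must confirm that the normalization constant $n^{q+1}$ cancels in the ratio, and that the events $R=0$ and zero allocations are handled correctly. The key inequality is just the AM--GM step.
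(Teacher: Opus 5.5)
Your proposal is correct and follows the paper's proof essentially step for step. You rewrite the ratio via the change of measure as $\be^{(q)}[\Psiq(a,Y)/R^q]$ with $a_i^*=Y_i/R$, bound the integrand pointwise below by $\exp(-\KL(a^*\Vert a))$ using weighted AM--GM with weights $a_i^*$, and finish with Jensen under $\Pr^{(q)}$. Your extra bookkeeping for $\{R=0\}$, zero allocations and the appended unused-budget coordinate is sound; the only nit is that coordinates with $a_i^*=0$ contribute exactly zero, so dropping them changes nothing.
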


\begin{proof}
By \Cref{lem:alpha_opt}, the prophet allocation is $a_i^*=Y_i/R$, and the normalization factor $n^{-(q+1)}$ cancels in the competitive ratio. Thus
\begin{equation*}
\frac{\be[W_\alpha(a\cdot X)]}{\optrp}
=\be^{(q)}\bigg[\frac{\Psiq(a,Y)}{R^q}\bigg]
=\be^{(q)}\left[\left(\sum_{i=1}^n a_i^*\left(\frac{a_i}{a_i^*}\right)^{1/(q+1)}\right)^{q+1}\right].
\end{equation*}
Using the weighted AM--GM inequality and $\sum_i a_i^*=1$ gives
\begin{equation*}
\left(\sum_{i=1}^n a_i^*\left(\frac{a_i}{a_i^*}\right)^{1/(q+1)}\right)^{q+1}
\geq \prod_{i=1}^n\left(\frac{a_i}{a_i^*}\right)^{a_i^*}
=\exp\bigg( \sum_{i=1}^n a_i^* \log( \frac{a_i}{a_i^*})\bigg)=\exp\big(-\KL(a^*\Vert a)\big).
\end{equation*}
Taking expectation under $\Pr^{(q)}$ and applying Jensen's inequality proves the claim.
\end{proof}

Hence, the above lemma suggests that if we want to obtain an algorithm with a good competitive ratio, we must choose the online allocation $a$ so that $\KL(a^*\Vert a)$ is bounded. A necessary condition for it to be true is for $a^*$ to be absolutely continuous with respect to $a$, or in other words, so that $a_i^*>0$ implies $a_i>0$. To satisfy this condition, we make it so that the online decision maker never runs out of budget. We adapt a similar idea to the single-sample algorithm, and draw a fresh sample $\widetilde R$ to mimic  the prophet performance, and allocate something close to $Y_i/(\widetilde R)$ times the remaining budget. We instead use $\be^{(q)}[Y_i/(Y_i+\widetilde R) \mid Y_i]$, with the expectation to obtain a deterministic rule, and the different normalization to make sure that this allocation remains bounded by $1$, ensuring that unless $\widetilde R=0$ we will not run out of budget. We define in \Cref{alg:posterior-hazard} the specific allocation rule. 

\begin{algorithm}[t]
\caption{Biased-Total Algorithm for $\alpha\in(0,1)$}
\label{alg:posterior-hazard}
\KwIn{The distributions of $Y_1,\ldots,Y_n$ and $q=\alpha/(1-\alpha)$.}
Let $h(y)=\be^{(q)}[y/(y+\widetilde R)]$, where $\widetilde R$ is an independent total drawn from the biased law\;
$b\gets1$\;
\For{$i=1,\ldots,n$}{
Observe $Y_i$\;
$a_i\gets b\,h(Y_i)$\;
$b\gets b-a_i$\;
}
\end{algorithm}

We next remark that the optimal allocation admits a probabilistic interpretation through exponential random variables. Conditionally on $Y$, let $E_1,\ldots,E_n$ be independent exponential random variables with  respective rates $Y_1,\ldots,Y_n$, where $E_i=\infty$ if $Y_i=0$, and let $J=\arg\min_i E_i$. Then, by the exponential race property, the probability that $E_i$ is the minimum is $Y_i/\sum_j Y_j$, so 
\[
\Pr^{(q)}(J=i\mid Y)=\Pr(J=i\mid Y)=\frac{Y_i}{R}=a_i^*.
\]
The first equality is due to the fact that once $Y$ is fixed, the exponential random variables are not affected by the bias. 
Thus, the optimal allocation is precisely the conditional distribution of the first exponential clock to ring. 
We define for all $t \geq 0$, the survival function of the minimum $E_i$ as 
\begin{equation*}
F(t)\coloneqq \Pr^{(q)}(\min_{i \in [n]} E_i \geq t).
\end{equation*}
We prove in \Cref{app:exponential-race} a lower bound on the biased probability that $i$ is the smallest exponential variable given $E_i$.

\begin{lemma}\label{lem:exponential-race}
We have $\Pr^{(q)}(J=i\mid E_i)\geq F(E_i)$. 
\end{lemma}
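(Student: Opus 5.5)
We will condition on the whole vector $Y$, where the exponential clocks are simple, and then integrate $Y$ back out under the biased law. Fix $i$ and $t\ge 0$, and write $R_{-i}=\sum_{j\neq i}Y_j$. Conditionally on $Y$, the clocks are independent and unaffected by the tilt. The event $\{J=i\}$ given $E_i=t$ is the event $\{E_j>t \ \forall j\neq i\}$, so
\begin{equation*}
\Pr^{(q)}(J=i\mid E_i=t,Y)=e^{-tR_{-i}}.
\end{equation*}
Similarly $F(t)=\be^{(q)}[e^{-tR}]$. It therefore suffices to compare $\be^{(q)}[e^{-tR_{-i}}\mid E_i=t]$ with $\be^{(q)}[e^{-tR}]$.

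\textbf{Step 1: explicit conditional law.} The joint density of $(Y,E_i)$ under $\Pr^{(q)}$ is proportional to $R^q\, Y_i e^{-tY_i}$ times the product law of $Y$ under $\Pr$. Hence $\Pr^{(q)}(J=i\mid E_i=t)=\be[R^qY_ie^{-tY_i}e^{-tR_{-i}}]/\be[R^qY_ie^{-tY_i}]$. I would further condition on $Y_i=y$. Under $\Pr$, $R_{-i}$ is independent of $Y_i$, so after conditioning the only remaining tilt on $R_{-i}$ is $(y+R_{-i})^q$. This gives
\begin{equation*}
\Pr^{(q)}(J=i\mid E_i=t,Y_i=y)=\varphi_t(y)\coloneqq \frac{\be[(y+R_{-i})^q e^{-tR_{-i}}]}{\be[(y+R_{-i})^q]}.
\end{equation*}
Consequently $\Pr^{(q)}(J=i\mid E_i=t)$ is a weighted average of $\varphi_t(y)$ over $y$. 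Likewise $F(t)=\be^{(q)}[e^{-tY_i}\varphi_t(Y_i)]$, which is an average of $e^{-ty}\varphi_t(y)$ under the biased marginal of $Y_i$.

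\textbf{Step 2: monotonicity in $y$.} For $y'>y$, the likelihood ratio $(y'+r)^q/(y+r)^q$ is nonincreasing in $r$. Hence the $(y'+\cdot)^q$-tilted law of $R_{-i}$ is stochastically dominated by the $(y+\cdot)^q$-tilted one. Since $r\mapsto e^{-tr}$ is decreasing, $\varphi_t$ is nondecreasing in $y$. It follows that $\varphi_t(y)\ge\varphi_t(0)$, where $\varphi_t(0)=\be[R_{-i}^qe^{-tR_{-i}}]/\be[R_{-i}^q]$.

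\textbf{Step 3: the main obstacle.} The goal is the pointwise bound $\varphi_t(y)\ge F(t)$ for every $y$. Integrating it against the conditional law of $Y_i$ given $E_i=t$ then proves the lemma. Because $F(t)$ is an average of $e^{-ty'}\varphi_t(y')$, it is enough to show $e^{-ty'}\varphi_t(y')\le \varphi_t(y)$ for all $y,y'$.

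When $y'\le y$, this follows directly from Step 2 and $e^{-ty'}\le 1$.

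When $y'>y$, Step 2 points the wrong way, and this is where the difficulty lies. By monotonicity it reduces to $e^{-ty'}\varphi_t(y')\le\varphi_t(0)$. The left-hand side is $\be[S^qe^{-tS}]/\be[S^q]$ with $S=y'+R_{-i}$ under $\Pr$. The right-hand side is the same functional with $S=R_{-i}$. So the task is to show that adding the constant $y'$ decreases the tilted Laplace functional $\be[S^qe^{-tS}]/\be[S^q]$.

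I would first try a coupling or likelihood-ratio argument, using that the $s^q$-tilt of $y'+R_{-i}$ stochastically dominates the $s^q$-tilt of $R_{-i}$. The tilt of the shifted variable is the shift of an $(r+y')^q$-tilt of $R_{-i}$, and that tilt is dominated by the $r^q$-tilt only by a factor that the shift must beat. If this comparison fails in general, the fallback is to avoid the pointwise bound altogether. In that case I would show the averaged inequality directly as a covariance inequality under the product measure of $(Y_i,R_{-i})$, via Harris/FKG: $e^{-tY_i}$ is decreasing while $\varphi_t$ is increasing in $Y_i$, so the biased average of their product is at most the product of the averages.
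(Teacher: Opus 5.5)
Your Steps 1 and 2 are correct and match the paper. Your $\varphi_t$ is exactly the paper's $g$, and its monotonicity is proved there by the same likelihood-ratio/covariance argument. The gap is in Step 3: neither your main route nor your fallback works.

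Take $q=1$ and $R_{-i}\in\{0,L\}$ with probability $\frac12$ each, so that $\varphi_t(y)=\frac{y+(y+L)e^{-tL}}{2y+L}$.
\begin{itemize}
\item Your reduced inequality fails. Here $\varphi_t(0)=e^{-tL}$, because the $s^q$-tilt kills the atom of $R_{-i}$ at $0$, while the tilt of $y'+R_{-i}$ keeps it. This is why your shifted comparison breaks. Meanwhile $e^{-ty'}\varphi_t(y')\approx e^{-ty'}y'/L$, so $e^{-ty'}\varphi_t(y')\le\varphi_t(0)$ fails for large $L$.
\item The pointwise target $\varphi_t(y)\ge F(t)$ is itself false. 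Let $Y_i\in\{\eta,1\}$ be equally likely and let $\nu$ be the biased marginal of $Y_i$. Then $\varphi_t(\eta)\le\eta/L+e^{-tL}$, while $F(t)\ge\nu(1)e^{-t}\varphi_t(1)\approx e^{-t}/(2L)$. This happens even though $\eta$ carries positive weight under the law of $Y_i$ given $E_i=t$.
\item The fallback also fails. The bound $F(t)\le\be^{(q)}[e^{-tY_i}]\,\be^{(q)}[\varphi_t(Y_i)]$ is true but too lossy. The reason is that the law of $Y_i$ given $E_i=t$ (weights $\propto ye^{-ty}$) need not dominate the biased law of $Y_i$. With the same $R_{-i}$, $t=1$, $Y_i\in\{1,20\}$ equally likely and $L$ large, the left side of the lemma is about $1/L$. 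But $\be^{(q)}[e^{-Y_i}]\,\be^{(q)}[\varphi_1(Y_i)]\approx\frac{21}{4e}\cdot\frac1L\approx 1.93/L$.
\end{itemize}

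The missing idea is to absorb $e^{-tY_i}$ into the reference measure instead of decoupling it from $\varphi_t$. Equivalently, compare conditioning on $\{E_i=t\}$ with conditioning on $\{E_i\ge t\}$. Let $\rho(dy)\propto e^{-ty}\,\Pr^{(q)}(Y_i\in dy)$ be the law of $Y_i$ given $E_i\ge t$. The law of $Y_i$ given $E_i=t$ is then its size-biased version $\frac{y}{\be_\rho[Y]}\rho(dy)$. Since $y$ and $\varphi_t(y)$ are both nondecreasing, Chebyshev's covariance inequality under $\rho$ gives
\[
\Pr^{(q)}(J=i\mid E_i=t)=\frac{\be_\rho[Y\varphi_t(Y)]}{\be_\rho[Y]}\ge\be_\rho[\varphi_t(Y)]=\frac{\be^{(q)}[e^{-tY_i}\varphi_t(Y_i)]}{\be^{(q)}[e^{-tY_i}]}=\frac{F(t)}{\Pr^{(q)}(E_i\ge t)}\ge F(t).
\]
This is the paper's argument. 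With it, your Steps 1 and 2 complete the proof.
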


We next use \Cref{lem:exponential-race} to bound the expected KL divergence between the prophet's allocation and our allocation under the biased law. We define an auxiliary allocation based on $F$, and show that its conditional expectation is exactly the allocation $\alloc_i$ from \Cref{alg:posterior-hazard}.  

\begin{proposition}\label{prop:hazard_CR}
The allocation rule in \Cref{alg:posterior-hazard} satisfies $\be^{(q)}[\KL(a^*\|a)]\leq\frac{\pi^2}{6}$.
\end{proposition}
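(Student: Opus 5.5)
The plan is to realize the deterministic allocation of \Cref{alg:posterior-hazard} as the conditional expectation of a randomized allocation built from the exponential clocks. Then Jensen's inequality reduces the bound to a one-dimensional integral, and that integral should evaluate to $\int_0^1 \frac{-\log(1-u)}{u}\,du=\pi^2/6$.

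\textbf{Step 1: the auxiliary allocation.} Define
\[
\tilde a_i \coloneqq F(E_i)\prod_{j<i}\bigl(1-F(E_j)\bigr).
\]
Write $F(t)=\be^{(q)}[e^{-\widetilde R t}]$, which is the survival function of the minimum under the biased law, since conditionally on $\widetilde R$ the minimum is exponential with rate $\widetilde R$. For $E_i\sim\mathrm{Exp}(y)$ we get
\[
\be[F(E_i)\mid Y_i=y]=\int_0^\infty y e^{-yt}\,\be^{(q)}[e^{-\widetilde R t}]\,dt=\be^{(q)}\!\left[\frac{y}{y+\widetilde R}\right]=h(y).
\]
Conditionally on $Y$, the clocks $E_1,\dots,E_n$ are independent, also under $\Pr^{(q)}$ because the bias depends only on $Y$. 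The product therefore factorizes, and $\be^{(q)}[\tilde a_i\mid Y]=h(Y_i)\prod_{j<i}(1-h(Y_j))=a_i$. This is the identity announced before the proposition.

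\textbf{Step 2: reduction via Jensen.} Since $\sum_i a_i^*\log a_i^*\le 0$ (and the extra unused-budget coordinate carries prophet weight $0$), we have $\KL(a^*\Vert a)\le \sum_i a_i^*(-\log a_i)$. By concavity of $\log$, $-\log a_i=-\log\be[\tilde a_i\mid Y]\le \be[-\log\tilde a_i\mid Y]$. Because $a_i^*$ is $Y$-measurable, this gives
\[
\be^{(q)}[\KL(a^*\Vert a)]\le \sum_i \be^{(q)}\bigl[a_i^*\,(-\log\tilde a_i)\bigr].
\]
Each summand splits as $-\log F(E_i)$ plus the sum over $j<i$ of $-\log(1-F(E_j))$.

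\textbf{Step 3: evaluating the integral.} The goal is to rewrite the weights $a_i^*=\Pr(J=i\mid Y)$ in terms of the event $\{J=i\}$ and then exploit the race structure. On $\{J=i\}$ we have $E_i=\min_k E_k$. Under $\Pr^{(q)}$ the variable $U\coloneqq F(\min_k E_k)$ is uniform on $[0,1]$ (assuming continuity of $F$; otherwise use a randomized version). This gives $\be^{(q)}[-\log F(E_J)]=1$. For the earlier coordinates $j<J$ we have $E_j>E_J$, so $F(E_j)\le F(E_J)=U$, and the terms $-\log(1-F(E_j))$ must be controlled. Here I would invoke \Cref{lem:exponential-race}, $\Pr^{(q)}(J=i\mid E_i)\ge F(E_i)$. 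It should show that, given the level $u$, the expected total ``hazard'' accumulated by the earlier coordinates is at most $-\log(1-u)/u$ in aggregate. Integrating over the uniform $U$ then yields
\[
\int_0^1\frac{-\log(1-u)}{u}\,du=\sum_{k\ge1}\frac1{k^2}=\frac{\pi^2}{6},
\]
and the lone $-\log F(E_J)$ term should be absorbed into the same series (it is the $k=1$ term) rather than added separately.

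\textbf{Main obstacle.} The delicate step is the passage in Step 2 and Step 3 from the $Y$-measurable weights $a_i^*$ to the event $\{J=i\}$. Jensen was applied conditionally on $Y$ only, so $a_i^*$ and $-\log\tilde a_i$ are averaged over independent copies of the clocks rather than jointly with $J$. My first attempt would be to apply Jensen in a more refined way. I would condition on $(Y,E_i)$ for the factor $F(E_i)$, and use \Cref{lem:exponential-race} to dominate $a_i^*$-weighted quantities by $F(E_i)$-weighted ones. Each term would then become $\be^{(q)}[F(E_i)\cdot g(E_i)]$ for explicit $g$, and summing over $i$ telescopes into the integral above. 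If this fails, I would fall back on bounding $\sum_i a_i^*(-\log a_i)$ directly through the closed form of $a_i$ in terms of $h$. That route uses the convexity of $-\log(1-\cdot)$ to compare with the uniform-$U$ computation.
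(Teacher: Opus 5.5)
Your Step 1 matches the paper's auxiliary allocation and the identity $\be^{(q)}[\tilde a_i\mid Y]=a_i$. Step 2, however, has a fatal gap: the quantity you reduce to is not bounded by any universal constant, so nothing in Step 3 can close it. Take deterministic $Y$ with all $Y_i>0$. Then $\Pr^{(q)}=\Pr$, $F(t)=e^{-tR}$, and $\be[-\log F(E_i)\mid Y]=R/Y_i$. Your bound is therefore $\sum_i\be^{(q)}[a_i^*(-\log\tilde a_i)]\ge\sum_i (Y_i/R)(R/Y_i)=n$.

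Two losses cause this. First, you drop the entropy $-\sum_i a_i^*\log a_i^*$, which equals $\log n$ for equal values; as long as you start from $\KL(a^*\Vert a)\le\sum_i a_i^*(-\log a_i)$, no refinement gets below $\log n$. Second, you apply Jensen conditionally on $Y$ only, which decouples the clocks inside $\tilde a_i$ from the race winner $J$.

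The missing idea is to condition jointly on $Y$ and $\{J=i\}$ and keep the factor $a_i^*$. Since $\tilde a_i\ge0$,
\[
a_i=\be^{(q)}[\tilde a_i\mid Y]\ge\be^{(q)}\bigl[\tilde a_i\mathbf 1_{\{J=i\}}\mid Y\bigr]=a_i^*\,\be^{(q)}[\tilde a_i\mid Y,J=i].
\]
Jensen under the conditional law given $(Y,J=i)$ then gives $\log(a_i^*/a_i)\le\be^{(q)}[-\log\tilde a_i\mid Y,J=i]$. Averaging with weights $a_i^*=\Pr^{(q)}(J=i\mid Y)$ yields $\KL(a^*\Vert a)\le\be^{(q)}[-\log\tilde a_J\mid Y]$. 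The $a_i^*$ produced by the indicator is exactly what cancels the entropy term, and the clocks are now evaluated jointly with $J$.

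Step 3 is only sketched, and its bookkeeping is off. \Cref{lem:exponential-race} does not give a bound conditional on the winning level $u$. It enters as a per-coordinate odds-ratio transfer from losing events to winning events. After enlarging $\sum_{j<J}$ to $\sum_{j\ne J}$, the lemma gives $\Pr^{(q)}(J\ne i\mid E_i)\le\frac{1-F(E_i)}{F(E_i)}\Pr^{(q)}(J=i\mid E_i)$, hence
\[
\be^{(q)}\Bigl[\mathbf 1_{\{J\ne i\}}\log\tfrac{1}{1-F(E_i)}\Bigr]\le\be^{(q)}\Bigl[\mathbf 1_{\{J=i\}}\tfrac{1-F(E_i)}{F(E_i)}\log\tfrac{1}{1-F(E_i)}\Bigr].
\]
Summing over $i$ leaves a function of $F(E_J)$, which is uniform under the biased law. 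The resulting integrand is $-\log u+\frac{1-u}{u}\log\frac{1}{1-u}$, whose two pieces integrate to $1$ and $\pi^2/6-1$.

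Your proposed aggregate weight $-\log(1-u)/u$ for the earlier coordinates, added to the separate $-\log u$ term, would give $1+\pi^2/6$. The claim that the $-\log F(E_J)$ term is ``absorbed'' as the $k=1$ term of the series is correct only with the weight $\frac{1-u}{u}$, not $\frac1u$.
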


\begin{proof}
Consider the auxiliary allocation $\widehat a_i=F(E_i)\prod_{j<i}(1-F(E_j))$. It allocates a fraction $F(E_i)$ of the remaining budget, which is a lower bound on agent $i$'s winning probability conditional on its $E_i$. We can express $F(t)$ as an expectation using independence: 
\begin{equation*}
F(t)=\Pr^{(q)}(\min_i E_i\geq t)=\be^{(q)}[\Pr^{(q)}(E_i\geq t\text{ for every }i\mid Y)]=\be^{(q)}[\prod_i\Pr^{(q)}(E_i\geq t\mid Y)]
=\be^{(q)}[e^{-tR}].
\end{equation*}
Hence, for $y>0$, we can show that the conditional expected target auxiliary allocation exactly corresponds to our target allocation. 
\begin{align*}
\be^{(q)}[F(E_i)\mid Y_i=y]=\int_0^\infty ye^{-yt}F(t)\,dt=\int_0^\infty ye^{-yt}\be^{(q)}[e^{-tR}]\,dt&=\be^{(q)}\left[\int_0^\infty ye^{-yt}e^{-tR}\,dt\right]\\
&=\be^{(q)}\left[\int_0^\infty ye^{-t(y+R)}\,dt\right]\\
&=\be^{(q)}\left[\frac{y}{y+R}\right]=h(y),
\end{align*}
and the identity also holds at $y=0$. Conditional independence of the exponential random variables then gives
\[
\be^{(q)}[\widehat a_i\mid Y]
=h(Y_i)\prod_{j<i}(1-h(Y_j))
=a_i.
\]
Thus, the algorithm is the conditional average of the auxiliary allocation. Since $\Pr^{(q)}(J=i\mid Y)=a_i^*$,
\begin{equation*}
a_i
=\be^{(q)}[\widehat a_i\mid Y]
\geq\be^{(q)}[\widehat a_i\mathbf{1}_{\{J=i\}}\mid Y]
=a_i^*\be^{(q)}[\widehat a_i\mid Y,J=i].
\end{equation*}
Then, we can lower bound the expected log auxiliary allocation by Jensen's inequality
\begin{equation*}
\be^{(q)}[-\log\widehat a_i\mid Y,J=i]
\geq-\log\be^{(q)}[\widehat a_i\mid Y,J=i]
\geq-\log\left(\frac{a_i}{a_i^*}\right)
=\log\left(\frac{a_i^*}{a_i}\right).
\end{equation*}
Multiplying by $a_i^*$ and summing the previous inequality, we can upper bound the $\KL$:
\begin{equation*}
\KL(a^*\|a)
=\sum_{i}a_i^*\log\left(\frac{a_i^*}{a_i}\right)\leq\sum_{i}\Pr^{(q)}(J=i\mid Y)
\be^{(q)}[-\log\widehat a_i\mid Y,J=i]
=\be^{(q)}[-\log\widehat a_J\mid Y].
\end{equation*}

In addition, we have that $F(E_J)$ is uniform, as $F$ is the survival function of $E_J$ under the biased law.

By the definition of $\widehat a$,
\begin{align*}
-\log\widehat a_J
=-\log\big(F(E_J)\prod_{i<J}(1-F(E_i))\big)
&=-\log F(E_J)+\sum_{i<J}\log\frac{1}{1-F(E_i)}\\
&\leq-\log F(E_J)+\sum_{i \in [n]} \ind{i \neq J}\log\frac{1}{1-F(E_i)}.
\end{align*}

By \Cref{lem:exponential-race}, $\Pr^{(q)}(J=i\mid E_i)\geq F(E_i)$. Consequently, on the event $F(E_i)>0$,
\begin{align*}
\Pr^{(q)}(J\neq i\mid E_i)
=1-\Pr^{(q)}(J=i\mid E_i)
&\leq\frac{\Pr^{(q)}(J=i\mid E_i)}{F(E_i)}
-\Pr^{(q)}(J=i\mid E_i)\\
&=\Pr^{(q)}(J=i\mid E_i)\frac{1-F(E_i)}{F(E_i)}.
\end{align*}
When $F(E_i)=0$, $E_i$ is infinite, so $J\neq i$ and the corresponding loss $\log(1/(1-F(E_i)))$ is zero. 
For each $i$, conditioning on $E_i$, applying the preceding inequality, and then using the tower property again gives
\begin{align*}
\be^{(q)}\left[\mathbf{1}_{\{J\neq i\}}\log\frac{1}{1-F(E_i)}\right]&=\be^{(q)}\left[
\be^{(q)}\left[\mathbf{1}_{\{J\neq i\}}\log\frac{1}{1-F(E_i)}
\,\middle|\,E_i\right]\right]\\
&=\be^{(q)}\left[
\be^{(q)}[\mathbf{1}_{\{J\neq i\}}\mid E_i]
\log\frac{1}{1-F(E_i)}\right]\\
&=\be^{(q)}\left[
\Pr^{(q)}(J\neq i\mid E_i)\log\frac{1}{1-F(E_i)}\right]\\
&\leq\be^{(q)}\left[
\Pr^{(q)}(J=i\mid E_i)
\frac{1-F(E_i)}{F(E_i)}\log\frac{1}{1-F(E_i)}\right]\\
&=\be^{(q)}\left[
\be^{(q)}[\mathbf{1}_{\{J=i\}}\mid E_i]
\frac{1-F(E_i)}{F(E_i)}\log\frac{1}{1-F(E_i)}\right]\\
&=\be^{(q)}\left[
\mathbf{1}_{\{J=i\}}\frac{1-F(E_i)}{F(E_i)}
\log\frac{1}{1-F(E_i)}\right].
\end{align*}
Here the factors involving $F(E_i)$ can be taken outside conditional expectations given $E_i$, since they are functions of $E_i$.

Taking expectations in the initial bound and summing these inequalities, we obtain
\begin{align*}
\be^{(q)}[-\log\widehat a_J]
&\leq\be^{(q)}[-\log F(E_J)]
+\sum_i\be^{(q)}\left[
\mathbf{1}_{\{J\neq i\}}\log\frac{1}{1-F(E_i)}\right]\\
&\leq\be^{(q)}[-\log F(E_J)]
+\sum_i\be^{(q)}\left[
\mathbf{1}_{\{J=i\}}\frac{1-F(E_i)}{F(E_i)}
\log\frac{1}{1-F(E_i)}\right]\\
&=\be^{(q)}\left[
-\log F(E_J)
+\frac{1-F(E_J)}{F(E_J)}
\log\frac{1}{1-F(E_J)}\right].
\end{align*}
The last equality holds because exactly one index satisfies $J=i$, so the sum retains only the term evaluated at $E_J$. Because $F$ is the survival function of $E_J$, $F(E_J)$ is a uniform random variable on $[0,1]$ under the biased law. Putting everything together,
\begin{equation*}
\be^{(q)}[\KL(a^*\Vert a)]
\leq\int_0^1\left(-\log u+\frac{1-u}{u}\log\frac{1}{1-u}\right)\,du.
\end{equation*}
The integral can be evaluated precisely. Write $(1-u)/u=1/u-1$ and use by symmetry $\int_0^1-\log u\,du=\int_0^1-\log(1-u)\,du=1$. This gives
\begin{align*}
\int_0^1\left(-\log u+\frac{1-u}{u}\log\frac{1}{1-u}\right)\,du&=1+\int_0^1\frac{-\log(1-u)}{u}\,du-1
=\int_0^1\sum_{k=1}^{\infty}\frac{u^{k-1}}{k}\,du\\
&=\sum_{k=1}^{\infty}\frac1k\int_0^1u^{k-1}\,du=\sum_{k=1}^{\infty}\frac{1}{k^2}
=\frac{\pi^2}{6}. \qedhere
\end{align*}
\end{proof}

\subsection{A Single Sample Competitive Ratio Lower Bound For $\alpha \in (0,1)$}\label{sec:one-sample}

In this section, we prove that  some sample prophet inequality guarantee is possible, with only having access to a single sample per distribution.

Throughout, we suppose that $\alpha$, and therefore $q$ is known. Let $\widetilde X_i$ be one independent sample following the same law as $X_i$, or equivalently let $\widetilde Y_i$ be one independent sample with the same law as $Y_i$. We suppose that the $\widetilde Y_i$ are observed before having to allocate the resource online for the $Y_i$. We define $\widetilde R= \sum_{i=1}^n \widetilde Y_i$. Thus $\widetilde R$ is an independent copy of $R$. 

\begin{algorithm}[t]
\SetAlgoNoLine
\KwIn{Independent samples $\widetilde X_1,\dots,\widetilde X_n$ with $\widetilde X_i \sim X_i$, online observations $X_1,\ldots,X_n$.}
\KwOut{A feasible online allocation $a_1,\ldots,a_n$.}

Draw $\xi\sim \mathrm{Ber}(1/2)$;

\eIf{$\xi=1$}{
Let $T=\max_{j\in[n]}\widetilde X_j$;
Allocate the entire budget to the first item $i$ such that $X_i\ge T$
}{
Let $\widetilde R=\sum_{j=1}^n \widetilde X_j^{(1-\alpha)/\alpha}$

\eIf{$\widetilde R=0$}{
Allocate nothing
}{
Allocate $\displaystyle a_i=\min \{ \frac{X_i^{(1-\alpha)/\alpha}}{\widetilde R}, \text{remaining budget}\}$, until the budget is exhausted.
}
}
\caption{Single-Sample Algorithm for $(0,1)$-Ex-Post Fairness}
\label{alg:single-sample-spike-fill}
\end{algorithm}

The ex-post prophet value for $\alpha<1$ is governed by the aggregate
$R=\sum_iY_i$. This aggregate can be large for two different reasons: either
one coordinate is very large, or many coordinates contribute moderate mass.
The single-sample algorithm handles these two cases separately by mixing a
spike-capturing threshold rule with a filling rule based on the sampled
aggregate, and is defined in \Cref{alg:single-sample-spike-fill}. 

The first algorithm is the same as the single sample algorithm in the classical utilitarian setting, and simply allocates all the mass to the first item $Y_i \geq \max_{j \in [n]} \widetilde Y_j$, its goal is to capture realizations when most of the value is concentrated on one coordinate, that is to say when $\max_{i \in [n]} Y_i$ is large. It is known from \cite{rubinstein2020optimal} that this algorithm achieves at least $\frac{1}{2}\be[\max_{i \in [n]} X_i]=\frac{1}{2}\be[\max_{i \in [n]} Y_i^q] $ in expectation. 

The second algorithm, which we will call the Sample-Target-Filling algorithm, stems from the following observation: if $R$ were known in advance, we could do the optimal allocation in an online fashion. Of course, $R$ is not known, even if we had access to the full distributions, so instead we simulate it through $\widetilde R$: the second algorithm thus treats $\widetilde R$ as the true realization of $R$, and allocates $a_i=Y_i/\widetilde R$ until either the sequence finishes or the online decision maker runs out of budget first. The goal of this algorithm is to handle realizations where the value is spread out over multiple coordinates.

\begin{theorem} \label{thm:single_sample_CR}
We have for all $\alpha \in (0,1)$, that $\comprp(\alpha) \geq \frac{1}{4 \cdot 3^{\frac{\alpha}{1-\alpha}}}$, and this is achieved by the single-sample \Cref{alg:single-sample-spike-fill}. 
\end{theorem}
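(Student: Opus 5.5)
We work in the reparametrized form: by \Cref{lem:alpha_opt}, the prophet value is (up to the normalization $n^{-(q+1)}$) $\optrp\propto\be[R^q]$ with $R=\sum_i Y_i$ and $q=\alpha/(1-\alpha)$. Since the algorithm flips a fair coin, its value is $\frac12 V_{\mathrm{thr}}+\frac12 V_{\mathrm{fill}}$. The target is therefore
\begin{equation*}
V_{\mathrm{thr}}+V_{\mathrm{fill}}\ge \tfrac{1}{2\cdot 3^q}\,\be[R^q].
\end{equation*}

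\textbf{Step 1 (the two regimes).} Set $M=\max_i Y_i$ and decompose $R^q$ by comparing $M$ to the rest. Pointwise, $R^q\le (2\max\{M,R-M\})^q$. One route is to split on the event $\{M\ge R/3\}$ versus its complement. A cleaner route uses the bound $R^q\le 3^q\max\{M^q, S^q\}$ for a suitable ``spread'' statistic $S$, so that
\begin{equation*}
\be[R^q]\le 3^q\bigl(\be[M^q]+\be[S^q]\bigr).
\end{equation*}
The constant $3^q$ should come out of this moment decomposition. For the threshold half, any integral allocation giving the full budget to item $i$ yields $\Psiq=Y_i^q=X_i$. Hence \citet{rubinstein2020optimal} gives $V_{\mathrm{thr}}\ge\frac12\be[\max_i X_i]=\frac12\be[M^q]$.

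\textbf{Step 2 (filling half).} In the filling rule, $a_i=Y_i/\widetilde R$ until the budget is exhausted. On the event $\widetilde R\ge R$ the budget never binds, so every $a_i=Y_i/\widetilde R$. Then
\begin{equation*}
\Psiq(a,Y)=\Bigl(\sum_i Y_i\,\widetilde R^{-1/(q+1)}\Bigr)^{q+1}=R^{q+1}/\widetilde R.
\end{equation*}
On the event $\widetilde R<R$, the items up to the exhaustion point receive $Y_i/\widetilde R$ and use total mass $1$. The items served then have aggregate at least $\widetilde R-(\text{last item})$, which gives value at least $\widetilde R^q$ minus a spike correction; the spike is what the threshold half pays for. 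Using the symmetry of $(R,\widetilde R)$ as i.i.d.\ copies, I aim for
\begin{equation*}
V_{\mathrm{fill}}+\be[M^q]\gtrsim \be\bigl[\min(R,\widetilde R)^q\bigr]\cdot c,
\end{equation*}
and then bound $\be[\min(R,\widetilde R)^q]$ from below against $\be[R^q]$ via $R\le\widetilde R$ versus $R>\widetilde R$ exchangeability. On $\{R\le\widetilde R\}$, use $R^{q+1}/\widetilde R\ge R^q\cdot R/\widetilde R$.

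\textbf{Main obstacle.} Comparing $\be[\min(R,\widetilde R)^q]$ to $\be[R^q]$ fails in general for heavy-tailed $R$: an independent copy is typically much smaller. I expect to fix this by making the spread statistic $S$ concentrated enough that truncation handles it. Concretely, $S$ would be $R$ with its largest coordinate removed, or with $Y_i$ truncated at the sampled threshold $\max_j\widetilde Y_j$. Values above the threshold go to the spike algorithm, and values below it are controlled, so the filling rule loses only a factor from the last partially served item. That loss contributes a factor $(1/3)$ inside the $q$-th power, i.e.\ $3^{-q}$. Combining with the coin flip gives $\frac12\cdot\frac12\cdot 3^{-q}=1/(4\cdot 3^{\alpha/(1-\alpha)})$.
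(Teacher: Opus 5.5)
Your skeleton matches the paper's:
\begin{itemize}
\item a fair coin between the two rules;
\item the single-sample threshold guarantee giving $V_{\mathrm{thr}}\ge\frac12\be[\max_i Y_i^q]$;
\item a moment decomposition that costs $3^q$.
\end{itemize}
However, the step that carries the theorem is missing. Your Step 1 never specifies the spread statistic $S$. The truncation fix you propose for the heavy-tail obstacle is not proved, and it is not clearly applicable to the filling rule as stated, which uses the untruncated $\widetilde R$.

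\textbf{The missing decomposition.} $S=\min\{R,\widetilde R\}$ works directly, through the tail inequality
\[
\Pr(R\ge 3t)\le \Pr\bigl(\max_i Y_i\ge t\bigr)+\Pr(R\ge t)^2 .
\]
To prove it, work on $\{R\ge 3t,\ \max_i Y_i<t\}$ and let $\tau$ be the first index with $\sum_{k\le\tau}Y_k\ge t$.
\begin{itemize}
\item Since every $Y_k<t$, the overshoot is below $t$. So the partial sum at $\tau$ is below $2t$, and therefore $\sum_{j>\tau}Y_j>t$.
\item The event $\{\tau=i\}$ depends only on $Y_1,\dots,Y_i$, while the remaining mass depends only on later coordinates.
\item Independence then bounds this probability by $\sum_i\Pr(\tau=i)\Pr(R\ge t)=\Pr(R\ge t)^2=\Pr(\min\{R,\widetilde R\}\ge t)$.
\end{itemize}
Integrating against $q3^qt^{q-1}\,dt$ gives $\be[R^q]\le 3^q\bigl(\be[\max_iY_i^q]+\be[\min\{R,\widetilde R\}^q]\bigr)$.

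This removes your obstacle. You never need to compare $\be[\min\{R,\widetilde R\}^q]$ with $\be[R^q]$ on its own, because the heavy tail is absorbed by the max term. The factor $3^q$ comes from this first-passage/overshoot argument, not from any loss in the filling rule.

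\textbf{The filling half has no spike correction.} Your claim that the filling rule loses a correction from the last partially served item is incorrect.
\begin{itemize}
\item On $\{\widetilde R\le R\}$ the budget is exhausted, and every item, including the last one, has $a_i\le Y_i/\widetilde R$. Hence $\Psiq(a,Y)\ge\bigl(\sum_i a_i\widetilde R^{q/(q+1)}\bigr)^{q+1}=\widetilde R^q$, with no loss.
\item You also do not need the branch $\{R<\widetilde R\}$, where $R^{q+1}/\widetilde R$ can be tiny. Run the same rule with the roles of $Y$ and $\widetilde Y$ exchanged. By exchangeability this process has the same expected value, and the two runs sum pointwise to at least $\min\{R,\widetilde R\}^q$.
\item Therefore $V_{\mathrm{fill}}\ge\frac12\be[\min\{R,\widetilde R\}^q]$.
\end{itemize}
Combining the pieces gives
\[
\tfrac12V_{\mathrm{thr}}+\tfrac12V_{\mathrm{fill}}\ge\tfrac14\bigl(\be[\max_iY_i^q]+\be[\min\{R,\widetilde R\}^q]\bigr)\ge\frac{1}{4\cdot 3^q}\be[R^q],
\]
which is the claimed bound.
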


We first give a guarantee for this second algorithm. 

\begin{lemma}\label{lem:STF_perf}
The Sample-Target-Filling algorithm achieves at least $\frac{1}{2}\be[\min\{ R^q,\widetilde R^q\}]$ in expectation. 
\end{lemma}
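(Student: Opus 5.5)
The plan is a pointwise lower bound on the payoff of the Sample-Target-Filling rule in terms of $R$ and $\widetilde R$, followed by a symmetrization argument that uses the exchangeability of $(R,\widetilde R)$. I work in the normalized payoff $\Psiq(a,Y)$, since the competitive ratio is unaffected by the normalization. When $\widetilde R=0$, the bound $\frac12\min\{R^q,\widetilde R^q\}=0$ is trivial, so I assume $\widetilde R>0$.

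\emph{Step 1: a pointwise lower bound.} By construction $a_i\le Y_i/\widetilde R$ for every $i$, so $Y_i/a_i\ge\widetilde R$ whenever $a_i>0$. Rewriting the payoff as
\begin{equation*}
\Psiq(a,Y)=\Big(\sum_{i:\,a_i>0} a_i\,(Y_i/a_i)^{q/(q+1)}\Big)^{q+1},
\end{equation*}
each factor $(Y_i/a_i)^{q/(q+1)}$ is at least $\widetilde R^{q/(q+1)}$. This gives
\begin{equation*}
\Psiq(a,Y)\ \ge\ \Big(\sum_i a_i\Big)^{q+1}\widetilde R^{\,q}.
\end{equation*}
The rule allocates exactly $Y_i/\widetilde R$ until the budget runs out, so $\sum_i a_i=\min\{R/\widetilde R,1\}$. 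Hence
\begin{equation*}
\Psiq(a,Y)\ \ge\ \frac{\min\{R,\widetilde R\}^{q+1}}{\widetilde R}.
\end{equation*}
In the non-exhausting case $R\le\widetilde R$ this reads $R^{q+1}/\widetilde R$. In the exhausting case $R>\widetilde R$ it reads $\widetilde R^{\,q}$.

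\emph{Step 2: symmetrization.} Since $\widetilde R$ is an independent copy of $R$, the pair $(R,\widetilde R)$ is exchangeable. Set $g(x,y)=\min\{x,y\}^{q+1}/y$. Then $\be[g(R,\widetilde R)]=\frac12\be[g(R,\widetilde R)+g(\widetilde R,R)]$. Writing $x=\min\{R,\widetilde R\}$ and $y=\max\{R,\widetilde R\}$, the symmetric sum is
\begin{equation*}
g(R,\widetilde R)+g(\widetilde R,R)=\frac{x^{q+1}}{y}+\frac{x^{q+1}}{x}=x^q\Big(1+\frac{x}{y}\Big)\ \ge\ x^q.
\end{equation*}
Taking expectations yields $\be[\Psiq(a,Y)]\ge\frac12\be[\min\{R,\widetilde R\}^q]=\frac12\be[\min\{R^q,\widetilde R^q\}]$, which is the claim.

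\emph{Expected obstacle.} The step I expect to be delicate is the pointwise bound in the exhausting case. A naive attempt tracks the prefix of fully served items and the partially served last item, and such bookkeeping becomes messy. The observation that avoids it is that every allocated unit earns at least rate $\widetilde R$ per unit of budget, because $Y_i/a_i\ge\widetilde R$. This reduces the bound to the total amount spent, which is simply $\min\{R/\widetilde R,1\}$.

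The remaining details are routine. Items with $a_i=0$ are dropped from the sum. The Hölder-type bound must be written carefully when $q$ is large, but it only uses that $t\mapsto t^{q/(q+1)}$ is increasing. Ties $R=\widetilde R$ fall into either case of Step 1 consistently.
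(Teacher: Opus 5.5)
Your proof is correct and is essentially the paper's argument. The paper likewise uses $Y_i\ge a_i\widetilde R$ to get $\Psiq(a,Y)\ge\widetilde R^{\,q}$ whenever the budget is exhausted ($\widetilde R\le R$), and then symmetrizes by swapping the roles of samples and values and invoking exchangeability. The only difference is that you keep the exact non-exhausting value $R^{q+1}/\widetilde R$, which the paper bounds below by $0$; this gives a marginally sharper bound that is not needed, and you should set $g(x,0)\coloneqq 0$ so that the symmetrization is well defined when $\widetilde R=0$.
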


\begin{proof}
If $\widetilde R=0$, allocate nothing. Suppose that $ \widetilde R \leq  R$. Then the algorithm runs out of budget, meaning that $\sum_{i =1}^n a_i=1$, and $a_i  \leq Y_i/\widetilde R$ so that $Y_i \geq a_i \widetilde R$. Hence
\begin{equation*}
\left( \sum_{i=1}^n a_i^{\frac{1}{q+1}} Y_i^{\frac{q}{q+1}} \right)^{q+1}\geq\left( \sum_{i=1}^n a_i \widetilde R^{\frac{q}{q+1}}\right)^{q+1}=\widetilde R^q \left( \sum_{i=1}^n a_i \right)^{q+1}=\widetilde R^{q}.
\end{equation*}
Now, suppose artificially that we were to run the same algorithm, but taking $Y_i$ as the samples and $\widetilde Y_i$ as the values relevant for the allocation. Then, if $R \leq \widetilde R$ the algorithm achieves at least $ R^{q}$. As a consequence, the performance of the sum of these two processes, using $\widetilde Y_i$ for samples and $Y_i$ for the allocation and vice versa, is at least $\min\{ R^q, \widetilde R^q\}$. Hence, in expectation, the sum achieves $\be[\min\{ R^q, \widetilde R^q\}]$. By exchangeability and symmetry, the expected performance of these algorithms is the same, and therefore if the expected sum of those two processes is at least $\be[\min\{ R^q, \widetilde R^q\}]$, then the expected performance of the Sample-Target-Filling algorithm is $\frac{1}{2}\be[\min\{ R^q, \widetilde R^q\}]$. 
\end{proof}

We now prove a moment decomposition upper bound on the prophet performance. See the proof in \Cref{app:moment-decomp}.

\begin{lemma}\label{lem:moment-decomp}
For every $q >0$,  $\be[R^q]
\le
3^q\left(
\be[\max_{i \in [n]}Y_i^q]
+
\be[\min\{R,R'\}^q]
\right)$.
\end{lemma}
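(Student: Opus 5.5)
The plan is to prove the inequality by splitting $\be[R^q]$, according to how $R$ compares with the independent copy $R'$ and with $M\coloneqq\max_i Y_i$, into pieces that are each at most $3^q$ times one of the two terms on the right. Here $R'=\sum_i Y_i'$ is built from an independent copy $Y'$ of $Y$. The first step is pointwise. On the event $\{R\le 3R'\}$ we have $\min\{R,R'\}\ge R/3$, so $R^q\ind{R\le 3R'}\le 3^q\min\{R,R'\}^q$. Similarly, on $\{M\ge R/3\}$ we have $R^q\le 3^qM^q$. It therefore remains to control
\begin{equation*}
\be\left[R^q\ind{R>3R',\,M<R/3}\right]
\end{equation*}
by $3^q\be[\min\{R,R'\}^q]$, and this remaining term is where all the difficulty lies. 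If that works, the three pieces combine into the claimed bound, possibly with some slack to absorb.

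For the residual event, the idea is to exploit the fact that when no single coordinate exceeds $R/3$, the mass of $R$ is spread out and can be split in a balanced way. Let $k(Y)$ be the first index at which the prefix sum $P_k=\sum_{i\le k}Y_i$ reaches $R/3$. Since every coordinate is below $R/3$, we get $P_k\le 2R/3$, so both $P_k$ and the suffix $S_k=R-P_k$ are at least $R/3$. For a fixed $k$, swapping the suffix with the independent copy produces the pair $(P_k+S'_k,\;P'_k+S_k)$, which has the same joint law as $(R,R')$ by independence and identical marginals coordinatewise. On the event $\{k(Y)=k\}$ the minimum of this pair is at least $\min\{P_k,S_k\}\ge R/3$. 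Summing over $k$ would then heuristically give $\be[R^q\ind{M<R/3}]\le 3^q\sum_k \be[\min(\text{swapped pair})^q\ind{k(Y)=k}]$, and I would try to bound the right-hand side by $3^q\be[\min\{R,R'\}^q]$ using law invariance.

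The main obstacle is exactly this last step. The indicator $\ind{k(Y)=k}$ depends on $Y$, both through the prefix and through the total $R$, so it is not invariant under the swap. Law invariance alone therefore does not turn the sum into $\be[\min\{R,R'\}^q]$, and summing over $k$ without the indicator would overcount by a factor of $n$.

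My first attempt to fix this would be to condition on the prefix $(Y_1,\dots,Y_k)$, which is unaffected by the swap of suffixes, and to use monotonicity of $x\mapsto x^q$. I would also replace the data-dependent stopping rule by one defined in terms of both $Y$ and $Y'$, for instance the first $k$ at which either prefix crosses a third of its own total. With such a rule, the events for different $k$ are disjoint and, after the swap, still partition the sample space. If this symmetrization does not close, the fallback is to use the event $\{R\le 3R'\}$ more cleverly. On $\{R>3R'\}$, a Paley--Zygmund-type comparison between $R$ and the independent swapped totals should show that the balanced-split configurations occur with enough probability to pay for the residual term.
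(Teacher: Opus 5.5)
Your proposal has a genuine gap: the residual term $\be[R^q\ind{R>3R',\,M<R/3}]$ is never bounded, and the fixes you sketch do not remove the actual obstruction. Your split point $k(Y)$ is defined relative to the random level $R/3$, so it depends on the whole vector. The same is true of your joint rule, since it uses the totals $R$ and $R'$. Under the suffix swap $\Phi_j$, the event you need is the pulled-back event $\{k(Y_{\le j},Y'_{>j})=j\}$. Because the hybrid total $P_j+S'_j$ changes with $j$, these events are not disjoint across $j$, so the sum over $k$ still overcounts. The Paley--Zygmund fallback is too vague to evaluate and would not plausibly give the sharp factor $3^q$.

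There is also a bookkeeping problem. The piece on $\{R\le 3R'\}$ already costs $3^q\be[\min\{R,R'\}^q]$. Bounding the residual by another $3^q\be[\min\{R,R'\}^q]$ would therefore give $3^q\be[M^q]+2\cdot 3^q\be[\min\{R,R'\}^q]$, not the stated inequality. There is no slack to absorb, and the split on $\{R\le 3R'\}$ is unnecessary.

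The missing idea is to work with tail probabilities at a \emph{deterministic} level $t$ and integrate afterwards. Fix $t>0$ and let $\tau$ be the first index with $S_\tau=\sum_{k\le\tau}Y_k\ge t$. This is a genuine stopping time: $\{\tau=i\}$ depends only on $Y_1,\dots,Y_i$ and is independent of $\sum_{j>i}Y_j$. On $\{R\ge 3t,\,M<t\}$ the overshoot gives $S_\tau<2t$, hence $\sum_{j>\tau}Y_j>t$. Therefore
\[
\Pr(R\ge 3t,\,M<t)\le\sum_i\Pr(\tau=i)\,\Pr\Big(\sum_{j>i}Y_j>t\Big)\le\Pr(R\ge t)^2=\Pr(\min\{R,R'\}\ge t).
\]
This gives $\Pr(R\ge 3t)\le\Pr(M\ge t)+\Pr(\min\{R,R'\}\ge t)$ for every $t$. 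Integrating against $q3^q t^{q-1}\,dt$ (the layer-cake formula after substituting $s=3t$) yields exactly the claimed bound with no loss.

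Your suffix swap is the same mechanism in coupling form. It works once the split point is determined by the prefix alone, which is precisely what the fixed level $t$ provides: $\{\tau(Y_{\le j},Y'_{>j})=j\}=\{\tau(Y)=j\}$ is then disjoint across $j$. Moreover, $M$ is compared with $t$ rather than with $R/3$. The intermediate event $\{t\le M<R/3\}$ is thus charged to the $\be[M^q]$ term, instead of having to be paid for by the $\min$ term as in your residual.
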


\begin{proof}[Proof of \Cref{thm:single_sample_CR}]
The algorithm runs either the max-sample or Sample-Target-Filling with equal probability $1/2$. Hence by \cite{rubinstein2020optimal} and \Cref{lem:STF_perf},
\begin{equation*}
\rp(A)=
\frac12\be[\textnormal{ALG-Max-Sample}]
+
\frac12\be[\textnormal{ALG-Target-Filling}] \ge
\frac14\left(
\be[\max_{i \in [n]} Y_i^q]
+
\be[\min\{R,R'\}^q]
\right).
\end{equation*}
Hence, by \Cref{lem:moment-decomp,lem:alpha_opt}, $\rp(A) \geq \frac{1}{4} \frac{1}{3^q}
\be[R^q] =  \frac{1}{4} \frac{1}{3^q}  \optrp$. \qedhere
\end{proof}

\subsection{Uniform Sample Impossibility for $\alpha \in (0,1)$}

The two previous results, \Cref{thm:single_sample_CR,thm:cr_uniform}, might lead us to believe that, as in the ex-ante setting, a sample budget depending only on $n$ suffices for a universal constant competitive ratio. We show that this is not possible. In fact, for every fixed horizon and every finite sample budget, the best guarantee approaches the trivial $1/n$ bound as $\alpha$ approaches $1$ from below.

Write $\comprp^{n,m}(\alpha)$ for the best worst-case competitive ratio among algorithms using at most $m$ samples per marginal. The hard instances below have a hidden cutoff, and the positive values increase geometrically before this cutoff. The learner cannot identify the cutoff from samples, while the prophet allocation concentrates on its last positive coordinate.

\begin{theorem}\label{thm:sample-impossibility-stated}
For every $n\geq2$ and every finite $m\in\bn\cup\{0\}$, $\displaystyle \lim_{\alpha\to 1^-}\comprp^{n,m}(\alpha)=\frac1n$. 
\end{theorem}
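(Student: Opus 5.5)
The proof splits into a lower bound, which holds for every $\alpha$, and a matching upper bound, which is where all the work lies.

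\textbf{Lower bound.} For every $\alpha\in(0,1)$ and every $m$, we have $\comprp^{n,m}(\alpha)\geq 1/n$. This follows from the uniform allocation $a_i=1/n$, which uses no samples. By homogeneity of $W_\alpha$, $W_\alpha(a\cdot X)=\frac1n W_\alpha(X)$ pointwise. For any feasible prophet allocation $a^*$ we have $a^*\le \mathbf 1$ coordinatewise, so by monotonicity $W_\alpha(a^*\cdot X)\le W_\alpha(X)$. Hence the uniform rule gets at least $\frac1n\optrp$, and the $\liminf$ is at least $1/n$.

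\textbf{Upper bound: construction.} Fix $n$ and $m$, and construct environments indexed by a hidden cutoff $s\in[n]$. In the $Y$-parametrization ($q=\alpha/(1-\alpha)\to\infty$), let $Y_i=\lambda^i$ with probability $\varepsilon$ and $Y_i=0$ otherwise, for $i\le s$. Let $Y_i=0$ almost surely for $i>s$. Between $s$ and $s'$ the environments differ only in coordinates in $(\min(s,s'),\max(s,s')]$, on a probability-$\varepsilon$ event each. So, exactly as in the proof of \Cref{thm:sample-impossibility-alpha-large}, all $m$ samples from every environment are zero except with probability at most $nm\varepsilon$. On that event the learner's behaviour is a fixed (possibly randomized) online rule $a(\cdot)$ that does not depend on $s$.

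\textbf{Upper bound: dominant path and budget split.} Consider the path $\mathcal H_s$ on which all of $Y_1,\dots,Y_s$ are positive. It has probability $\varepsilon^s$, and there $R=\sum_{i\le s}\lambda^i\approx\lambda^s$. I would not simply compare total prophet mass. Instead I would argue that, as $q\to\infty$, the prophet value $\be[R^q]$ is dominated by $\mathcal H_s$, since $\varepsilon^s(\lambda^s)^q$ beats every other path of environment $s$ once $\lambda^q\gg 1/\varepsilon$. By \Cref{lem:alpha_opt}, $\optrp\approx\varepsilon^s R_s^q$ with $R_s=\sum_{i\le s}\lambda^i$. The online value on $\mathcal H_s$ is $\Psiq(a,Y)\le(\sum_i a_iY_i)\,R^{q-1}$, by Hölder or simply $Y_i\le R$. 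Because the learner cannot see $s$ when processing $Y_1,\dots,Y_s$, the online prefix allocations along the ``all positive so far'' path are the same numbers $a_1,a_2,\dots$ in every environment. Here $\mathcal H_s$ is a prefix of $\mathcal H_{s'}$ for $s'>s$. So the ratio in environment $s$ is at most roughly $\sum_{i\le s}a_i\lambda^{i-s}+(\text{errors})\approx a_s$ once $\lambda\to\infty$. Since $\sum_s a_s\le 1$, some $s$ has $a_s\le 1/n$, which gives ratio at most $1/n+o(1)$. Contributions from other paths and from nonzero samples are controlled by taking $\varepsilon\to0$, then $\lambda$ large, then $q$ large with $\lambda^q\varepsilon\to\infty$. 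Randomized learners are handled by averaging $a$ over the seed, which is fine because the bound is linear in $a$.

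\textbf{Main obstacle.} The hard step is making the dominance of $\mathcal H_s$ rigorous. Concretely, I need to show that the online algorithm's contributions from the non-dominant paths in environment $s$ are negligible relative to $\varepsilon^sR_s^q$, including paths with more positive coordinates where the learner might behave differently. The route I would try is to bound the total online value by $\sum_{\text{paths}}\Pr(\text{path})R^q$ restricted to paths other than $\mathcal H_s$. Under the chosen order of limits, these terms are $O(\varepsilon^{s-1}\lambda^{(s-1)q}\cdot\text{poly})$ and therefore vanish relative to $\varepsilon^s\lambda^{sq}$. The key point is that any path missing some index $i\le s$ has $R\le R_s-\lambda^i$, and a careful ratio bound is needed there when $i<s$. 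Indeed, $(R_s-\lambda^i)^q/R_s^q\approx(1-\lambda^{i-s})^q$ must be small compared to $\varepsilon$. That forces $q\gg\lambda^{s-1}\log(1/\varepsilon)$, which is acceptable since $q\to\infty$ is taken last.
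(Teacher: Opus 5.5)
Your architecture matches the paper's proof: samples are all zero with high probability, the all-positive prefix path is shared across cutoffs, you pick $s$ with expected allocation at most $1/n$, you lower bound $\be[R^q]$ by the $\mathcal H_s$ path, and you bound the online value on other paths by $\Pr(\text{path})R^q$, with $q\to\infty$ innermost. The gap is in the step that turns the budget split into a ratio bound, because that step rests on a false inequality. $\Psi_q(a,Y)\le(\sum_i a_iY_i)R^{q-1}$ fails at $a=a^*=Y/R$: there the left side is $R^q$ by \Cref{lem:alpha_opt}, while the right side is $R^{q-2}\sum_iY_i^2<R^q$ as soon as two coordinates are positive. For example, $Y=(1,1)$ and $a=(1/2,1/2)$ give $2^q$ versus $2^{q-1}$. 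More generally, $\Psi_q(\cdot,Y)/R^q$ is concave with maximum $1$ at $a^*$, so no bound of this linear form can hold. As a result, both the conclusion ``ratio $\approx\sum_{i\le s}a_i\lambda^{i-s}\approx a_s$'' and your handling of randomized learners (``the bound is linear in $a$'') are unsupported. A smaller point: your first estimate of the off-path terms as $O(\varepsilon^{s-1}\lambda^{(s-1)q})$ is also off, since dropping $Y_1$ leaves $R\approx\lambda^s$. Your subsequent requirement $(1-\lambda^{i-s})^q\ll\varepsilon$ is the correct one.

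This is not cosmetic, because of the order of limits you yourself identified. Write $c=a^*$ on $\mathcal H_s$, so $c_i=\lambda^i/R_s$. Then $\Psi_q(a,Y)/R^q=\bigl(\sum_i c_i(a_i/c_i)^{1/(q+1)}\bigr)^{q+1}$. For fixed $\lambda$ and $q\to\infty$ this converges to the weighted geometric mean $\prod_i(a_i/c_i)^{c_i}=\exp(-\KL(c\Vert a))$, which is not a linear functional of $a$. The approximation by $a_s$ you have in mind only holds if $\lambda\to\infty$ before $q$, roughly $\lambda\gg q$. For $s\ge2$ that is incompatible with the dominance condition $q\gg\lambda^{s-1}\log(1/\varepsilon)$.

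The repair is what the paper does:
\begin{itemize}
\item Use concavity of $\Psi_q$ in $a$ and Jensen over the learner's internal randomness to get $\be[\Psi_q(a,Y)]\le\Psi_q(\be[a],Y)$.
\item Use monotonicity in $a$ to substitute $\be[a_i]\le1$ for $i<s$ and $\be[a_s]\le1/n$. This gives the algorithm-independent bound $\bigl(\sum_{i<s}c_i^{q/(q+1)}+n^{-1/(q+1)}c_s^{q/(q+1)}\bigr)^{q+1}$.
\item As $q\to\infty$ this tends to $\exp(H(c)-c_s\log n)$, where $H$ is the Shannon entropy.
\item This tends to $1/n$ only after $\lambda\to\infty$, since then $c_s\to1$ and $H(c)\to0$.
\end{itemize}
The paper takes $\lambda=1/\varepsilon$ and obtains exactly this expression. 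With this replacement, the rest of your argument goes through.
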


In other words, for every finite sample budget $m(n)$ depending only on $n$, there is a sequence $\alpha_n\in(0,1)$ tending to $1$ such that $\comprp^{n,m(n)}(\alpha_n)\leq(1+o(1))/n$.

This result does not contradict \Cref{thm:single_sample_CR}, which gives a positive ratio for each fixed $\alpha<1$. The endpoint $\alpha=1$ is an exceptional case, as uniform allocation is pointwise optimal there. 

\begin{proof}
Let $n\geq2$, finite $m\geq0$, $\varepsilon\in(0,1)$, and $q>0$. We construct $n$ distinct environments. For all $s\in[n]$, consider the environment with independent values
\begin{equation*}
X_i=
\begin{cases}
\varepsilon^{(1-i)q},&\text{with probability }\varepsilon,\\
0,&\text{otherwise},
\end{cases}
\quad \text{for }i\leq s,
\qquad \mbox{and }
X_i=0\quad \text{for }i>s.
\end{equation*}

Fix any learning algorithm. We first consider its behavior in environment $n$, conditional on all training values being zero and all $n$ online values being positive. By feasibility, the expected allocations (taking expectation with respect to the algorithm internal randomness) along this path sum to at most $ \sum_i \alloc_i \leq 1$, so some index $s\in[n]$ receives expected allocation at most $1/n$. We thus choose environment $s$ as the hard instance.
Now, on environment $s$, condition on all training values being zero and on $X_1,\ldots,X_s>0$. In this case, the algorithm sees exactly the same history through time $s$ as in the realization we considered for environment $n$. Its conditional expected allocation at time $s$ is therefore at most $1/n$. With all expectations in the following display taken under this conditioning, concavity and Jensen's inequality gives
\begin{align*}
\be[\Psiq(\alloc,Y)]\leq \Psiq(\be[\alloc],Y)
=\bigg(\sum_{i=1}^s\varepsilon^{(1-i)\frac{q}{q+1}}\be[\alloc_i]^{\frac{1}{q+1}}\bigg)^{q+1}\leq\bigg(\sum_{i=1}^{s-1}\varepsilon^{(1-i)\frac{q}{q+1}}+n^{-\frac{1}{q+1}}\varepsilon^{(1-s)\frac{q}{q+1}}\bigg)^{q+1},
\end{align*}
where the second inequality uses $\be[\alloc_s]\leq1/n$ and $\alloc_i\leq1$.

On the other hand, if any of the first $s$ online values is zero, then
\begin{equation*}
R=\sum_{i=1}^s X_i^{1/q}
\leq\sum_{i=1}^s\varepsilon^{1-i}-\min_{1\leq i\leq s}\varepsilon^{1-i}
=\sum_{i=1}^s\varepsilon^{1-i}-1.
\end{equation*}
Finally, the probability of a nonzero training value is at most $mn\varepsilon$, independently of the online values, and on this event, we simply bound the algorithm by the prophet value. Combining these bounds gives
\begin{align*}
\be[\Psiq(\alloc,Y)]\leq\varepsilon^s\bigg(\sum_{i=1}^{s-1}\varepsilon^{(1-i)\frac{q}{q+1}}+n^{-\frac{1}{q+1}}\varepsilon^{(1-s)\frac{q}{q+1}}\bigg)^{q+1}\quad+\bigg(\sum_{i=1}^s\varepsilon^{1-i}-1\bigg)^q+mn\varepsilon\,\be[R^q].
\end{align*}
Since the first $s$ online values are all positive with probability $\varepsilon^s$, the normalized prophet value satisfies $\be[R^q ]\geq\varepsilon^s(\sum_{i=1}^s\varepsilon^{1-i})^q$. Putting everything together,
\begin{align*}
\frac{\be[\Psiq(\alloc,Y)]}{\be[R^q]}
&\leq\frac{\big(\sum_{i=1}^{s-1}\varepsilon^{(1-i)\frac{q}{q+1}}+n^{-\frac{1}{q+1}}\varepsilon^{(1-s)\frac{q}{q+1}}\big)^{q+1}}{\big(\sum_{i=1}^s\varepsilon^{1-i}\big)^q}+\varepsilon^{-s}\bigg(\frac{\sum_{i=1}^s\varepsilon^{1-i}-1}{\sum_{i=1}^s\varepsilon^{1-i}}\bigg)^q+mn\varepsilon\\
&=\frac{\big(\sum_{i=1}^{s-1}\varepsilon^{i\frac{q}{q+1}} +n^{-\frac{1}{q+1}}\big )^{q+1}}{\big(\sum_{i=0}^{s-1}\varepsilon^i\big)^q}+\varepsilon^{-s}\bigg(1-\frac{1}{\sum_{i=1}^s\varepsilon^{1-i}}\bigg)^q+mn\varepsilon.
\end{align*}
The last equality follows by factoring out $\varepsilon^{(1-s)q}$ from the numerator and denominator of the first fraction and reindexing the sums. For fixed $\varepsilon$ and $s$, the second term tends to zero as $q\to\infty$ (equivalently, $\alpha \to 1$), while the first tends to
\begin{equation*}
\left(\sum_{i=0}^{s-1}\varepsilon^i\right)\exp\left(-\frac{\log n+(\log\varepsilon)\sum_{i=1}^{s-1}i\varepsilon^i}{\sum_{i=0}^{s-1}\varepsilon^i}\right).
\end{equation*}
This expression tends to $1/n$ as $\varepsilon\to0$. Since there are only finitely many possible cutoffs, these estimates are uniform over algorithms. Taking $q\to\infty$ and then $\varepsilon\to 0$ therefore yields
\begin{equation*}
\limsup_{\alpha\to 1^-}\comprp^{n,m}(\alpha)\leq\frac1n.
\end{equation*}

For the matching lower bound, uniform allocation yields a performance of at least $1/n$ of the prophet, as $W_\alpha(x/n)
\geq W_\alpha(\alloc\cdot x/n)
=\frac1n W_\alpha(\alloc \cdot x)$. 
\end{proof}

\subsection{Full Information Impossibility Result for $\alpha >1$}\label{sec:expost-alpha}

The universal guarantee of \Cref{thm:cr_uniform} might suggest that the competitive ratio degrades gradually as $\alpha$ goes over $1$ in the full information setting. Instead, we show a sharp phase transition: for every $\alpha>1$ and every horizon $n$, the competitive ratio is exactly $1/n$. The lower bound is given by uniform allocation. To prove the matching upper bound, we use independent values on a geometric scale and let the ratio between consecutive scales tend to infinity.

We first record the prophet benchmark for the ex-post $\alpha$-fair objective when $\alpha>1$. Its proof is given in \Cref{app:alpha_prophet}.

\begin{lemma}\label{lem:alpha_prophet}
For $\alpha>1$ and $x\in\mathbb R_{++}^n$, $\sup_{a_i\geq0,\,\sum_i a_i\leq1}W_\alpha(a\cdot x)
=n^{-1/(1-\alpha)}(\sum_{i=1}^n x_i^{\frac{1-\alpha}{\alpha}})^{\frac{\alpha}{1-\alpha}}$,
and the supremum is achieved at $a_i=x_i^{\frac{1-\alpha}{\alpha}}/(\sum_{j=1}^n x_j^{\frac{1-\alpha}{\alpha}})$. If some $x_i=0$, every feasible allocation has welfare zero.
\end{lemma}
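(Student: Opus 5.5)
The plan mirrors \Cref{lem:alpha_opt}: rewrite the objective through a change of variables and apply a reverse Hölder inequality, since the exponent $1-\alpha$ is now negative. Set $p=1-\alpha<0$. First dispose of the degenerate case. If some $x_i=0$, then $a_ix_i=0$ for every feasible $a$, so by the stated convention $W_\alpha(a\cdot x)=0$. Likewise, when $x\in\br_{++}^n$, any allocation with some $a_i=0$ has welfare zero. So we may restrict to $a\in\br_{++}^n$, and then
\begin{equation*}
W_\alpha(a\cdot x)=\Big(\tfrac1n\textstyle\sum_i a_i^{p}x_i^{p}\Big)^{1/p}.
\end{equation*}
Since $1/p<0$, maximizing $W_\alpha$ is equivalent to minimizing $S(a)\coloneqq\sum_i a_i^{p}x_i^{p}$ over $a_i>0$ with $\sum_i a_i\le1$.

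Next, note that $S$ is strictly decreasing in each $a_i$. Hence we may assume $\sum_i a_i=1$, because scaling up a slack allocation strictly improves it. The key inequality is a lower bound on $S$ in terms of $T\coloneqq\sum_i x_i^{p/\alpha}=\sum_i x_i^{(1-\alpha)/\alpha}$. To get it, apply Hölder with exponents $\alpha$ and $\alpha/(\alpha-1)$, both greater than $1$, to the identity
\begin{equation*}
x_i^{p/\alpha}=\big(a_i^{p}x_i^{p}\big)^{1/\alpha}\cdot a_i^{-p/\alpha}=\big(a_i^{p}x_i^{p}\big)^{1/\alpha}\cdot a_i^{(\alpha-1)/\alpha}.
\end{equation*}
This yields
\begin{equation*}
T\le S^{1/\alpha}\Big(\textstyle\sum_i a_i\Big)^{(\alpha-1)/\alpha}\le S^{1/\alpha},
\end{equation*}
that is, $S\ge T^{\alpha}$. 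Substituting back gives
\begin{equation*}
W_\alpha\le n^{-1/p}\,T^{\alpha/p}=n^{-1/(1-\alpha)}\Big(\textstyle\sum_i x_i^{(1-\alpha)/\alpha}\Big)^{\alpha/(1-\alpha)},
\end{equation*}
which is the claimed value.

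For attainment, plug in $a_i=x_i^{p/\alpha}/T$. Then
\begin{equation*}
a_i^{p}x_i^{p}=x_i^{p^2/\alpha+p}\,T^{-p}=x_i^{p/\alpha}\,T^{-p},
\end{equation*}
using $p^2/\alpha+p=p(p+\alpha)/\alpha=p/\alpha$. Summing over $i$ gives $S=T^{1-p}=T^{\alpha}$, so equality holds and this allocation achieves the supremum. Alternatively, the same allocation drops out of the Lagrangian first-order condition $a_i^{p-1}x_i^p\propto1$.

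The only delicate step is handling the signs correctly. With $p<0$, raising to the power $1/p$ reverses inequalities, so the Hölder direction must be chosen so that it bounds $S$ from below rather than from above. The decomposition above is what makes the exponents work out in the right direction.
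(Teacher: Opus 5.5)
Your proof is correct and follows the same structure as the paper's: reduce to minimizing $\sum_i (a_ix_i)^{1-\alpha}$, show this sum is at least $\bigl(\sum_j x_j^{(1-\alpha)/\alpha}\bigr)^{\alpha}$ for every feasible $a$, and check that the stated allocation attains the bound. The only difference is cosmetic: you get the lower bound from H\"older with exponents $\alpha$ and $\alpha/(\alpha-1)$, whereas the paper writes the sum as $T^{\alpha}\sum_i w_i (a_i/w_i)^{1-\alpha}$ with weights $w_i = x_i^{(1-\alpha)/\alpha}/T$ and applies Jensen to the convex map $t\mapsto t^{1-\alpha}$. (Your normalization to $\sum_i a_i=1$ is also unnecessary, since the H\"older bound already uses only $\sum_i a_i\le 1$.)
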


\begin{proposition}\label{thm:alpha-expost-hardness-upper}
For every $\alpha>1$ and every $n\geq2$, $\comprp^n(\alpha)\leq1/n$.
\end{proposition}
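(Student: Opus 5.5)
The plan is a hard instance with independent two-point values on a geometric scale. On it the ex-post welfare of both the online algorithm and the prophet is governed by the single smallest realized coordinate. The smallest coordinate is the last agent to take its low value. By independence of the future, the online algorithm's share on that agent is controlled by its unconditional expected allocation, which sums to at most $1$ over agents. The prophet instead collects roughly equal value from each of the $n$ possible positions.

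\textbf{Construction.} Fix $\eta\in(0,1)$. For $i<n$, let $X_i=\eta^i$ with probability $1-\eta$ and $X_i=1$ otherwise. Let $X_n=\eta^n$ almost surely. All values are strictly positive. The low values strictly decrease in $i$ and lie below the high value $1$. Hence $\min_i X_i=X_J=\eta^J$, where $J$ is the largest index with a low value; $J$ is always defined because $X_n$ is low. We have $\Pr(J=i)=(1-\eta)\eta^{n-i}$ for $i<n$ and $\Pr(J=n)=1$. Thus, up to a factor $(1-\eta)$, $\eta^{i}\Pr(J=i)\approx\eta^n$ for every $i$, so each position contributes equally.

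\textbf{Step 1: prophet lower bound.} Write $\beta=(\alpha-1)/\alpha>0$. By \Cref{lem:alpha_prophet}, the prophet value on a realization $x$ is $n^{1/(\alpha-1)}\bigl(\sum_i x_i^{-\beta}\bigr)^{-1/\beta}$. All other coordinates are at least a factor $1/\eta$ larger than $x_J$, so $\sum_i x_i^{-\beta}\le x_J^{-\beta}\bigl(1+(n-1)\eta^{\beta}\bigr)$. Hence the prophet value is at least $n^{1/(\alpha-1)}\eta^J(1-o_\eta(1))$. Taking expectations gives $\optrp\ge n^{1/(\alpha-1)}\,n\,\eta^n(1-\eta)(1-o(1))$.

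\textbf{Step 2: online upper bound.} For $\alpha>1$ the power mean of negative order satisfies $W_\alpha(u)\le n^{1/(\alpha-1)}\min_i u_i$: if $u_k$ is the minimum, its single term already gives $\frac1n u_k^{1-\alpha}\le\frac1n\sum_i u_i^{1-\alpha}$, and we raise to the negative power $1/(1-\alpha)$. In particular $W_\alpha(a\cdot X)\le n^{1/(\alpha-1)}a_JX_J$. Then
\[
\be[a_JX_J]=\sum_{i}\eta^i\,\be\bigl[a_i\ind{X_i\text{ low}}\ind{X_j\text{ high}\ \forall j>i}\bigr]=\sum_i\eta^i\eta^{n-i}\,\be\bigl[a_i\ind{X_i\text{ low}}\bigr],
\]
where we set $\eta^{n-i}=1$ for $i=n$. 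The key point is that $a_i$ depends only on $X_1,\dots,X_i$, hence is independent of $(X_j)_{j>i}$. Since $\be[a_i\ind{X_i\text{ low}}]\le\be[a_i]$ and $\sum_i\be[a_i]\le1$, we get $\be[a_JX_J]\le\eta^n$.

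\textbf{Conclusion and main obstacle.} Dividing the two bounds gives a ratio at most $\bigl(n(1-\eta)(1-o_\eta(1))\bigr)^{-1}$, which tends to $1/n$ as $\eta\to0$. The delicate step is Step 2. A naive bound conditions on $X_i$ being low, which produces $\be[a_i\mid X_i\text{ low}]$; these conditional quantities need not sum to $1$. The remedy is to keep the unconditional product $a_i\ind{X_i\text{ low}}$, which is bounded by $a_i$, and to pull out only the independent future factor $\eta^{n-i}$. The choice of low probability $1-\eta$ and scale $\eta^i$ is made exactly so that each $\eta^i\Pr(J=i)$ equals $\eta^n$ up to a $(1-\eta)$ factor. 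It then remains to check in Step 1 that the $o_\eta(1)$ error from the non-minimal coordinates is uniform over realizations, which holds because every non-minimal coordinate is at least $1/\eta$ times the minimum.
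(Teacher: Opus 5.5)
There is a genuine gap in your construction. Because $X_n=\eta^n$ almost surely and the low values decrease in $i$, the largest low index is $J\equiv n$ on every realization. The distribution you state ($\Pr(J=i)=(1-\eta)\eta^{n-i}$ for $i<n$ together with $\Pr(J=n)=1$) sums to more than $1$. With the correct distribution, Step~1 only yields $\optrp\ge n^{1/(\alpha-1)}\eta^n(1-o(1))$, which merely matches your Step~2 bound $\be[a_JX_J]\le\eta^n$, so the ratio you get is about $1$, not $1/n$.

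In fact this instance is easy for the online algorithm. Using your $\beta$, every realization has $X_i^{-\beta}\le\eta^{-i\beta}$. The known term $X_n^{-\beta}=\eta^{-n\beta}$ is all but an $O(\eta^\beta)$ fraction of $T=\sum_{k\le n}\eta^{-k\beta}$. So the online rule $a_i=X_i^{-\beta}/T$ is feasible, and by homogeneity and \Cref{lem:alpha_prophet} it earns at least a $1-\eta^{\beta}$ fraction of the prophet on every realization.

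The fix is to put the always-low agent first, with the largest low value, as the paper does (there $X_1=1$ deterministically). For example, take $X_1=\eta$ almost surely and, for $i\ge2$, $X_i=\eta^i$ with probability $1-\eta$ and $X_i=1$ otherwise. Then $J=1$ exactly when all later agents are high, so $\eta^{j}\Pr(J=j)$ equals $\eta^n$ for $j=1$ and $(1-\eta)\eta^n$ for $j\ge2$. Every non-minimal coordinate is still at least $x_J/\eta$, so your Step~1 estimate holds. Your Step~2 computation goes through verbatim; for $j=1$ it reads $\eta\cdot\eta^{n-1}\be[a_1]$.

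This gives a ratio of at most $\bigl(1+(n-1)\eta^{\beta}\bigr)^{1/\beta}\big/\bigl(1+(n-1)(1-\eta)\bigr)\to 1/n$. With this correction your argument is essentially the paper's proof. The main difference is that your single parameter $\eta$ does the work of both the paper's scale $1/b$ and its separate high value $M\to\infty$, since your high value $1$ already exceeds every low value by a factor of at least $1/\eta$. Your treatment of the independence step, keeping $a_i\ind{X_i\text{ low}}$ unconditional and factoring out only the future probability, is exactly right.
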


\begin{proof}
Fix $\alpha>1$, $n\geq2$, and $b>1$. Let $M>1$. Set $X_1=1$ deterministically, and for $i\in\{2,\ldots,n\}$ let the values be independent with
\begin{equation*}
X_i=\begin{cases}
b^{-(i-1)},&\text{with probability }1-1/b,\\
M,&\text{with probability }1/b.
\end{cases}
\end{equation*}
Let $I$ be the last index taking its low value, with $I=1$ if every $X_i$, $i\geq2$, equals $M$.

\noindent \textbf{Upper bound on the online value.} 
Let $\alloc$ be any online allocation. 
For $i\geq2$, the event $\{I=i\}$ means that $X_i$ takes its low value and every subsequent value equals $M$. Hence,
\begin{align*}
\be[a_i\ind{I=i}]=\be\left[a_i\ind{X_i=b^{-(i-1)}}
\ind{X_{i+1}=\cdots=X_n=M}\right]
&=\be[a_i\ind{X_i=b^{-(i-1)}}]\,
\Pr(X_{i+1}=\cdots=X_n=M)\\
&=\be[a_i\ind{X_i=b^{-(i-1)}}]\,
\prod_{k=i+1}^n\Pr(X_k=M)\\
&=b^{-(n-i)}\be[a_i\ind{X_i=b^{-(i-1)}}].
\end{align*}
The second equality uses that $\alloc_i$ is adapted to the filtration, and therefore depends only on $X_1$ to $X_i$, so is independent of the remaining random variables. Similarly, $\be[a_1\ind{I=1}]
=\be[a_1]\Pr(X_2=\cdots=X_n=M)
=b^{-(n-1)}\be[a_1]$.

We now partition the expectation according to the possible values of $I$. On $\{I=i\}$, we have $\min_j a_jX_j\leq a_iX_i=a_i b^{-(i-1)}$. Therefore,
\begin{align*}
&\be\big[\min_{j\in[n]}a_jX_j\big]=\sum_{i=1}^n\be\big[\min_{j\in[n]}a_jX_j\,\ind{I=i}\big]\leq\sum_{i=1}^n b^{-(i-1)}\be[a_i\ind{I=i}]\\
&=b^{-(n-1)}\be[a_1]
+\sum_{i=2}^n b^{-(i-1)}b^{-(n-i)}
\be[a_i\ind{X_i=b^{-(i-1)}}]\\
&=b^{-(n-1)}
\big(\be[a_1]+\sum_{i=2}^n\be[a_i\ind{X_i=b^{-(i-1)}}]\big)\\
&\leq b^{-(n-1)}
\big(\be[a_1]+\sum_{i=2}^n\be[a_i]\big)=b^{-(n-1)}\be\big[\sum_{i=1}^n a_i\big]\leq b^{-(n-1)}.
\end{align*}
The last inequality uses that the allocation is feasible, and thus $\sum_i \alloc_i \leq 1$. This already gives the upper bound for $\alpha \to \infty$, to make it work for $\alpha >1$, we use the following inequality. For every nonnegative vector $z$, because $1/(1-\alpha)<0$,
\begin{equation*}
W_\alpha(z)=\big(\frac1n\sum_{i=1}^n z_i^{1-\alpha}\big)^{1/(1-\alpha)}
\leq\big(\frac1n(\min_i z_i)^{1-\alpha}\big)^{1/(1-\alpha)}
=n^{1/(\alpha-1)}\min_i z_i.
\end{equation*}
It follows that $\rp(\ALG)\leq n^{1/(\alpha-1)}b^{-(n-1)}$ for every online algorithm, including randomized ones. \medskip

\noindent \textbf{Lower bound on the prophet.} On $\{I=i\}$, every $X_k$ with $k>i$ equals $M$. For $k\leq i$, we always have $X_k\geq b^{-(k-1)}$, since $X_k$ is either its low value or $M>1$. Because $-(\alpha-1)/\alpha<0$, this implies
\begin{equation*}
X_k^{-(\alpha-1)/\alpha}
\leq \left(b^{-(k-1)}\right)^{-(\alpha-1)/\alpha}
=b^{(\alpha-1)(k-1)/\alpha}.
\end{equation*}
Splitting the sum at $i$, we therefore obtain
\begin{align*}
\sum_{k=1}^n X_k^{-(\alpha-1)/\alpha}=\sum_{k=1}^i X_k^{-(\alpha-1)/\alpha}
+\sum_{k=i+1}^n M^{-(\alpha-1)/\alpha}&=\sum_{k=1}^i X_k^{-(\alpha-1)/\alpha}
+(n-i)M^{-(\alpha-1)/\alpha}\\
&\leq\sum_{k=1}^i b^{(\alpha-1)(k-1)/\alpha}
+(n-i)M^{-(\alpha-1)/\alpha}\\
&\leq\sum_{k=1}^i b^{(\alpha-1)(k-1)/\alpha}
+nM^{-(\alpha-1)/\alpha}.
\end{align*}
For the remaining geometric sum, factor out its largest term and reverse the order of summation:
\begin{align*}
\sum_{k=1}^i b^{(\alpha-1)(k-1)/\alpha}
=b^{(\alpha-1)(i-1)/\alpha}
\sum_{k=1}^i b^{-(\alpha-1)(i-k)/\alpha}
&=b^{(\alpha-1)(i-1)/\alpha}
\sum_{\ell=0}^{i-1}\left(b^{-(\alpha-1)/\alpha}\right)^\ell\\
&=b^{(\alpha-1)(i-1)/\alpha}
\frac{1-b^{-(\alpha-1)i/\alpha}}{1-b^{-(\alpha-1)/\alpha}}\\
&\leq\frac{b^{(\alpha-1)(i-1)/\alpha}}{1-b^{-(\alpha-1)/\alpha}},
\end{align*}
Hence,  $\sum_{k=1}^n X_k^{-(\alpha-1)/\alpha}
\leq\frac{b^{(\alpha-1)(i-1)/\alpha}}{1-b^{-(\alpha-1)/\alpha}}+nM^{-(\alpha-1)/\alpha}$. 
Thus, for fixed $b$ and $n$, \Cref{lem:alpha_prophet} implies that the pointwise prophet value on this event is at least $n^{1/(\alpha-1)}(1-b^{-(\alpha-1)/\alpha})^{\alpha/(\alpha-1)}b^{-(i-1)}-o_M(1)$.
By \Cref{lem:alpha_prophet}, conditioning on the possible values of $I$ and applying the pointwise lower bound gives
\begin{align*}
\optrp
&=n^{1/(\alpha-1)}
\sum_{i=1}^n\Pr(I=i)\,
\be\bigg[
\big(\sum_{k=1}^n X_k^{-(\alpha-1)/\alpha}\big)^{-\alpha/(\alpha-1)}
\,|\,I=i\bigg]\\
&\geq n^{1/(\alpha-1)}(1-b^{-(\alpha-1)/\alpha})^{\alpha/(\alpha-1)}
\sum_{i=1}^n\Pr(I=i)b^{-(i-1)}-o_M(1).
\end{align*}
For fixed $\alpha$, $b$, and $n$, the error term can be chosen uniformly over the finitely many indices $i$. Its probability-weighted sum is therefore still $o_M(1)$.
Using $\Pr(I=1)=b^{-(n-1)}$ and $\Pr(I=i)=(1-1/b)b^{-(n-i)}$ for $i\geq2$, the remaining sum becomes
\begin{align*}
\sum_{i=1}^n\Pr(I=i)b^{-(i-1)}=\Pr(I=1)+\sum_{i=2}^n\Pr(I=i)b^{-(i-1)}
&=b^{-(n-1)}
+\sum_{i=2}^n(1-1/b)b^{-(n-i)}b^{-(i-1)}\\
&=b^{-(n-1)}
+(1-1/b)\sum_{i=2}^n b^{-(n-1)}\\
&=b^{-(n-1)}
+(n-1)(1-1/b)b^{-(n-1)}\\
&=b^{-(n-1)}\left(1+(n-1)(1-1/b)\right).
\end{align*}
Substituting this in the prophet lower bound yields
\begin{equation*}
\optrp\geq n^{1/(\alpha-1)}(1-b^{-(\alpha-1)/\alpha})^{\alpha/(\alpha-1)}b^{-(n-1)}
\left(1+(n-1)(1-1/b)\right)-o_M(1).
\end{equation*}\medskip

Combining the bounds and first taking $M\to\infty$, we obtain
\begin{equation*}
\comprp^n(\alpha)\leq\frac{1}{(1-b^{-(\alpha-1)/\alpha})^{\alpha/(\alpha-1)}\left(1+(n-1)(1-1/b)\right)}.
\end{equation*}
Finally, taking $b\to\infty$ gives $\comprp^n(\alpha)\leq1/n$.
\end{proof}

The matching lower bound follows directly from monotonicity and homogeneity.
\begin{theorem}\label{thm:alpha-expost-hardness}
For every $\alpha>1$ and every $n\geq1$, $\comprp^n(\alpha)=1/n$. In particular, $\comprp(\alpha)=0$.
\end{theorem}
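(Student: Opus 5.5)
The plan is to prove two inequalities and combine them. The upper bound $\comprp^n(\alpha)\le 1/n$ for $n\ge2$ is exactly \Cref{thm:alpha-expost-hardness-upper}. What remains is the matching lower bound $\comprp^n(\alpha)\ge 1/n$ for every instance, together with the trivial case $n=1$.

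For the lower bound, use the instance-independent uniform allocation $a_i=1/n$. Fix any realization $x\in\br_+^n$ and any prophet allocation $a$ with $\sum_i a_i\le1$. Then each $a_i\le1$, so coordinatewise $a\cdot x\le x$.

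By \Cref{lem:sch-properties}, $W_\alpha$ is monotone. For $\alpha>1$ we also use the stated convention that $W_\alpha(u)=0$ when some $u_i=0$, so monotonicity is consistent with zeros. Monotonicity gives $W_\alpha(a\cdot x)\le W_\alpha(x)$. Homogeneity then gives
\[
W_\alpha(x/n)=\tfrac1n W_\alpha(x)\ge \tfrac1n W_\alpha(a\cdot x).
\]
Applying this realization by realization to an arbitrary $\ALG\in\mathcal L_n$ and taking expectations yields $\rp(\text{uniform})\ge \frac1n\rp(\ALG)$. Taking the supremum over $\mathcal L_n$ gives $\rpon\ge\frac1n\optrp$ for every instance. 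Hence $\comprp^n(\alpha)\ge1/n$. For $n=1$ the online and offline problems coincide: allocating the full budget is optimal for both, so the ratio is $1=1/n$.

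Combining this with \Cref{thm:alpha-expost-hardness-upper} gives $\comprp^n(\alpha)=1/n$ for all $n\ge1$. Then $\comprp(\alpha)=\inf_n 1/n=0$.

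The one point needing care is the boundary case where some $x_i=0$, since $W_\alpha$ is defined there only by convention. Both sides are then $0$, so the inequality holds trivially. If the prophet value is $0$, the ratio is vacuous, and such instances can be excluded or read under the convention $0/0$.
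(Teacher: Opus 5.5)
Your proposal is correct and follows essentially the same route as the paper. The lower bound comes from the uniform allocation, using monotonicity and homogeneity of $W_\alpha$ realization by realization. The upper bound for $n\ge 2$ is \Cref{thm:alpha-expost-hardness-upper}, and the case $n=1$ is immediate.
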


\begin{proof}
The uniform allocation $a_i=1/n$ satisfies, for every feasible prophet allocation $a^*$ and every realized vector $x$,
\begin{equation*}
W_\alpha(x/n)\geq W_\alpha(a^*\cdot x/n)=\frac1n W_\alpha(a^*\cdot x).
\end{equation*}
Taking expectations proves $\comprp^n(\alpha)\geq1/n$. The upper bound is \Cref{thm:alpha-expost-hardness-upper} for $n\geq2$, and the case $n=1$ is immediate.
\end{proof}

\section{Conclusion and Future Directions}

We study prophet inequalities for $\alpha$-fair welfare objectives, distinguishing between ex-ante and ex-post notions of fairness. Our results show that these two interpretations lead to sharply different algorithmic phenomena. In the ex-ante model, the classical $\nicefrac12$ prophet inequality extends to the full $\alpha$-fair family under full distributional knowledge, but the sample-access model exhibits a phase transition at $\alpha=1$. In the ex-post model, constant guarantees are possible for $\alpha<1$, while for $\alpha>1$ the competitive ratio is exactly $1/n$ even with full information. Thus, the interaction between fairness, online decision-making, and statistical information depends delicately on both the value of $\alpha$ and the level at which fairness is enforced.

Our work leaves several interesting open questions. 
For example, in the ex-ante sample-access model, our algorithm uses $O(n\log n)$ samples per distribution for $\alpha\in(0,1]$. It would be interesting to determine whether a constant number of samples per distribution suffices, as in the classical utilitarian setting, or whether a growing number of samples is necessary.
Additionally, for the ex-post objective with $\alpha\in(0,1)$, our full-information results give a universal constant lower bound, but we do not know the optimal constant. The natural conjecture is that the sharp competitive ratio is $\nicefrac12$, matching the classical prophet inequality, at least throughout the range $\alpha\in(0,1)$. A sharpened specialized proof of the $\KL$ argument in fact already yields a competitive ratio of $1/2$, but is specialized to the case $\alpha=1/2$. 
Finally, our ex-ante full-information result is geometric and therefore immediately extends to an XOS multi-resource setting. Understanding whether analogous guarantees hold for ex-post $\alpha$-fairness in such combinatorial settings remains open.

Beyond these more directly related open questions, our work opens the door to a broader theory of prophet inequalities for non-linear welfare objectives.  For instance, on the ex-ante side, our geometric argument applies to every symmetric, concave, homogeneous welfare function. On the ex-post side, our hardness for $\alpha>1$ shows that such generality is impossible, even for independent values. Identifying the structural properties of nonlinear objectives that permit constant ex-post prophet inequalities remains a central open direction.
\section*{Acknowledgments}

This project has been partially funded by the European Research Council (ERC) under the European Union's Horizon Europe Program (FACT, grant agreement No.~101170373), by an Amazon Research Award, and by the Israel Science Foundation Breakthrough Program (grant No.~2600/24).

\newpage

\appendix

\section{AI and LLM usage}
We disclose the use of Artificial Intelligence (AI) and Large Language Models (LLMs) in the preparation of this manuscript as follows:
\begin{itemize}
\item[(1)] Gemini~3.1 and ChatGPT~5.5 were used for editorial assistance, including polishing, rephrasing, and improving the clarity of parts of the introduction and related work. No citations or BibTeX entries were generated using generative AI.
\item[(2)] Gemini~3.1 and ChatGPT~5.5 were used for high-level brainstorming about possible proof approaches for some results. The final proof strategies, formal arguments, theorem statements, calculations, and verification are entirely the authors' own, and all mathematical content was checked by hand.
\end{itemize}

\section{Further Related Work} \label{app:related-work}

\paragraph{Fair online algorithms.}
Fairness in online allocation, often motivated by repeated distribution of scarce resources, has been studied both through explicit constraints on admissible policies and through alternative welfare objectives.
 Closest in motivation, but distinct in setting, is concurrent and independent work by \citet{Yang2026}, which studies generalized-mean/CES welfare objectives in online allocation. Their benchmark and guarantees differ from ours: they prove regret guarantees for repeated online allocation, whereas we prove prophet-inequality guarantees for $\alpha$-fair objectives in stochastic online selection and allocation. \citet{En2026} study online allocation with individual Lipschitz fairness constraints, requiring similar agents to receive similar expected assignment values; their work maximizes linear welfare subject to these constraints and proves regret guarantees against a fair fluid benchmark.

In the canonical independent-values prophet model, our \emph{ex-ante} Rawlsian objective, corresponding to the limit $\alpha \to \infty$, coincides with the max--min stopping formulation of \citet{HillKennedy1990}, who prove a tight $\nicefrac{1}{2}$ competitive ratio. We extend their guarantee to the whole class of $\alpha$-fair welfare functions.

For the \emph{ex-post} objective, the closest analogue is the minimum fill-rate model of \citet{Lien2014nonprofit,Yuanzheng2023,Sinclair2020SequentialFA,Sinclair2022sequentialfair} in the case $\alpha \to \infty$, where (possibly correlated) stochastic demands $D_i$ arrive online and performance is measured by fill rate $\min\{\alloc_i/D_i,1\}$. Although distinct, this objective aligns closely with ours under the change of variables $D_i=1/X_i$: the truncation corresponds to a data-dependent cap $\alloc_i\le 1/X_i$, which is typically nonbinding when $X_i$ is large.

Recent competitive analyses for this model include \citet{manshadi2023fair}, who give tight guarantees against a weaker deterministic benchmark (harmonic mean of expectations rather than expectation of the harmonic mean), and \citet{Sankararaman}, who recover the tight $\nicefrac{1}{2}$ guarantee of \citet{HillKennedy1990} in the independent ex-ante setting, and show for the ex-post objective that when $D_i \leq O(\epsilon^2/\log(1/\epsilon))$, the competitive ratio is greater than $1-\epsilon$. Both of these studies raise the question of whether a lower bound (positive result) better than $O(1/\log(n))$ for the ex-post competitive ratio under independent demand is possible. The hard instance of \Cref{thm:alpha-expost-hardness} solves this open question by the negative.

More broadly, fairness in prophet inequalities has also been studied by restricting admissible policies (e.g., identity-agnostic or group constraints) rather than by changing the objective \citep{Correa2025fair,Arsenis2022}. Related min--max notions appear in online learning via regret formulations \citep{Balseiro2025}. Finally, \citet{Hajiaghayi2022santa,Kawase} study a multi-resource max--min objective under adversarial values and random order in a converse online model (items arrive online rather than agents).

\paragraph{Prophet inequalities with samples.}
A growing body of work studies prophet settings in which the underlying distributions are unknown and can only be accessed via samples.
This learning variant was initiated by \citet{Azar2014}, who showed that in many settings a constant competitive ratio is achievable via samples, through a reduction to order-oblivious secretary algorithms. \citet{rubinstein2020optimal} then proved a striking result: a \emph{single} sample per distribution already attains the optimal $\nicefrac{1}{2}$ ratio for utilitarian welfare. \citet{Correa2019unknown,Correa2024sample,rubinstein2020optimal,Guo2021} explore the tradeoff between samples and achievable competitive ratio in the i.i.d. setting, implying that a constant number of samples (order $n$ in total) is sufficient to get an approximately optimal competitive ratio. Single sample results for the prophet-secretary model appear in \cite{Correa2020googol}.
\citet{Cristi2024prophet} show that a constant number of samples also suffices in the prophet-secretary and the free-order models.
 
Additional single-sample constant-competitive ratio results are available for combinatorial settings with unit-demand, budget-additive, and XOS valuations \cite{CaramanisDFFLLP22,dutting2024online}. In sharp contrast to this series of positive results, our separation theorem shows that for the ex-ante model for $\alpha >1$ fairness, worst-case instances admit no improvement over the trivial $1/n$ guarantee with any finite number of samples.

\section{Deferred Proofs of \Cref{sec:exante-alpha}}

\subsection{Proof of \Cref{prop:alpha-general-upper}} \label{app:alpha-general-upper} 

We first state a lemma regarding SCH functions. 

\begin{lemma}\label{lem:sch-properties}
Every $\mathrm{SCH}$ function is continuous on $\br_{++}^n$ and coordinatewise monotone on $\br_+^n$.
\end{lemma}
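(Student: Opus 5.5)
Continuity on the open orthant is classical: a concave function that is finite on an open convex set $\br_{++}^n$ is continuous there. I would prove it directly. Fix $x\in\br_{++}^n$ and a small cube around it contained in $\br_{++}^n$. Concavity bounds $f$ from below on the cube by the minimum over its $2^n$ vertices. A lower bound on a neighbourhood, combined with concavity through the reflection $f(x)\ge \tfrac12 f(x+h)+\tfrac12 f(x-h)$, gives an upper bound as well. Local boundedness plus concavity then gives local Lipschitz continuity via the standard chord-slope argument. So continuity reduces to a routine invocation of this fact.

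For monotonicity, the key tool is homogeneity together with nonnegativity of $f$; symmetry is not needed. I would first show that $f$ is superadditive. For $x,y\in\br_+^n$, concavity and homogeneity give
\begin{equation*}
f(x+y)=2f\Big(\tfrac{x+y}{2}\Big)\ge 2\Big(\tfrac12 f(x)+\tfrac12 f(y)\Big)=f(x)+f(y).
\end{equation*}
Now take $x\le x'$ coordinatewise and set $y=x'-x\in\br_+^n$. Then $f(x')=f(x+y)\ge f(x)+f(y)\ge f(x)$, because $f$ maps into $\br_+$. This yields coordinatewise monotonicity on the whole closed orthant, including the boundary, with no continuity needed.

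The only delicate point is the continuity statement. Continuity may genuinely fail on the boundary of $\br_+^n$ for a general concave function, which is why the lemma only claims it on $\br_{++}^n$. Within the interior, the one thing to check is that the local lower bound holds. Finiteness of $f$ is given since $f$ is real-valued, so the vertex-minimum bound applies directly. I expect no real obstacle. Monotonicity is the part actually used later (in \Cref{thm:alpha-full-info-half} and in \Cref{prop:alpha-general-upper}), and it follows from the two-line superadditivity argument.
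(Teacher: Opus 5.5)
Your proposal is correct and matches the paper's proof: continuity on $\br_{++}^n$ comes from the standard fact that a finite concave function is continuous on an open convex set, and monotonicity from superadditivity, $f(x+z)=2f(\tfrac{x+z}{2})\ge f(x)+f(z)$, combined with $f\ge 0$. Your sketch of the continuity fact (vertex lower bound, reflection upper bound, chord-slope Lipschitz argument) is sound and just spells out what the paper cites.
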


\begin{proof}
Continuity on $\br_{++}^n$ follows from the standard fact that any finite concave function is continuous on the relative interior of its domain.
It remains to prove monotonicity. Let $x,z\in\br_+^n$. By concavity and homogeneity,  $f(x+z)
    =2f\left(\frac{x+z}{2}\right)
    \geq f(x)+f(z)$ implies super-additivity. Hence, for  $x\leq y$ coordinatewise, and $z=y-x\in\br_+^n$, $f(y)=f(x+z)\geq f(x)+f(z)\geq f(x)$ which proves the claim by setting all coordinates of $z$ to $0$ except at each index $i\in [n]$ successively.
\end{proof}

\begin{proof}[Proof of \Cref{prop:alpha-general-upper}]
We first note that $f(1,1)>0$. Since $f$ is nonzero, there exists $x\in\br_+^2$ such that $f(x)>0$. Let $M>0$ be such that $(M,M)\geq x$ coordinatewise. By monotonicity, which follows from non-negativity, concavity, and homogeneity, we have $f(M,M)\geq f(x)>0$. By homogeneity, $f(M,M)=Mf(1,1)$, and therefore $f(1,1)>0$.

Consider a two-agent instance parameterized by $\varepsilon>0$:
\begin{equation*}
    X_1=1,
    \qquad
    X_2=
    \begin{cases}
    1/\varepsilon, & \text{with probability } \varepsilon,\\
    0, & \text{otherwise.}
    \end{cases}
\end{equation*}
Let $\ALG$ be an online algorithm, and let $p=\be[\alloc_1]$. Since $X_1=1$, we have $u_1(\ALG)=p$. Moreover, the budget constraint gives $\alloc_2\leq 1-\alloc_1$ almost surely. Since $\alloc_1$ is chosen before $X_2$ is observed and $X_2$ is independent of the algorithm's internal randomness,
\begin{equation*}
    u_2(\ALG)
    =
    \be[\alloc_2X_2]
    \leq
    \be[(1-\alloc_1)X_2]
    =
    (1-p)\be[X_2]
    =
    1-p.
\end{equation*}
Thus every online expected-utility vector is dominated by $(p,1-p)$ for some $p\in[0,1]$. 

By monotonicity, $f(u_1(\ALG),u_2(\ALG))
    \leq
    f(p,1-p)$. 
By symmetry, $f(p,1-p)=f(1-p,p)$.
By concavity,
\begin{equation*}
    f\left(\frac12,\frac12\right)
    =
    f\left(\frac{(p,1-p)+(1-p,p)}{2}\right)
    \geq
    \frac12 f(p,1-p)+\frac12 f(1-p,p)
    =
    f(p,1-p).
\end{equation*}
By homogeneity, $f\left(\frac12,\frac12\right)
    =
    \frac12 f(1,1)$.
Therefore, the online value is at most $\frac12 f(1,1)$.

The prophet can achieve the diagonal vector $(1-\varepsilon,1-\varepsilon)$: allocate the full budget to agent $1$ when $X_2=0$, and allocate a fraction $1-\varepsilon$ to agent $2$ when $X_2=1/\varepsilon$. Hence the prophet value is at least $ f(1-\varepsilon,1-\varepsilon)
    =
    (1-\varepsilon)f(1,1)$,
using homogeneity. Therefore, for this instance, the competitive ratio is at most $\frac{\frac12 f(1,1)}{(1-\varepsilon)f(1,1)}
    =
    \frac{1}{2(1-\varepsilon)}$.
Taking $\varepsilon\to0$ gives the desired upper bound.
\end{proof}

\subsection{Proof of \Cref{lem:relative-tail-approx}} \label{app:relative-tail-approx}

\begin{proof}
Fix an agent $i$. The class of upper intervals $\mathcal R= \{(t,\infty):t\ge0\}$
 has VC dimension one. By the relative $(p,\varepsilon)$-approximation theorem for finite-VC range spaces, applied with $p=\delta$, $\varepsilon=1/4$, and failure probability $\eta=1/(4n)$, a sample of size $C\delta^{-1}\bigl(\log(1/\delta)+\log n\bigr)$
is, with probability at least $1-\eta$, a relative $(\delta,1/4)$-approximation for the range space generated by $\mathcal R$; see \citet[Definition~2.3 and Theorem~2.11(ii)]{Peled2011}\footnote{They state their results for finite range spaces with normalized counting measure, so the same statement applies to i.i.d. samples from an arbitrary probability measure by viewing the sample as the empirical measure.}. 

Thus, simultaneously for every $t\ge0$, the empirical tail probability $\overline F_i^{(m)}(t)$ satisfies $\vert \overline F_i^{(m)}(t)-\overline F_i(t)|
    \le
    \frac14\overline F_i(t)$ whenever $\overline F_i(t)\ge\delta$ and $|\overline F_i^{(m)}(t)-\overline F_i(t)|
    \le
    \frac14\delta$ whenever $\overline F_i(t)\le\delta$. Equivalently $|\overline F_i^{(m)}(t)-\overline F_i(t)|
    \le
    \frac14\max\{\overline F_i(t),\delta\}$ for all $t \geq 0$. 
Taking a union bound over the $n$ agents gives failure probability at most $n\eta=1/4$. Hence $\Pr(\mathcal E)\ge3/4$.
\end{proof}

\subsection{Proof of \Cref{prop:training-bound}} \label{app:training-bound}

Before proving the desired result, we start by proving two auxiliary lemmas. 

The first one tells us that this is a good approximation of the utility gained by allocating an additional budget $c$ to an agent. 
\begin{lemma}\label{lem:sandwich}
Let $m \geq 1$. The function $\widehat G_i$ is nondecreasing and concave. Moreover, on $\mathcal E$, for every $i\in[n]$ and every $c\in[0,1/4]$,
\begin{equation*}
    \frac12\bigl(\mu_i(4\delta+c)-\mu_i(4\delta)\bigr)
    \leq
   \widehat G_i(c)
    \leq
    2\bigl(\mu_i(\delta+c)-\mu_i(\delta)\bigr).
\end{equation*}
\end{lemma}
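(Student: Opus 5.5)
The plan is to prove the monotonicity and concavity statement and the two-sided sandwich separately, and to obtain both parts of the sandwich from one pointwise comparison of integrands in $t$.

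\emph{Monotonicity and concavity.} For each fixed $t$, the integrand $c\mapsto\min\{(\overline F^{(m)}_i(t)-2\delta)_+,c\}$ is a minimum of a constant and the identity. It is therefore nondecreasing and concave in $c$, and integrating over $t$ preserves both properties. The same holds for $\mu_i$, as in \Cref{lem:upper_bound_marginal}.

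For the sandwich, I rewrite the increments of $\mu_i$ as integrals. For $s,c\geq0$,
\begin{equation*}
    \mu_i(s+c)-\mu_i(s)=\int_0^\infty \min\{(\overline F_i(t)-s)_+,c\}\,dt,
\end{equation*}
which follows from the identity $\min\{x,s+c\}-\min\{x,s\}=\min\{(x-s)_+,c\}$. Each side of the claim is then an integral over $t$ of a function of the form $\min\{(\cdot)_+,c\}$, so it suffices to compare the integrands pointwise on $\mathcal E$. I will use the elementary bounds $\min\{\lambda x,c\}\leq\lambda\min\{x,c\}$ for $\lambda\geq1$ and $\min\{\lambda x,c\}\geq\lambda\min\{x,c\}$ for $\lambda\leq1$.

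\emph{Upper bound.} Fix $t$ and write $p=\overline F_i(t)$ and $\hat p=\overline F^{(m)}_i(t)$. If $p\leq\delta$, then on $\mathcal E$ we have $\hat p\leq p+\delta/4<2\delta$, so the empirical integrand vanishes. If $p>\delta$, then $\hat p\leq\tfrac54 p$. Since $\delta<p$, we get $\tfrac54p-2\delta\leq \tfrac54p-\tfrac54\delta-\tfrac34\delta\leq\tfrac54(p-\delta)$. Hence $(\hat p-2\delta)_+\leq\tfrac54(p-\delta)_+\leq 2(p-\delta)_+$, and applying $\min\{\lambda x,c\}\leq\lambda\min\{x,c\}$ with $\lambda=2$ and integrating gives the right inequality.

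\emph{Lower bound.} If $p\leq4\delta$, the target integrand $\min\{(p-4\delta)_+,c\}$ is $0$ and there is nothing to prove. If $p>4\delta$, then $p\geq\delta$ and on $\mathcal E$ we have $\hat p\geq\tfrac34p$. Then $\hat p-2\delta\geq\tfrac34p-2\delta$, and I want this to be at least $\tfrac12(p-4\delta)=\tfrac12p-2\delta$, which holds since $\tfrac34p\geq\tfrac12p$. So $(\hat p-2\delta)_+\geq\tfrac12(p-4\delta)_+$, and applying $\min\{\lambda x,c\}\geq\lambda\min\{x,c\}$ with $\lambda=\tfrac12$ and integrating gives the left inequality.

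The only delicate point is the case split at $p\lessgtr\delta$ in the upper bound, where the additive error $\delta/4$ must be absorbed by the $2\delta$ shift. The restriction $c\leq1/4$ is not needed for these bounds; it only keeps the objects well defined.
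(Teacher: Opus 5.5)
Your proposal is correct and follows the paper's proof essentially verbatim: a pointwise comparison of the integrands on $\mathcal E$, with the case split at $\delta$ for the upper bound and at $4\delta$ for the lower bound. Both proofs then integrate using the identity $\mu_i(s+c)-\mu_i(s)=\int_0^\infty\min\{(\overline F_i(t)-s)_+,c\}\,dt$. You also spell out a few elementary steps the paper leaves implicit, such as the scaling bounds for $\min\{\lambda x,c\}$ and the verification of the increment identity.
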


\begin{proof}
The first claim follows because, for each $t$, the map $c\mapsto \min\{(\overline F^{(m)}_i(t)-2\delta)_+,c\}$ is nondecreasing and concave.

Fix $i$ and $t$, and write $s=\overline F_i(t)$ and $\widehat s=\overline F^{(m)}_i(t)$. On $\mathcal E$, $|\widehat s-s|\leq \frac14\max\{s,\delta\}$. For the upper bound, if $s<\delta$, then $(\widehat s-2\delta)_+=0$. If $s\geq\delta$, then $(\widehat s-2\delta)_+
    \leq
    \frac54s-2\delta
    \leq
    2(s-\delta)_+$.
Thus $\min\{(\widehat s-2\delta)_+,c\}\leq2\min\{(s-\delta)_+,c\}$.

For the lower bound, if $s<4\delta$ the right-hand side is zero. If $s\geq4\delta$, then
\begin{equation*}
    (\widehat s-2\delta)_+
    \geq
    \frac34s-2\delta
    \geq
    \frac12(s-4\delta)_+.
\end{equation*}
Hence $\min\{(\widehat s-2\delta)_+,c\}\geq\frac12\min\{(s-4\delta)_+,c\}$. Integrating the two pointwise bounds and using the identity
\begin{equation*}
    \mu_i(b+c)-\mu_i(b)
    =
    \int_0^\infty
    \left(
    \min\{\overline F_i(t),b+c\}
    -
    \min\{\overline F_i(t),b\}
    \right)dt 
    =
    \int_0^\infty
    \min\{(\overline F_i(t)-b)_+,c\}\,dt, 
\end{equation*}
completes the proof.
\end{proof}

The second ingredient is an offline optimization lemma showing that, for $\alpha\leq 1$, optimizing the marginal utility vector alone gives a constant-factor approximation even after adding an arbitrary baseline utility vector.

\begin{lemma}\label{lem:optimizer-bound}
Let $D\subseteq\br_+^n$ be convex compact. Fix $f\in\br_+^n$ and $r\in D$.
\begin{itemize}
    \item \emph{$\alpha \in (0,1)$:} If $h$ maximizes $\sum_i y_i^{1-\alpha}$
over $D$, then $W_\alpha(f+h)
    \ge
    e^{-1}W_\alpha(f+r)$.
    \item \emph{$\alpha=1$:} For $h$ maximizing $\sum_{i \in[n] : \sup_{y\in D}y_i>0}\log y_i$ over $D$, then $W_1(f+h)
    \ge
    e^{-1}W_1(f+r)$.
\end{itemize}
\end{lemma}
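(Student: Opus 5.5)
The plan is to compare $f+r$ with $f+h$ through a first-order (KKT-type) optimality condition for $h$. We combine it with one concavity step that absorbs the baseline $f$, and then use $(1+p)^{1/p}\le e$.

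\textbf{Case $\alpha\in(0,1)$.} Write $p=1-\alpha\in(0,1)$ and $\Phi(y)=\sum_i y_i^{p}$, which is concave on $D$.

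First, the optimality condition. Since $h$ maximizes $\Phi$ over the convex set $D$, and $r\in D$, the one-sided derivative of $t\mapsto\Phi(h+t(r-h))$ at $t=0^+$ is nonpositive. Formally this gives
\begin{equation*}
\sum_i h_i^{p-1}(r_i-h_i)\le 0,
\qquad\text{i.e.}\qquad
\sum_i h_i^{p-1}r_i\le\sum_i h_i^{p}.
\end{equation*}
The delicate point is coordinates with $h_i=0$. If $h_i=0<r_i$, the derivative of $t\mapsto (t r_i)^p$ at $0^+$ is $+\infty$. The remaining terms have finite derivatives, since they have $h_j>0$ or vanish identically. So moving toward $r$ would strictly increase $\Phi$, contradicting optimality. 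Hence $r_i=0$ whenever $h_i=0$, and the displayed inequality holds with the convention $0\cdot\infty=0$. I expect this boundary handling to be the main care point; the rest is algebra.

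Second, absorbing $f$. For each $i$, monotonicity and concavity of $x\mapsto x^p$ give
\begin{equation*}
(f_i+r_i)^p\le(f_i+h_i+r_i)^p\le(f_i+h_i)^p+p(f_i+h_i)^{p-1}r_i .
\end{equation*}
Since $p-1<0$, we have $(f_i+h_i)^{p-1}\le h_i^{p-1}$ whenever $r_i>0$, because then $h_i>0$. Summing and using the optimality condition,
\begin{equation*}
\sum_i(f_i+r_i)^p\le\sum_i(f_i+h_i)^p+p\sum_i h_i^{p}\le(1+p)\sum_i(f_i+h_i)^p .
\end{equation*}
Finally, raise both sides to the power $1/p$ (the normalization $1/n$ cancels). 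This gives $W_\alpha(f+r)\le(1+p)^{1/p}W_\alpha(f+h)\le e\,W_\alpha(f+h)$.

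\textbf{Case $\alpha=1$.} Let $S=\{i:\sup_{y\in D}y_i>0\}$. For $i\notin S$, every point of $D$ has $y_i=0$, so $r_i=h_i=0$ and these coordinates contribute identically to both sides. For $i\in S$, $h_i>0$, since otherwise the objective would be $-\infty$ while a convex combination of points of $D$ makes all coordinates in $S$ positive.

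The same directional-derivative argument applied to $\sum_{i\in S}\log y_i$ then gives $\sum_{i\in S}r_i/h_i\le|S|\le n$. Using $\log(1+x)\le x$,
\begin{equation*}
\sum_{i\in S}\log\frac{f_i+r_i}{f_i+h_i}\le\sum_{i\in S}\frac{r_i-h_i}{f_i+h_i}\le\sum_{i\in S}\frac{r_i}{h_i}\le n .
\end{equation*}
If some $f_i+h_i=0$ with $i\notin S$, then $f_i+r_i=0$ as well and both sides of the claim vanish. Dividing by $n$ and exponentiating gives $W_1(f+r)\le e\,W_1(f+h)$, which completes the plan.
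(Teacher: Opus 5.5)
Your proposal is correct and follows essentially the paper's argument. Both use the first-order optimality condition $\sum_i h_i^{-\alpha} r_i \le \sum_i h_i^{1-\alpha}$ (resp.\ $\sum_{i\in I} r_i/h_i\le |I|$), then a tangent-line concavity bound that absorbs $f$ into $\sum_i (f_i+r_i)^{1-\alpha}\le (2-\alpha)\sum_i (f_i+h_i)^{1-\alpha}$, and finish with $2-\alpha\le e^{1-\alpha}$ (resp.\ $\log(1+x)\le x$). The differences are cosmetic: you skip the paper's normalization through $\pi_i,S_i,T_i$, and you explicitly justify the boundary case $h_i=0$ (showing $r_i=0$ there), which the paper only asserts.
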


\begin{proof}
First suppose $\alpha\in(0,1)$. Coordinates that are identically zero on $D$ have $h_i=r_i=0$ and contribute the same term to both sides. Since adding a common nonnegative term to both power sums preserves the desired inequality, it suffices to consider coordinates that are not identically zero on $D$. On those coordinates, optimality gives $h_i>0$.

Since $r\in D$ and $D$ is convex, $(1-\lambda)h+\lambda r\in D$ for every
$\lambda\in[0,1]$. The optimality of $h$ therefore implies that
\begin{equation*}
   (1-\alpha)\sum_i h_i^{-\alpha}(r_i-h_i)= \frac{d}{d\lambda}
    \sum_i\bigl((1-\lambda)h_i+\lambda r_i\bigr)^{1-\alpha}
    \big|_{\lambda=0}
    \le 0.
\end{equation*}
Since $1-\alpha>0$, this is equivalent to $\sum_i h_i^{-\alpha}r_i
    \le
    \sum_i h_i^{1-\alpha}$.
Let $H=\sum_i h_i^{1-\alpha}$, $\pi_i=h_i^{1-\alpha}/H$, $S_i=f_i/h_i$, and $T_i=r_i/h_i$. Then $\sum_i\pi_i=1$ and by the previous optimality inequality,
\begin{equation*}
    \sum_i\pi_iT_i
    =
    \frac{\sum_i h_i^{-\alpha}r_i}{\sum_i h_i^{1-\alpha}}
    \le
    1.
\end{equation*}
Concavity of $x\mapsto x^{1-\alpha}$ gives $x^{1-\alpha}
    \le
    1+(1-\alpha)(x-1)$.
Applying this to $x=(S_i+T_i)/(S_i+1)$ yields
\begin{align*}
    (S_i+T_i)^{1-\alpha}
    \le
    (S_i+1)^{1-\alpha}
    +(1-\alpha)(S_i+1)^{-\alpha}(T_i-1) 
    \le
    (S_i+1)^{1-\alpha}
    +(1-\alpha)T_i.
\end{align*}
Averaging with weights $\pi_i$,
\begin{align*}
    \sum_i\pi_i(S_i+T_i)^{1-\alpha}
    \le
    \sum_i\pi_i(S_i+1)^{1-\alpha}
    +(1-\alpha)\sum_i\pi_iT_i \le
    \sum_i\pi_i(S_i+1)^{1-\alpha}
    +(1-\alpha).
\end{align*}
Since $(S_i+1)^{1-\alpha}\ge1$ and $\sum_i\pi_i=1$, we have $\sum_i\pi_i(S_i+1)^{1-\alpha}\ge1$.
Therefore
\begin{align*}
    \sum_i\pi_i(S_i+1)^{1-\alpha}
    +(1-\alpha) \le
    \left(1+(1-\alpha)\right)
    \sum_i\pi_i(S_i+1)^{1-\alpha} \le
    e^{1-\alpha}
    \sum_i\pi_i(S_i+1)^{1-\alpha}.
\end{align*}
Consequently,
\begin{equation*}
    \sum_i\pi_i(S_i+T_i)^{1-\alpha}
    \le
    e^{1-\alpha}
    \sum_i\pi_i(S_i+1)^{1-\alpha}.
\end{equation*}
Multiplying by $H$ gives  $\sum_i(f_i+r_i)^{1-\alpha}
    \le
    e^{1-\alpha}
    \sum_i(f_i+h_i)^{1-\alpha}$.
Taking the power $1/(1-\alpha)$ proves $W_\alpha(f+r)
    \le
    e W_\alpha(f+h)$, which is the desired inequality.

Now suppose $\alpha=1$. If $I=\{i:\sup_{y\in D}y_i>0\}=\emptyset$, then $r=0$ by the definition of $I$, so the claim is immediate. Assume $I\neq\emptyset$. If $W_1(f+r)=0$, the claim is trivial. Otherwise $f_i+r_i>0$ for every $i$. For $i\notin I$, we have $h_i=r_i=0$, and these coordinates contribute the same term to $W_1(f+h)$ and $W_1(f+r)$.
On $I$, the first-order condition for the logarithmic maximizer gives 
\begin{equation*}
    0
     \ge \frac{d}{d\lambda}\sum_{i\in I}
    \log\bigl(h_i+\lambda(r_i-h_i)\bigr) \big \vert_{\lambda=0}
    =
    \sum_{i\in I}\frac{r_i-h_i}{h_i} 
    =
    \sum_{i\in I}\frac{r_i}{h_i}-|I|,
\end{equation*}
hence $\sum_{i\in I}\frac{r_i}{h_i}
    \le
    |I|$.
Therefore,
\begin{align*}
    \log\frac{W_1(f+r)}{W_1(f+h)}
    =
    \frac1n
    \sum_{i\in I}
    \log\frac{f_i+r_i}{f_i+h_i} \le
    \frac1n
    \sum_{i\in I}
    \log\left(1+\frac{r_i}{h_i}\right) \le
    \frac1n
    \sum_{i\in I}\frac{r_i}{h_i} \le
    \frac{|I|}{n}
    \le
    1.
\end{align*}
Exponentiating gives $W_1(f+r)\le eW_1(f+h)$.
\end{proof}

We can now prove the result.

\begin{proof}[Proof of \Cref{prop:training-bound}]
First, the learned vector satisfies
\begin{equation*}
\sum_{i \in [n]} d_i= n\delta +\sum_{i \in [n]} \widehat c_i \leq \frac{1}{20} +\frac{1}{4} \leq \frac{1}{3}. 
\end{equation*}

It suffices to prove the bound deterministically on the event $\mathcal E$, and then lower bound $\Pr(\mathcal E)$ and the concavity of $W_\alpha$. By \Cref{lem:relative-tail-approx}, we have for $\delta=1/(20n)$ that $\Pr(\mathcal{E})\geq 3/4$.

Let $b^*$ optimize $\Gamma_\alpha(F)$ and set $\overline b_i=b_i^*/4$. Since $\mu_i$ is concave and $\mu_i(0)=0$,
\begin{equation*}
    \mu_i(\overline b_i)
    =
    \mu_i\left(\frac{b_i^*}{4}\right)
    =
    \mu_i\left(\frac14 b_i^*+\frac34\cdot0\right)
    \ge
    \frac14\mu_i(b_i^*)+\frac34\mu_i(0)
    =
    \frac14\mu_i(b_i^*).
\end{equation*}
Thus, coordinatewise, $\mu(\overline b)
    \ge
    \frac14\mu(b^*)$.
By monotonicity and homogeneity of $W_\alpha$,
\begin{align*}
    W_\alpha(\mu(\overline b))
    \ge
    W_\alpha\left(\frac14\mu(b^*)\right) 
    =
    \frac14 W_\alpha(\mu(b^*)) 
    =
    \frac14\Gamma_\alpha(F).
\end{align*}

Set $f_i=\mu_i(\delta)$ and $r_i=\frac12(\mu_i(\overline b_i)-\mu_i(4\delta))_+$. Since $\mu_i$ is concave and $\mu_i(0)=0$, we have
\begin{equation*}
    \mu_i(\delta)
    =
    \mu_i\left(\frac14(4\delta)+\frac34\cdot0\right)
    \ge
    \frac14\mu_i(4\delta),
\end{equation*}
so $\mu_i(4\delta)\le4\mu_i(\delta)=4f_i$.
We now prove that $\mu_i(\overline b_i)\le4(f_i+r_i)$.
If $\mu_i(\overline b_i)\le \mu_i(4\delta)$, then
\begin{equation*}
    \mu_i(\overline b_i)
    \le
    \mu_i(4\delta)
    \le
    4f_i
    \le
    4(f_i+r_i).
\end{equation*}
If $\mu_i(\overline b_i)>\mu_i(4\delta)$, then by definition of $r_i$,
\begin{align*}
    \mu_i(\overline b_i)
    =
    \mu_i(4\delta)+2r_i 
    \le
    4f_i+2r_i 
    \le
    4(f_i+r_i).
\end{align*}
Hence by homogeneity and monotonicity
\begin{equation*}
    W_\alpha(f+r) \geq W_\alpha(\frac{1}{4}\mu(\overline b))
    =
    \frac14 W_\alpha(\mu(\overline b))
    \geq
    \frac1{16}\Gamma_\alpha(F).
\end{equation*}

Let $c_i^*=(\overline b_i-4\delta)_+$, $h_i=\widehat G_i(\widehat c_i)$, and let $D= \{ y \in \br_+^n : \exists c \in [0,1/4]^n, \sum_{i \in [n]} c_i \leq 1/4, 0\leq y_i \leq \hat{G}_i(c_i) \}$. The set $D$ is compact convex, and since the objectives in the definition of $\widehat c$ are coordinatewise increasing, the vector $h$ maximizes the corresponding objective over $D$. Then $\sum_i c_i^*  \leq \sum_i b_i^*/4 \leq1/4$, and by the lower sandwich bound in \Cref{lem:sandwich}, $r_i\leq\widehat G_i(c_i^*)$. Thus $r\in D$. By \Cref{lem:optimizer-bound},
\begin{equation*}
    W_\alpha(f+h)
    \geq
    e^{-1}W_\alpha(f+r).
\end{equation*}
By the upper bound in \Cref{lem:sandwich},
\begin{equation*}
    h_i=
    \widehat G_i(\widehat c_i)
    \leq
    2(\mu_i(d_i)-\mu_i(\delta))
    =2(\mu_i(d_i)-f_i),
\end{equation*}
so $\mu_i(d_i)\geq f_i+h_i/2\geq(f_i+h_i)/2$. Therefore, on $\mathcal E$, 
\begin{equation*}
W_\alpha(\mu(d))
    \geq
    \frac12W_\alpha(f+h). 
\end{equation*}
Taking expectations over the training samples, combining everything, and using $\Pr(\mathcal E)\geq3/4$ gives
\begin{equation*}
    \be_{\mathrm{tr}}[W_\alpha(\mu(d))]
    \geq\frac{1}{16}  \cdot \frac{1}{e} \cdot \frac{1}{2} \cdot \frac{3}{4} \cdot  \Gamma_\alpha(F)=
    \frac{3}{128e}\Gamma_\alpha(F).
\end{equation*} 
Finally, concavity of $W_\alpha$ gives  $W_\alpha(\be[\mu(d)])
    \geq
    \be[W_\alpha(\mu(d))]$ which completes the proof.
\end{proof}

\subsection{Proof of \Cref{lem:rank-threshold-utility}} \label{app:rank-threshold-utility}

We will use the following approximate Jensen inequality for binomial random variables.

\begin{lemma}\label{lem:binomial-tail}
For every $N\geq1$, $k\in[N]$, and $r\in[0,1]$, if $Z\sim\mathrm{Bin}(N,r)$, then
\begin{equation*}
\be[\min\{Z,k\}]\geq(1-e^{-1})\min\{\be[Z],k\}.
\end{equation*}
\end{lemma}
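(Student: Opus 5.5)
The plan is to reduce to two regimes, comparing $\be[Z]=Nr$ against the cap $k$, and to exploit the concavity of $g(x)=\min\{x,k\}$ together with the Poisson-like behaviour of the binomial.

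\textbf{Regime $Nr\le k$.} The target is $\be[\min\{Z,k\}]\ge(1-e^{-1})Nr$. I would write
\[
\be[\min\{Z,k\}]=\sum_{j=1}^{k}\Pr(Z\ge j)\ge \Pr(Z\ge 1)+\sum_{j=2}^{k}\Pr(Z\ge j),
\]
but the cleanest route is to compare directly with the extreme case $k=1$. The function $x\mapsto\be[\min\{Z,k\}]/Nr$ should be smallest when the cap bites most relative to the mean, and the worst configuration is $k=1$, $Nr=1$. There we have
\[
\be[\min\{Z,1\}]=1-(1-r)^N\ge 1-e^{-Nr}\ge (1-e^{-1})Nr \quad\text{for } Nr\le 1,
\]
using concavity of $x\mapsto 1-e^{-x}$ on $[0,1]$.

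For general $k$, I would use the identity
\[
\be[\min\{Z,k\}]=\be[Z]-\be[(Z-k)_+]
\]
and bound the overshoot. A cleaner alternative is a coupling: split the $N$ Bernoulli trials into $k$ groups $B_1,\dots,B_k$ with group means $\mu_\ell=\sum_{t\in B_\ell}r\le 1$. This is possible when $Nr\le k$, by balancing group sizes so that each $\mu_\ell\le Nr/k+r$, with care at the boundaries. Then
\[
\min\{Z,k\}\ge\sum_\ell \ind{\text{group }\ell\text{ has a success}},
\]
and each indicator has expectation $1-(1-r)^{|B_\ell|}\ge 1-e^{-\mu_\ell}\ge(1-e^{-1})\mu_\ell$ whenever $\mu_\ell\le1$. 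Summing over $\ell$ gives $(1-e^{-1})Nr$.

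The main obstacle is arranging the groups so that every $\mu_\ell\le 1$ while the number of groups is at most $k$. Since individual trials have mean $r\le1$, greedy packing into bins of capacity $1$ uses at most $\lceil 2Nr\rceil$ bins, which can exceed $k$. To fix this, I would instead allow fractional splitting of the mean via Poissonization or a monotone coupling. Alternatively, I would handle it by induction on $N$, using $\be[\min\{Z,k\}]=\sum_{j\le k}\Pr(Z\ge j)$ and the fact that $\Pr(Z\ge j)$ is Schur-type comparable to the Poisson tail. I would try the exact grouping first, choosing $k$ groups of sizes $\lfloor N/k\rfloor$ or $\lceil N/k\rceil$. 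Each group then has mean at most $\lceil N/k\rceil r\le Nr/k+r\le 1+r$. This is slightly off from the required bound $\le 1$, so I would absorb the slack by noting that $1-(1-r)^s\ge(1-e^{-1})\min\{sr,1\}$ still holds for integer $s$ with $sr$ arbitrary. Summing the mins then recovers $Nr$ when all $sr\le1$, and otherwise the group already contributes a full success-indicator mass comparable to $1$.

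\textbf{Regime $Nr>k$.} Here I would reduce to the previous regime by monotonicity. $Z\sim\mathrm{Bin}(N,r)$ stochastically dominates $Z'\sim\mathrm{Bin}(N,r')$ with $r'=k/N\le r$, and $\min\{\cdot,k\}$ is nondecreasing. Hence
\[
\be[\min\{Z,k\}]\ge\be[\min\{Z',k\}]\ge(1-e^{-1})Nr'=(1-e^{-1})k,
\]
where the second inequality applies the first regime with equality $Nr'=k$. Note that $r'\in[0,1]$ because $k\le N$.
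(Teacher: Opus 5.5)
\textbf{There is a genuine gap in Regime~1 ($Nr\le k$), which is the case everything else rests on.} Your Regime~2 reduction by stochastic domination is fine, but it only transfers Regime~1.

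Within Regime~1, the claim that $k=1$, $Nr=1$ is the worst configuration is asserted without justification. The balanced deterministic grouping also does not close the argument. After bounding each group by $(1-e^{-1})\min\{s_\ell r,1\}$, you are left with $\sum_\ell \min\{s_\ell r,1\}$. This is strictly smaller than $Nr$ as soon as some group has $s_\ell r>1$. Saying that such a group contributes mass ``comparable to $1$'' does not recover the missing $\sum_\ell (s_\ell r-1)_+$.

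Concretely, take $N=3$, $k=2$, $r=2/3$, so $Nr=k$. The balanced groups have sizes $2$ and $1$, and your chain yields only $(1-e^{-1})\cdot\frac53$ instead of the required $(1-e^{-1})\cdot 2$. As you observe yourself, a deterministic partition into at most $k$ groups with all means at most $1$ need not exist. The alternatives you list (Poissonization, Schur-type comparison with Poisson tails, induction on $N$) are named but not carried out, so the key inequality remains unproved.

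\textbf{The missing idea is to randomize the grouping.} Assign each of the $N$ trials independently to a uniformly random one of $k$ groups. The number of groups containing a success is at most $\min\{Z,k\}$. Group $\ell$ contains a success with probability exactly $1-(1-r/k)^N$, since each trial independently lands in $\ell$ and succeeds with probability $r/k$. Every group now has the same effective mean $Nr/k$, which removes the integrality obstruction. Summing over groups gives
\[
\be[\min\{Z,k\}]\ \ge\ k\bigl(1-(1-r/k)^N\bigr)\ \ge\ k\bigl(1-e^{-Nr/k}\bigr)\ \ge\ (1-e^{-1})\min\{Nr,k\},
\]
using $1-e^{-t}\ge(1-e^{-1})\min\{t,1\}$.

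This is exactly the paper's proof in probabilistic form. The paper states it as the pointwise bound $\min\{z,k\}\ge k(1-(1-1/k)^z)$, which is the expected number of nonempty bins when $z$ balls are thrown into $k$ bins, combined with the binomial generating function $\be[(1-1/k)^Z]=(1-r/k)^N$. It handles both regimes at once, so your stochastic-domination step becomes unnecessary.
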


\begin{proof}
For every integer $z\geq0$, $k(1-(1-1/k)^z) \leq k$ and $k(1-(1-1/k)^z)=\sum_{j=0}^{z-1} (1-1/k)^j\leq z$, hence $\min\{z,k\}\geq k(1-(1-1/k)^z)$ with the convention $0^0=1$. Therefore, using the formula for the probability generating function of a binomial random variable,
\begin{equation*}
\be[\min\{Z,k\}]
\geq  k\bigg(1-\be\left[\left(1-\frac{1}{k}\right)^Z\right]\bigg) =k\bigg(1-\bigg(1-\frac rk\bigg)^N\bigg)
\geq k(1-e^{-Nr/k})
\geq(1-e^{-1})\min\{Nr,k\}.
\end{equation*}
The last inequality uses $1-e^{-t}\geq(1-e^{-1})\min\{t,1\}$ for $t\geq0$, which follows from concavity on $[0,1]$ and monotonicity on $[1,\infty)$.
\end{proof}

\begin{proof}[Proof of \Cref{lem:rank-threshold-utility}]
Since $m_{\rm imp}+1\ge 2/\delta$ and $d_i\ge\delta$, we have $ (m_{\rm imp}+1)d_i\ge2$.
Hence $\left\lfloor (m_{\rm imp}+1)d_i\right\rfloor
    \ge
    \frac12(m_{\rm imp}+1)d_i$.
Therefore,  $\frac{k_i}{m_{\rm imp}+1}
    =
    \left\lfloor (m_{\rm imp}+1)d_i\right\rfloor/(m_{\rm imp}+1)
    \ge
    d_i/2$. On the other hand,  $k_i/(m_{\rm imp}+1)=\lfloor (m_{\rm imp}+1)d_i\rfloor/(m_{\rm imp}+1)
    \le
    d_i$.

First ignore the possibility that an earlier agent passes the rank test. For every $t\ge0$, let $Z_t^{(i)}
    =
    \ind{X_i>t}
    +
    \sum_{\ell=1}^{m_{\rm imp}}
    \ind{\widetilde X_{i,\ell}>t} \sim  \mathrm{Bin}(m_{\rm imp}+1,\overline F_i(t))$.
By exchangeability, conditional on the multiset $\{X_i,\widetilde X_{i,1},\ldots,\widetilde X_{i,m_{\rm imp}}\}$, the online value $X_i$ is equally likely to be any one of the
$m_{\rm imp}+1$ values. Among the $Z_t^{(i)}$ values above $t$, exactly
$\min\{Z_t^{(i)},k_i\}$ pass the rank test. Therefore
\begin{equation*}
    \Pr\left(
    X_i>t\text{ and agent }i\text{ passes the rank test}
    \,\middle|\,
    \{X_i,\widetilde X_{i,1},\ldots,\widetilde X_{i,m_{\rm imp}}\}
    \right)
    =
    \frac{\min\{Z_t^{(i)},k_i\}}{m_{\rm imp}+1}.
\end{equation*}
Taking expectations gives
\begin{equation*}
    \Pr(X_i>t\text{ and agent }i\text{ passes the rank test})
    =
    \be\left[
    \frac{\min\{Z_t^{(i)},k_i\}}{m_{\rm imp}+1}
    \right].
\end{equation*}
By Fubini, the previous inequality, and \Cref{lem:binomial-tail},
\begin{align*}
    \be\left[X_i \ind{\text{agent }i\text{ passes the rank test}}\right]&=\be\left[
    \int_0^\infty
    \ind{X_i>t}
    \ind{\text{agent }i\text{ passes the rank test}}
    \,dt
\right]\\
    &=
    \int_0^\infty
    \Pr(X_i>t\text{ and agent }i\text{ passes the rank test})\,dt \\
    & = \int_0^\infty \be\left[
    \frac{\min\{Z_t^{(i)},k_i\}}{m_{\rm imp}+1}
    \right] \,dt \\ 
    &\ge
    {(1-e^{-1})}
    \int_0^\infty
    \min\left\{
    \overline F_i(t),
    \frac{k_i}{m_{\rm imp}+1}
    \right\}\,dt \\
    &=
    {(1-e^{-1})}
    \mu_i\left(\frac{k_i}{m_{\rm imp}+1}\right) \\
    &\ge
    {\frac{1-e^{-1}}{2}}\mu_i(d_i),
\end{align*}
where the last inequality follows from concavity of $\mu_i$, $\mu_i(0)=0$, and
$k_i/(m_{\rm imp}+1)\ge d_i/2$.

The rank-test events are independent across agents, since they depend on independent online values and independent implementation samples. Moreover, agent $j$ passes the rank test with probability $\frac{k_j}{m_{\rm imp}+1}
    \le
    d_j$.
Thus
\begin{equation*}
    \Pr(\text{some }j<i\text{ passes the rank test})
    \le
    \sum_{j<i}d_j
    <
    \frac13.
\end{equation*}
Therefore agent $i$ is preceded by no earlier agent passing the rank test with probability at least $2/3$. This event is independent of whether agent $i$ passes the rank test. Hence
\begin{equation*}
    \be[\widehat \alloc_i X_i]
    \ge
    {\frac{1-e^{-1}}{2}}  \cdot \frac23\mu_i(d_i) 
    =
    {\frac{1-e^{-1}}{3}}\mu_i(d_i). \qedhere
\end{equation*}
\end{proof}

\section{Deferred Proofs of \Cref{sec:ex-post-alpha}}

\subsection{Proof of \Cref{lem:moment-decomp}} \label{app:moment-decomp}

\begin{proof}
We first prove the following tail inequality for all $t>0$:
\begin{equation}\label{eq:tail_estimate}
\Pr(R\ge3t)
\le
\Pr(\max_{i \in [n]} Y_i\ge t)+\Pr(R\ge t)^2.
\end{equation}
Let $S_i=\sum_{k\le i}Y_k$, with $S_0=0$. 
On the event $\{R\ge3t,\,\max_{i \in [n]} Y_i<t\}$, let $\tau$ be the first index such that
$S_\tau\ge t$.  Given that for $Y_i\leq t$ for all $i$, at the first crossing of level $t$, the overshoot is
less than $t$:
\begin{equation*}
S_\tau=S_{\tau-1}+Y_\tau< t+t=2t.
\end{equation*}
Since $R\ge3t$, the remaining mass after $\tau$ is larger than $t$:
\begin{equation*}
\sum_{j>\tau}Y_j=R-S_\tau>3t-2t=t.
\end{equation*}
For each fixed $i$, the event $\{\tau=i\}$ is determined by
$Y_1,\ldots,Y_i$, while the event $\left\{\sum_{j>i}Y_j>t\right\}$ is determined by later independent coordinates. Hence
\begin{equation*}
\Pr(R\ge3t,\,\max_{i \in [n]} Y_i<t)
\!\le\!
\sum_i
\Pr\big(\tau=i,\sum_{j>i}Y_j>t\big)  \!=\!
\sum_i
\Pr(\tau=i)
\Pr\big(\!\sum_{j>i}Y_j>t\big)\! \le\!
\Pr(R\ge t)
\sum_i\Pr(\tau=i),
\end{equation*}
where the first inequality comes from the fact that under the event studied, $\tau$ must stop due to $R\geq 3t \geq t$. In addition, the event $\{\tau<\infty\}$ is exactly $\{R\ge t\}$. Thus $\sum_i\Pr(\tau=i)=\Pr(\tau<\infty)=\Pr(R\ge t)$, and therefore $\Pr(R\ge3t,\,\max_{i \in [n]} Y_i<t)
\le
\Pr(R\ge t)^2$. Finally, 
\begin{equation*}
\Pr(R \geq 3t) = \Pr(R \geq 3t, \max_{i\in[n]} Y_i \geq t) +\Pr(R \geq 3t, \max_{i\in[n]} Y_i <t) \leq \Pr(\max_{i \in [n]} Y_i \geq t) +\Pr(R\geq t)^2.
\end{equation*}

We now prove the upper bound on the prophet value. We have
\begin{align*}
\be[R^q]
& =\int_0^\infty \Pr(R \geq u^{1/q}) du  \tag{Tail formula for expectation}\\
&=\int_0^\infty \Pr(R\ge3t)q3^q t^{q-1} dt \tag{Change of variable $u=3^q t^q$}\\
&\le
q3^q\int_0^\infty t^{q-1}\Pr(\max_{i \in [n]} Y_i\ge t)\,dt
+
q3^q\int_0^\infty t^{q-1}\Pr(R\geq t)^2\,dt \tag{\Cref{eq:tail_estimate}} \\
&=
3^q\be[\max_{i \in [n]} Y_i^q]
+
q3^q\int_0^\infty t^{q-1}\Pr(\min\{R,\widetilde R \} \geq t)\,dt \tag{Tail probability of two i.i.d. r.v.}\\
&=3^q (\be[\max_{i \in [n]} Y_i^q] + \be[\min\{ R^q, \widetilde R^q\}]). \qedhere
\end{align*}
\end{proof}

\subsection{Proof of \Cref{lem:exponential-race}} \label{app:exponential-race}

\begin{proof}
For simplicity, we assume that the $Y_i$ have finite support, the general case follows by a standard discretization argument.

Fix $t>0$ and $i$ with $\Pr^{(q)}(Y_i>0)>0$. By independence of the $E_i$ conditionally on $Y$,
\begin{align*}
g(y)\coloneqq \Pr^{(q)}\left(\min_{j\neq i}E_j\geq t\mid Y_i=y\right)
&=\be^{(q)}\left[
\Pr^{(q)}\left(E_j\geq t\text{ for every }j\neq i\mid Y\right)
\,|\,Y_i=y\right]\\
&=\be^{(q)}\big[
\prod_{j\neq i}\Pr^{(q)}(E_j\geq t\mid Y)
\,|\,Y_i=y\big]\\
&=\be^{(q)}\big[
\prod_{j\neq i}e^{-tY_j}
\,|\,Y_i=y\big]\\
&=\be^{(q)}[e^{-t(R-Y_i)}\mid Y_i=y].
\end{align*}
We show that $g$ is nondecreasing. The change of measure and independence under the original law give
\begin{align*}
\Pr^{(q)}(R-Y_i \in dr \mid Y_i=y)=\frac{\Pr^{(q)}(Y_i=y,R-Y_i \in dr )}
{\Pr^{(q)}(Y_i=y)}
&=\frac{(y+r)^q\Pr(Y_i=y,R-Y_i \in dr )}
{\sum_u (y+u)^q\Pr(Y_i=y,R-Y_i=u)}\\
&=\frac{(y+r)^q\Pr(R-Y_i \in dr \mid Y_i=y)}
{\sum_u (y+u)^q\Pr(R-Y_i=u\mid Y_i=y)}\\
&=\frac{(y+r)^q\Pr(R-Y_i \in dr \mid Y_i=y)}
{\be[R^q\mid Y_i=y]}\\
&=\frac{(y+r)^q\Pr(R-Y_i \in dr )}
{\be[(y+R-Y_i)^q]},
\end{align*}
where the last equality comes from the independence of $R-Y_i=\sum_{j \neq i} Y_j$ with $Y_i$.

Fix $y'>y>0$ and define $w(r)=\left(\frac{y'+r}{y+r}\right)^q$. 
The previous formula gives
\begin{align*}
\be^{(q)}[w(R-Y_i)\mid Y_i=y]
=\int_0^\infty \!\!
w(r)\frac{(y+r)^q}{\be[(y+R-Y_i)^q]}
\Pr(R-Y_i\in dr)
&=\frac{\int_0^\infty(y'+r)^q\Pr(R-Y_i\in dr)}
{\be[(y+R-Y_i)^q]}\\
&=\frac{\be[(y'+R-Y_i)^q]}{\be[(y+R-Y_i)^q]}.
\end{align*}
Therefore, the conditional law at $y'$ is obtained from that at $y$ by multiplying by $w$ and normalizing, $\Pr^{(q)}(R-Y_i\in dr\mid Y_i=y')
=\frac{w(r)}{\be^{(q)}[w(R-Y_i)\mid Y_i=y]}
\Pr^{(q)}(R-Y_i\in dr\mid Y_i=y)$.
Taking the conditional expectation of $e^{-t(R-Y_i)}$ against these conditional laws, we obtain
\begin{align*}
g(y')=\int_0^\infty e^{-tr}\Pr^{(q)}(R-Y_i\in dr\mid Y_i=y')&=\frac{\int_0^\infty e^{-tr}w(r)
\Pr^{(q)}(R-Y_i\in dr\mid Y_i=y)}
{\be^{(q)}[w(R-Y_i)\mid Y_i=y]}\\
&=\frac{\be^{(q)}[e^{-t(R-Y_i)}w(R-Y_i)\mid Y_i=y]}
{\be^{(q)}[w(R-Y_i)\mid Y_i=y]}.
\end{align*}
Hence the difference between $g(y)$ and $g(y')$ is equal to 
\begin{align*}
g(y')-g(y)&=\frac{
\be^{(q)}[e^{-t(R-Y_i)}w(R-Y_i)\mid Y_i=y]
-g(y)\be^{(q)}[w(R-Y_i)\mid Y_i=y]}
{\be^{(q)}[w(R-Y_i)\mid Y_i=y]}\\
&=\frac{\operatorname{Cov}^{(q)}
\left(e^{-t(R-Y_i)},w(R-Y_i)\mid Y_i=y\right)}
{\be^{(q)}[w(R-Y_i)\mid Y_i=y]}
\geq0.
\end{align*}
The covariance is positive because both $r \mapsto e^{-tr}$ and $w$ are decreasing in $r$ for $y'>y$. Hence $g$ is nondecreasing. If $\Pr^{(q)}(Y_i=0)>0$, the same conclusion at zero follows by dominated convergence. Because $g$ is nondecreasing, this further implies the inequality
\begin{equation*}
\be^{(q)}[Y_i g(Y_i)\mid E_i\geq t]
-\be^{(q)}[Y_i\mid E_i\geq t]\be^{(q)}[g(Y_i)\mid E_i\geq t]
=\operatorname{Cov}^{(q)}(Y_i,g(Y_i)\mid E_i\geq t)
\geq0.
\end{equation*}\medskip 

Given $Y$, the density of $E_i$ at $t$ is $Y_i e^{-tY_i}$ and its survival probability is $e^{-tY_i}$. First, we have
\begin{align*}
\be^{(q)}[Y_i\mid E_i\geq t]=\frac{\be^{(q)}[Y_i\mathbf{1}_{\{E_i\geq t\}}]}
{\Pr^{(q)}(E_i\geq t)}=\frac{\be^{(q)}\left[
\be^{(q)}[Y_i\mathbf{1}_{\{E_i\geq t\}}\mid Y_i]
\right]}
{\be^{(q)}[\Pr^{(q)}(E_i\geq t\mid Y_i)]}&=\frac{\be^{(q)}[Y_i\Pr^{(q)}(E_i\geq t\mid Y_i)]}
{\be^{(q)}[\Pr^{(q)}(E_i\geq t\mid Y_i)]}\\
&=\frac{\be^{(q)}[Y_i e^{-tY_i}]}
{\be^{(q)}[e^{-tY_i}]}.
\end{align*}
We now compare the law of $Y_i$ when conditioned with respect to $E_i=t$ and $E_i\geq t$. Using Bayes' rule and the previous equality,
\begin{align*}
\Pr^{(q)}(Y_i\in dy\mid E_i=t)
=\frac{ye^{-ty}\Pr^{(q)}(Y_i\in dy)}
{\be^{(q)}[Y_i e^{-tY_i}]}&=\frac{ye^{-ty}\Pr^{(q)}(Y_i\in dy)}
{\be^{(q)}[e^{-tY_i}]\be^{(q)}[Y_i\mid E_i\geq t]}\\
&=\frac{y}{\be^{(q)}[Y_i\mid E_i\geq t]}
\Pr^{(q)}(Y_i\in dy\mid E_i\geq t).
\end{align*}
Then, by independence  
\begin{align*}
\Pr^{(q)}(\min_{j\neq i}E_j\geq t\mid E_i=t)
=\be^{(q)}[
\Pr^{(q)}(\min_{j\neq i}E_j\geq t\mid Y_i,E_i=t)
\,|\,E_i=t]
&=\be^{(q)}[
\Pr^{(q)}(\min_{j\neq i}E_j\geq t\mid Y_i)
\,|\,E_i=t]\\
&=\be^{(q)}[g(Y_i)\mid E_i=t].
\end{align*}
We can finally compare the probability that the minimum of the other exponential variables is greater than $t$, depending on whether we know that $E_i =t$ or $E_i \geq t$. 
\begin{align*}
\Pr^{(q)}(\min_{j\neq i}E_j\geq t\mid E_i=t)
-\Pr^{(q)}(\min_{j\neq i}E_j\geq t\mid E_i\geq t)&=\be^{(q)}[g(Y_i)\mid E_i=t]
-\be^{(q)}[g(Y_i)\mid E_i\geq t]\\
&=\frac{\be^{(q)}[Y_i g(Y_i)\mid E_i\geq t]}
{\be^{(q)}[Y_i\mid E_i\geq t]}
-\be^{(q)}[g(Y_i)\mid E_i\geq t]\\
&=\frac{\operatorname{Cov}^{(q)}(Y_i,g(Y_i)\mid E_i\geq t)}
{\be^{(q)}[Y_i\mid E_i\geq t]}
\geq0.
\end{align*}
Finally, using this last inequality
\begin{equation*}
\Pr^{(q)}(J=i\mid E_i=t)
=\Pr^{(q)}\big(\min_{j\neq i}E_j\geq t\mid E_i=t\big)\geq\Pr^{(q)}\big(\min_{j\neq i}E_j\geq t\mid E_i\geq t\big)=\frac{F(t)}{\Pr^{(q)}(E_i\geq t)}
\geq F(t).
\end{equation*}
When $E_i=\infty$, the claim is immediate because $F(\infty)=0$.
\end{proof}

\subsection{Proof of \Cref{lem:alpha_prophet}} \label{app:alpha_prophet}

\begin{proof}
Let $x\in\br_+^n$. If some $x_i=0$, every feasible allocation gives value $0$, so assume $x_i>0$ for all $i\in[n]$. Since $1-\alpha<0$, maximizing the displayed objective is equivalent to minimizing $\sum_{i=1}^n(\alloc_i x_i)^{1-\alpha}$. For any feasible allocation $\alloc$, we have
\begin{align*}
\sum_{i=1}^n(\alloc_i x_i)^{1-\alpha}
&=
\sum_{i=1}^n \alloc_i^{1-\alpha}x_i^{1-\alpha} \\[1mm]
&=
\left(\sum_{j=1}^n x_j^{\frac{1-\alpha}{\alpha}}\right)^\alpha
\sum_{i=1}^n
\left(
\frac{x_i^{\frac{1-\alpha}{\alpha}}}
{\sum_{j=1}^n x_j^{\frac{1-\alpha}{\alpha}}}
\right)^\alpha
\alloc_i^{1-\alpha} \\
&=
\left(\sum_{j=1}^n x_j^{\frac{1-\alpha}{\alpha}}\right)^\alpha
\sum_{i=1}^n
\frac{x_i^{\frac{1-\alpha}{\alpha}}}
{\sum_{j=1}^n x_j^{\frac{1-\alpha}{\alpha}}}
\left(
\frac{\alloc_i}
{x_i^{\frac{1-\alpha}{\alpha}}/
\sum_{j=1}^n x_j^{\frac{1-\alpha}{\alpha}}}
\right)^{1-\alpha}.
\end{align*}
The weights $x_i^{\frac{1-\alpha}{\alpha}}
    /\sum_{j=1}^n x_j^{\frac{1-\alpha}{\alpha}}$ 
sum to $1$, and the map $t\mapsto t^{1-\alpha}$ is convex on $\br_+$ because $1-\alpha<0$. Hence Jensen's inequality gives
\begin{align*}
\sum_{i=1}^n
\frac{x_i^{\frac{1-\alpha}{\alpha}}}
{\sum_{j=1}^n x_j^{\frac{1-\alpha}{\alpha}}}
\left(
\frac{\alloc_i}
{x_i^{\frac{1-\alpha}{\alpha}}/
\sum_{j=1}^n x_j^{\frac{1-\alpha}{\alpha}}}
\right)^{1-\alpha}
&\geq
\left(
\sum_{i=1}^n
\frac{x_i^{\frac{1-\alpha}{\alpha}}}
{\sum_{j=1}^n x_j^{\frac{1-\alpha}{\alpha}}}
\frac{\alloc_i}
{x_i^{\frac{1-\alpha}{\alpha}}/
\sum_{j=1}^n x_j^{\frac{1-\alpha}{\alpha}}}
\right)^{1-\alpha} \\
&=
\left(\sum_{i=1}^n \alloc_i\right)^{1-\alpha} \geq 1,
\end{align*}
where the last inequality uses $\sum_i\alloc_i\leq1$ and $1-\alpha<0$. Therefore, $\sum_{i=1}^n(\alloc_i x_i)^{1-\alpha}
    \geq
    \left(\sum_{j=1}^n x_j^{\frac{1-\alpha}{\alpha}}\right)^\alpha$.
This proves that every feasible allocation has objective value at most
\begin{equation*}
\left(
    \frac1n
    \left(\sum_{j=1}^n x_j^{\frac{1-\alpha}{\alpha}}\right)^\alpha
    \right)^{\frac{1}{1-\alpha}}.    
\end{equation*}
This upper bound is achieved by the feasible allocation $\alloc_i^\star
    =
    x_i^{\frac{1-\alpha}{\alpha}}/
    \sum_{j=1}^n x_j^{\frac{1-\alpha}{\alpha}}$.
Indeed,
\begin{align*}
\sum_{i=1}^n(\alloc_i^\star x_i)^{1-\alpha}
&=
\sum_{i=1}^n
\left(
\frac{x_i^{\frac{1-\alpha}{\alpha}}x_i}
{\sum_{j=1}^n x_j^{\frac{1-\alpha}{\alpha}}}
\right)^{1-\alpha} \\[1mm]
&=
\sum_{i=1}^n
\left(
\frac{x_i^{1/\alpha}}
{\sum_{j=1}^n x_j^{\frac{1-\alpha}{\alpha}}}
\right)^{1-\alpha} \\[1mm]
&=
\left(\sum_{j=1}^n x_j^{\frac{1-\alpha}{\alpha}}\right)^{\alpha-1}
\sum_{i=1}^n x_i^{\frac{1-\alpha}{\alpha}} \\[1mm]
&=
\left(\sum_{j=1}^n x_j^{\frac{1-\alpha}{\alpha}}\right)^\alpha. \qedhere
\end{align*} 
\end{proof}

\newpage

\bibliographystyle{ACM-Reference-Format}
\bibliography{bibliography}

\end{document}